\def\moverlay{\mathpalette\mov@rlay}
\def\mov@rlay#1#2{\leavevmode\vtop{%
   \baselineskip\z@skip \lineskiplimit-\maxdimen
   \ialign{\hfil$\m@th#1##$\hfil\cr#2\crcr}}}
\newcommand{\charfusion}[3][\mathord]{
    #1{\ifx#1\mathop\vphantom{#2}\fi
        \mathpalette\mov@rlay{#2\cr#3}
      }
    \ifx#1\mathop\expandafter\displaylimits\fi}
\newsavebox{\@brx}
\newcommand{\llangle}[1][]{\savebox{\@brx}{\(\m@th{#1\langle}\)}%
  \mathopen{\copy\@brx\kern-0.5\wd\@brx\usebox{\@brx}}}
\newcommand{\rrangle}[1][]{\savebox{\@brx}{\(\m@th{#1\rangle}\)}%
  \mathclose{\copy\@brx\kern-0.5\wd\@brx\usebox{\@brx}}}
\newcommand{\game}{G}
\newcommand{\states}{S}
\newcommand{\act}{A}
\newcommand{\trans}{\delta}
\newcommand{\ap}{\mathsf{AP}}
\newcommand{\straa}{\sigma}
\newcommand{\straas}{\Sigma}
\newcommand{\strab}{\theta}
\newcommand{\strabs}{\Theta}
\newcommand{\almost}{\mathsf{\langle Almost \rangle}}
\newcommand{\positive}{\mathsf{\langle Positive \rangle}}
\newcommand{\boolc}{\mathsf{BoolC}}
\newcommand{\oneatl}{\operatorname{1-ATL}}
\newcommand{\onetwoatl}{\operatorname{(1,2)-ATL}}
\newcommand{\ctl}{\operatorname{CTL}}
\newcommand{\catl}{\operatorname{C-ATL}}
\newcommand{\atl}{\operatorname{ATL}}
\newcommand{\qatl}{\operatorname{QCTL}}
\newcommand{\pctl}{\operatorname{pCTL}}
\newcommand{\turn}{\mathsf{turn}}
\newcommand{\wt}{\widetilde}
\newcommand{\M}{\mathcal{M}}
\newcommand{\abs}{\mathsf{abs}}
\newcommand{\con}{\mathsf{con}}
\newcommand{\Cex}{\mathsf{Cex}}
\newcommand{\lab}{\mathcal{L}}
\newcommand{\pat}{\omega}
\newcommand{\Paths}{\Omega}
\newcommand{\PQ}{\mathrm{PQ}}
\newcommand{\calf}{\mathcal{F}}
\newcommand{\distr}{\mathcal{D}}
\newcommand{\av}{\mathsf{Av}}
\newcommand{\set}[1]{\{#1\}}
\newcommand{\supp}{\mathrm{Supp}}
\newcommand{\Next}{\varbigcirc}
\newcommand{\until}{\, \mathcal{U}}
\newcommand{\wrel}{\mathcal{W}}
\newcommand{\prb}{\mathrm{Pr}}
\newcommand{\true}{\mathsf{true}}
\newcommand{\false}{\mathsf{false}}
\newcommand{\plays}{\mathsf{Plays}}
\newcommand{\apre}{{\textsf{Apre}}}
\newcommand{\wb}{\overline}
\newcommand{\wh}{\widehat}
\newcommand{\nat}{\mathbb{N}}
\newcommand{\alt}{\mathcal{A}}
\newcommand{\qual}{\mathcal{C}}
\newcommand{\txtalt}{\mathsf{Alt}}
\newcommand{\txtsim}{\mathsf{Sim}}
\newcommand{\simul}{\mathcal{S}}
\newcommand{\dual}{\mathcal{M}}
\newcommand{\qualsim}[2]{#1 \preccurlyeq_{\qual} #2}		% alternating simulation
\newcommand{\simgame}{\sim_\simul}
\newcommand{\altgame}{\sim_\alt}
\newcommand{\qualgame}{\sim_\qual}
\newcommand{\dualgame}{\sim_\dual}
\newcommand{\altabs}[2]{Abs^#2_{\mathcal{A}}(#1)}				% abstraction
\newcommand{\simabs}[2]{Abs^#2_{\mathcal{S}}(#1)}				% abstraction
\newcommand{\tsucc}{\textsf{Succ}}					% successor
\newcommand{\conc}{\textsf{Conc}}						% concretization
\newcommand{\Part}{\Pi}							% partitioning
\newcommand{\ppart}{\pi}							% partitioning
\newcommand{\altcomp}{\textsf{A}}							% partitioning
\newcommand{\simcomp}{\textsf{S}}							% partitioning
\newcommand{\ov}{\overline}
\newcommand{\AL}{\mathsf{Alt}}
\newcommand{\SI}{\mathsf{Sim}}
\newcommand{\maxqual}{\qual_{\max}}
\title{CEGAR for Qualitative Analysis of \\ Probabilistic 
Systems\thanks{The research was partly supported by Austrian Science Fund (FWF) 
Grant No P 23499- N23, FWF NFN Grant No S11407-N23 and  S11403-N23 (RiSE), 
ERC Start grant (279307: Graph Games), Microsoft faculty fellows award, 
the ERC Advanced Grant QUAREM (Quantitative Reactive Modeling).}
}
\author{Krishnendu Chatterjee \and Martin Chmel\'ik \and Przemys\l aw Daca}
\institute{IST Austria}
\begin{document}

\maketitle

%!TEX root = main.tex
\begin{abstract}
We consider Markov decision processes (MDPs) which are a standard model
for probabilistic systems. We focus on qualitative properties for MDPs 
that can express that desired behaviors of the system arise almost-surely 
(with probability~1) or with positive probability.
We introduce a new simulation relation to capture the refinement relation
of MDPs with respect to qualitative properties, and present discrete graph 
theoretic algorithms with quadratic complexity to compute the simulation 
relation.
We present an automated technique for assume-guarantee style reasoning 
for compositional analysis of MDPs with qualitative properties by 
giving a counterexample guided abstraction-refinement approach to compute
our new simulation relation. 
We have implemented our algorithms and show that the compositional analysis
leads to significant improvements. 
\end{abstract}

\section{Introduction}
\noindent{\bf Markov decision processes.}
\emph{Markov decision processes (MDPs)} are standard models for
analysis of probabilistic systems that exhibit both probabilistic 
and non-deterministic behavior~\cite{Howard,FV97}.
%%MDPs have been used to model and solve control problems for stochastic systems~\cite{FV97}: 
%nondeterminism represents the freedom of the controller to choose a 
%control action, while the probabilistic component of the behavior describes the 
%system response to control actions. 
In verification of probabilistic systems, MDPs have been adopted as models 
for concurrent probabilistic systems~\cite{CY95},  probabilistic systems 
operating in open environments~\cite{SegalaT}, under-specified probabilistic 
systems~\cite{BdA95}, and applied in diverse domains~\cite{BaierBook,prism}
such as analysis of randomized communication and security protocols, stochastic 
distributed systems, biological systems, etc.

\smallskip\noindent{\bf Compositional analysis and CEGAR.}
One of the key challenges in analysis of probabilistic systems (as in 
the case of non-probabilistic systems) is the \emph{state explosion}
problem \cite{ClarkeBook}, as the size of concurrent systems grows exponentially in the
number of components. 
One key technique to combat the state explosion problem is the 
\emph{assume-guarantee} style composition reasoning~\cite{Pnu85a}, where 
the analysis problem is decomposed into components and the results
for components are used to reason about the whole system, instead 
of verifying the whole system directly.
For a system with two components, the compositional reasoning can be captured as the
following simple rule: consider a system with two components $G_1$ and $G_2$, 
and a specification $G'$ to be satisfied by the system; 
if $A$ is an abstraction of $G_2$ (i.e., $G_2$ refines $A$) and 
$G_1$ in composition with $A$ satisfies $G'$, then the composite systems of
$G_1$ and $G_2$ also satisfies $G'$. 
Intuitively, $A$ is an assumption on $G_1$'s environment that can be ensured by $G_2$.
This simple, yet elegant asymmetric rule is very effective in practice, specially
with a \emph{counterexample guided abstraction-refinement} (CEGAR) loop~\cite{Clarke00}.
There are many symmetric~\cite{PGBCB08} as well as circular 
compositional reasoning~\cite{AdHJ01,PGBCB08,KNPQ10} rules; however the simple asymmetric rule is most 
effective in practice and extensively studied, mostly for  
non-probabilistic systems~\cite{PGBCB08,FengKP11,CCST05,HJMQ03}.

\smallskip\noindent{\bf Compositional analysis for probabilistic systems.}
There are many works that have studied the abstraction-refinement and 
compositional analysis for probabilistic systems~\cite{CV10,HWZ08,KNP06,EKVY08}. 
Our work is most closely related to and inspired by~\cite{Komuravelli12} where a CEGAR approach was presented 
for analysis of MDPs (or labeled probabilistic transition systems); and 
the refinement relation was captured by \emph{strong simulation} that captures
the logical relation induced by safe-pCTL~\cite{HJ94,BerkP95,BdA95}.

\smallskip\noindent{\bf Qualitative analysis and its importance.}
In this work we consider the fragment of pCTL$^*$~\cite{HJ94,BerkP95,BdA95} that is relevant for 
\emph{qualitative analysis}, and refer to this fragment as $\qatl^*$. 
The qualitative analysis for probabilistic systems refers to \emph{almost-sure}
(resp. \emph{positive}) properties that are satisfied with probability~1 
(resp. positive probability).
The qualitative analysis for probabilistic systems is an important problem in 
verification that is of interest independent of the quantitative analysis problem.
There are many applications where we need to know whether the correct 
behavior arises with probability~1.
For instance, when analyzing a randomized embedded scheduler, we are
interested in whether every thread progresses with probability~1~\cite{CdAFMR13}.
Even in settings where it suffices to satisfy certain specifications with 
probability $\lambda<1$, the correct choice of $\lambda$ is a challenging 
problem, due to the simplifications introduced during modeling.
For example, in the analysis of randomized distributed algorithms it is 
quite common to require correctness with probability~1 
(see, e.g.,~\cite{PSL00,Sto02b}). 
Furthermore, in contrast to quantitative analysis, qualitative analysis is robust to 
numerical perturbations and modeling errors in the transition probabilities.
The qualitative analysis problem has been extensively studied for many probabilistic 
models, such as for MDPs~\cite{CH11,CH12,CHJS11}, perfect-information stochastic 
games~\cite{CJH03,ChaThesis}, concurrent stochastic games~\cite{dAHK98,CdAH11}, 
partial-observation MDPs~\cite{BBG08,CT12,CCT13,CDH10}, and partial-observation stochastic 
games~\cite{CDHR06,BGG09,CD12,CDH13,NV13,CDNV14}.

\smallskip\noindent{\bf Our contributions.} 
In this work we focus on the compositional reasoning of probabilistic systems 
with respect to qualitative properties, and our main contribution is a 
CEGAR approach for qualitative analysis of probabilistic systems. 
The details of our contributions are as follows:

\begin{enumerate}
\item To establish the logical relation induced by $\qatl^*$ we consider  
the logic $\atl^*$ for two-player games and the two-player game interpretation 
of an MDP where the probabilistic choices are 
resolved by an adversary.
In case of non-probabilistic systems and games there are two classical notions
for refinement, namely, \emph{simulation}~\cite{Milner71} and \emph{alternating-simulation}~\cite{AHKV98}.
We first show that the logical relation induced by $\qatl^*$ is \emph{finer} 
than the intersection of simulation and alternating simulation.
We then introduce a new notion of simulation, namely, \emph{combined simulation}, and show
that it captures the logical relation induced by $\qatl^*$.

\item We show that our new notion of simulation, which captures the logic relation of
$\qatl^*$, can be computed using discrete graph theoretic algorithms in quadratic 
time. 
In contrast, the current best known algorithm for strong simulation is polynomial 
of degree seven and requires numerical algorithms.
The other advantage of our approach is that it can be applied uniformly both to 
qualitative analysis of probabilistic systems as well as analysis of two-player
games (that are standard models for open non-probabilistic systems).

\item We present a CEGAR approach for the computation of combined simulation, and 
the counterexample analysis and abstraction refinement is achieved using the ideas
of~\cite{Henzinger03} proposed for abstraction-refinement for games.

\item We have implemented our approach both for qualitative analysis of MDPs as well
as games, and experimented on a number of well-known examples of MDPs and games.
Our experimental results show that our method achieves significantly better performance
as compared to the non-compositional verification as well as compositional analysis 
of MDPs with strong simulation. 

\end{enumerate}

\smallskip\noindent{\bf Related works.}
Compositional and assume-guarantee style reasoning has been extensively studied mostly in the 
context of non-probabilistic systems~\cite{PGBCB08,FengKP11,CCST05,HJMQ03}.
Game-based abstraction refinement has been studied in the context of probabilistic
systems~\cite{KNP06}.
The CEGAR approach has been adapted to probabilistic systems for reachability~\cite{HWZ08}
and safe-pCTL~\cite{CV10} under monolithic (non-compositional) abstraction refinement.
The work of~\cite{Komuravelli12} considers CEGAR for compositional analysis of probabilistic 
system with strong simulation.
An abstraction-refinement algorithm for a class of quantitative properties was 
studied in~\cite{DArgenioJJL01,DArgenioJJL02} and also 
implemented~\cite{raptureTool}.
Our logical characterization of the simulation relation is similar in spirit to \cite{Cleaveland91}, which shows how a fragment of the modal $\mu$-calculus can be used to efficiently decide behavioral preorders between components.
Our work focuses on CEGAR for compositional analysis of probabilistic systems 
for qualitative analysis: we characterize the required simulation relation; 
present a CEGAR approach for the computation of the simulation relation; 
and show the effectiveness of our approach both for qualitative analysis of MDPs 
and games.

\smallskip\noindent{\bf Organization of the paper.}
In Section~\ref{sec:defn} we present the basic definitions of games and logic for games.
In Section~\ref{sec:alt_sim} we introduce a new simulation relation for games, show that
it is finer than both simulation and alternating simulation, and present algorithms
to compute the relation.
In Section~\ref{sec:mdp} we present the definitions of MDPs and qualitative logics, and in Section~\ref{sec:mdplogic}
show that the logical relation induced by the qualitative logics on MDPs can be obtained
through our simulation relation introduced in Section~\ref{sec:alt_sim}.
In Section~\ref{sec:cegar} we present a CEGAR approach for our simulation relation 
and present experimental results in Section~\ref{sec:impl}.

%%% Local Variables: 
%%% mode: latex
%%% TeX-master: "main"
%%% End: 

%!TEX root = main.tex

\section{Game Graphs and Alternating-time Temporal Logics}
\label{sec:defn}
\noindent{\bf Notations.}
% Given a finite set $X$, we denote by $\powset(X)$ the set of subsets of $X$,
% i.e., $\powset(X)$ is the power set of $X$.
% A probability distribution $f$ on $X$ is a function $f:X \to [0,1]$ such 
% that $\sum_{x\in X} f(x)=1$, and we denote by  $\distr(X)$ the set of 
% all probability distributions on $X$. For $f \in \distr(X)$ we denote by $\supp(f)=\set{x\in X \mid f(x)>0}$
% the support of $f$.
Let $\ap$ denote a non-empty finite set of atomic propositions. 
Given a finite set $\states$ we will denote
by $\states^{*}$ (respectively $\states^{\omega}$) the set of finite (resp. infinite) 
sequences of elements from $\states$, and let 
$\states^+=\states^* \setminus \set{\epsilon}$, where $\epsilon$ is the empty string.

\subsection{Two-player Games}
\smallskip\noindent\textbf{Two-player games.}
A \emph{two-player} game is a tuple $\game = (\states,\act,\av,\trans,\lab,s_0)$, where
\begin{itemize}
\item $\states$ is a finite set of states.
\item $\act$ is a finite set of actions.
\item $\av: \states \rightarrow 2^{\act} \setminus \emptyset$ is an \emph{action-available} function that assigns to every state $s \in \states$ 
the set $\av(s)$ of actions available in $s$.
\item $\trans: \states \times \act\rightarrow 2^\states \setminus \emptyset$ is a non-deterministic \emph{transition} function that given a state $s \in \states$ and an action
$a \in \av(s)$ gives the set $\trans(s,a)$ of successors of $s$ given action $a$.
\item $\lab: \states \rightarrow 2^{\ap}$ is a \emph{labeling} function that labels the states $s \in \states$ with the set $\lab(s)$ of atomic propositions true at $s$. 
\item $s_0 \in \states$ is an initial state.
\end{itemize}

\smallskip\noindent\textbf{Alternating games.}
A two-player game $\game$ is \emph{alternating} if in every state either 
Player~1 or Player~2 can make choices.
Formally, for all $s\in S$ we have 
either (i)~$|\av(s)|=1$ (then we refer to $s$ as a Player-2 state);
or (ii)~for all $a \in \av(s)$ we have $|\trans(s,a)|=1$ (then we refer
to $s$ as a Player-1 state).
For technical convenience we consider that in the case of alternating games, 
there is an atomic proposition $\turn \in \ap$ such that for every Player-1 state $s$ 
we have $\turn \in \lab(s)$, and for every Player~2 state $s'$ we have $\turn \not \in \lab(s')$.

% \begin{remark}
% We assume that the states in
% $\states_1$ and $\states_2$ are distinguishable by atomic propositions, i.e., there is an atomic proposition $\lba$ in $\ap$, such that for all states we have
% that $ s \in \states_1 \text{ iff }\lba \in [s]$.
% \end{remark}

\smallskip\noindent\textbf{Plays.}
A two-player game is played for infinitely many rounds as follows:
the game starts at the initial state, and in every round Player~1 chooses
an available action from the current state and then Player~2 chooses a
successor state, and the game proceeds to the successor state for the next round.
Formally, a \emph{play} in a two-player game is an infinite sequence 
$\pat = s_0 a_0 s_1 a_1 s_2 a_2 \cdots$ of states and actions such that for all 
$i\geq 0$ we have that $a_i \in \av(s_i)$  and $s_{i+1} \in \trans(s_i,a_i)$. 
We denote by $\Paths$ the set of all plays.

\smallskip\noindent{\bf Strategies.}
Strategies are recipes that describe how to extend finite prefixes of plays.
Formally, a \emph{strategy} for Player~1 is a function $\straa: (\states \times \act)^* \times S \rightarrow \act$, 
that given a finite history $w \cdot s \in (\states \times \act)^* \times S$ of the game  
gives an action from $\av(s)$ to be played next. 
We write $\straas$ for the set of all Player-$1$ strategies.
A strategy for Player~2 is a function $\strab: (\states \times \act)^+ \rightarrow \states$, 
that given a finite history $w \cdot s \cdot a$  of a play selects a successor state from the set 
$\trans(s,a)$. 
We write $\strabs$ for the set of all Player-$2$ strategies. 
\emph{Memoryless} strategies are independent of the history, but depend only on the
current state for Player~1 (resp.\ the current state and action for Player~2) and hence
can be represented as functions $\states \rightarrow \act$ for Player~1 
(resp.\ as functions  $\states \times \act \rightarrow \states$ for Player~2).

\smallskip\noindent{\bf Outcomes.}
Given a strategy $\straa$ for Player~1 and $\strab$ for Player~2 the \emph{outcome} is a unique play,
denoted as $\plays(s,\straa,\strab)= s_0 a_0 s_1 a_1 \cdots$, which is defined as follows: 
(i)~$s_0=s$; and (ii)~for all $i \geq 0$ we have $a_i =\straa(s_0 a_0 \ldots s_i)$ and $s_{i+1} = \strab(s_0 a_0 \ldots s_i a_i)$.
Given a state $s \in \states$ we denote by $\plays(s,\straa)$ (resp. $\plays(s,\strab))$ 
the set of possible plays given $\straa$ (resp. $\strab$), i.e., $\bigcup_{\strab' \in \strabs} \plays(s,\straa,\strab')$
(resp. $\bigcup_{\straa' \in \straas} \plays(s,\straa',\strab)$).

\smallskip\noindent\textbf{Parallel composition of two-player games.}
Given games $\game = (\states,\act,\av,\trans,\lab, s_0)$ and 
$\game' = (\states',\act, \av', \trans',\lab',s'_0)$  the \emph{parallel composition} of the games 
$\game\parallel\game' = (\wb{\states},\act,\wb{\av},\wb{\trans},\wb{\lab},\wb{s}_0)$ is defined as follows:

\begin{itemize}
\item The states of the composition are $\wb{\states} = \states \times \states'$.
\item The set of actions does not change with the composition.
\item For all $(s,s')$ we have $\wb{\av}((s,s'))=\av(s) \cap \av'(s')$.
\item The transition function for a state $(s,s') \in \wb{\states}$ and an action $a \in \wb{\av}((s,s'))$ is defined as $\wb{\trans}((s,s'),a) =
\{ (t,t') \mid t \in \trans(s,a) \wedge t' \in \trans'(s',a)\}$.
 \item The labeling function $\wb{\lab}((s,s'))$ is defined as $\lab(s)\cup \lab'(s')$.
 \item The initial state is $\wb{s}_0=(s_0,s'_0).$
\end{itemize}

\begin{remark}
For simplicity we assume that the set of actions in both components is identical, and for every pair of states the
 intersection of their available actions is non-empty.
Parallel composition can be extended to cases where the sets of actions are 
different~\cite{RajeevTomBook}.
\end{remark}

\subsection{Alternating-time Temporal Logic}
We consider the Alternating-time Temporal Logic ($\atl^*$)~\cite{AHK02} as a logic to specify 
properties for two-player games.

\smallskip\noindent\textbf{Syntax.} 
The syntax of the logic is given in positive normal form by defining the set of \emph{path formulas}~$(\varphi)$ 
and \emph{state formulas}~$(\psi)$ according to the following grammar:
\begin{eqnarray*}
\text{state formulas:} & \qquad \psi & ::=   q \mid \neg q \mid \psi \vee \psi \mid \psi \wedge \psi \mid \PQ(\varphi)  \\
\text{path formulas:} & \qquad \varphi & ::=   \psi \mid \varphi \vee \varphi \mid \varphi \wedge \varphi \mid \Next \varphi \mid \varphi \until \varphi \mid \varphi \wrel \varphi;
\end{eqnarray*}
where $q \in \ap$ is an atomic proposition and $\PQ$ is a path quantifier. 
The operators $\Next$ (next), $\until$ (until), and $\wrel$ (weak until) are the temporal operators. We will use $\true$ as a shorthand for $q \vee \neg q$ and $\false$ for $q \wedge \neg q$ for some $q \in \ap$.
The path quantifiers $\PQ$ are as follows:
% \begin{eqnarray*}
% \qatl^* \text{ path quantifiers:} & &
% $ \almost$
 % \almost \\
 $$\atl^* \text{ path quantifiers: } \llangle 1 \rrangle, \llangle 2 \rrangle,\llangle 1,2 \rrangle \text{, and } \llangle \emptyset \rrangle. $$
% \end{eqnarray*}
% The logic $\catl^*$ is a fragment of $\atl^*$ using only $\llangle 1 \rrangle$ and $\llangle 1,p \rrangle$ as path quantifiers.
% The fragments $\atl$ of $\atl^*$, and $\catl$ of $\catl^*$ consist of formulas where every temporal operator is immediately preceded by a path quantifier. 

\smallskip\noindent\textbf{Semantics.}
Given a play $\pat = s_0 a_0 s_1 a_1 \cdots$ we denote by $\pat[i]$ the suffix starting at the $i$-th state element of the play 
$\pat$, i.e., $\pat[i] = s_i a_i s_{i+1} a_{i+1} \cdots$. The semantics of path formulas is defined inductively as follows:

\begin{center}
\begin{tabular}{ll}
$\pat \models \psi $ & $\qquad \text{iff } \pat[0] \models \psi $\\
$\pat \models \varphi_1 \vee \varphi_2 $& $\qquad \text{iff } \pat \models \varphi_1 \text{ or } \pat \models \varphi_2$ \\
$\pat \models \varphi_1 \wedge \varphi_2 $& $\qquad \text{iff } \pat \models \varphi_1 \text{ and } \pat \models \varphi_2$ \\
%\pat$w \models \neg  \varphi $& $\qquad  \text{iff } w \not \models \varphi$ \\
$\pat \models  \Next \varphi $& $\qquad  \text{iff } \pat[1]  \models \varphi$ \\
$\pat \models  \varphi_1 \until \varphi_2 $& $\qquad  \text{iff } \exists j \in \nat: \pat[j]  \models \varphi_2 \text{ and } \forall 0 \leq i < j: \pat[i] \models \varphi_1$\\
$\pat \models \varphi_1 \wrel \varphi_2	$& $\qquad \text{iff } \varphi_1 \until \varphi_2 \text{ or }\forall j \in \nat: \pat[j] \models \varphi_{1}$.
\end{tabular}
\end{center}
Given a path formula $\varphi$, we denote by $\llbracket \varphi \rrbracket_\game$ the set of plays $\pat$ such that $\pat \models \varphi$. We omit the $\game$ lower script when the game is clear from context. 
The semantics of state formulas for $\atl^*$ is defined 
%for a state $s$, an atomic proposition $q \in \ap$, a path formula $\varphi$, and state formulas $\psi_1$ and $\psi_2$ 
as follows:
\begin{center}
\begin{tabular}{ll}
%%$s \models \true$ & \\
$s \models q $ & $\qquad \text{iff } q \in \lab(s)$\\
$s \models \neg q $ & $\qquad \text{iff } q \not \in \lab(s)$\\
$s \models \psi_1 \vee \psi_2 $& $\qquad \text{iff } s \models \psi_1 \text{ or } s \models \psi_2$ \\
$s \models \psi_1 \wedge \psi_2 $& $\qquad \text{iff } s \models \psi_1 \text{ and } s \models \psi_2$ \\
% $s \models \almost(\varphi) $& $\qquad  \text{iff } \exists \straa \in \straas: \prb_s^{\straa}(\llbracket \varphi \rrbracket) = 1$ \\
$s \models \llangle 1 \rrangle (\varphi) $ & $\qquad \text{iff } \exists \straa \in \straas, \forall \strab \in \strabs: \plays(s, \straa,\strab) \in \llbracket \varphi \rrbracket$\\
$s \models \llangle 2 \rrangle (\varphi) $ & $\qquad \text{iff } \exists \strab \in \strabs, \forall \straa \in \straas: \plays(s, \straa,\strab) \in \llbracket \varphi \rrbracket$\\
$s \models \llangle 1,2 \rrangle (\varphi) $ & $\qquad \text{iff } \exists \straa \in \straas, \exists \strab \in \strabs: \plays(s, \straa,\strab) \in \llbracket \varphi \rrbracket$\\
$s \models \llangle \emptyset \rrangle (\varphi) $ & $\qquad \text{iff } \forall \straa \in \straas, \forall \strab \in \strabs: \plays(s, \straa,\strab) \in \llbracket \varphi \rrbracket$;
\end{tabular}
\end{center}
where $s \in S$ and $q \in \ap$.
Given an $\atl^*$ state formula $\psi$ and a two-player game $\game$, 
we denote by $\llbracket \psi \rrbracket_\game = \{ s \in \states \mid s \models \psi \}$ 
the set of states that satisfy the formula $\psi$. We omit the $\game$ lower script when the game is clear from context.

\smallskip\noindent\textbf{Logic fragments.} We define several fragments of the logic $\atl^*$:

\begin{itemize}
\item \emph{Restricted temporal operator use.}
An important fragment of $\atl^*$ is $\atl$ where every temporal operator is immediately preceded by a path quantifier.

\item \emph{Restricting path quantifiers.}
We also consider fragments of $\atl^*$ (resp. $\atl$) where the path quantifiers are restricted.
We consider (i)~$1$-fragment (denoted $\oneatl^*$) where only $\llangle 1 \rrangle$ path quantifier is used;
(ii)~the $(1,2)$-fragment (denoted $\onetwoatl^*$) where only $\llangle 1,2 \rrangle$ path quantifier is used;
and (iii)~the combined fragment (denoted $\catl^*$) where both $\llangle 1 \rrangle$ and $\llangle 1,2 \rrangle$ 
path quantifiers are used.
We use a similar notation for the respective fragments of $\atl$ formulas.
\end{itemize}

\smallskip\noindent{\bf Logical characterization of states.}
Given two games  $\game$ and $\game'$, %= (\states,\act,\trans,\lab,s_0)$ and $\game' = (\states',\act',\trans',\lab',s'_0)$ 
and a logic fragment $\calf$ of $\atl^*$, we consider the following relations on the state space induced by the 
logic fragment $\calf$:
$$ \preccurlyeq_{\calf}(\game,\game') = \{(s,s') \in \states \times \states' \mid \forall \psi \in \calf: \text{ if } s \models \psi \text{ then } s' \models \psi\}; $$
and when the games are clear from context we simply write $\preccurlyeq_{\calf}$ for $\preccurlyeq_{\calf}(\game,\game')$.
We will use the following notations for the relation induced by the logic fragments we consider:
(i)~$\preccurlyeq_1^*$ (resp. $\preccurlyeq_1$) for the relation induced by the $\oneatl^*$ (resp. $\oneatl)$ fragment;
(ii)~$\preccurlyeq_{1,2}^*$ (resp. $\preccurlyeq_{1,2}$) for the relation induced by the $\onetwoatl^*$ (resp. $\onetwoatl)$ fragment; and
(iii)~$\preccurlyeq_{C}^*$ (resp. $\preccurlyeq_{C}$) for the relation induced by the $\catl^*$ (resp. $\catl)$ fragment.
Given $\game$ and $\game'$ we can also consider $\game''$ which is the disjoint union of the two games,
and consider the relations on $\game''$; and hence we will often consider a single game as input for the 
relations.

\section{Combined Simulation Relation Computation}
\label{sec:alt_sim}
In this section we first recall the notion of simulation~\cite{Milner71}
and alternating simulation~\cite{AHKV98}; and then present a new notion of 
\emph{combined simulation}.

\smallskip\noindent\textbf{Simulation.}
Given two-player games 
$\game = (\states,\act,\av,\trans,\lab,s_0)$ and 
$\game' = (\states',\act',\av',\trans',\lab',s'_0)$,
a relation $\simul \subseteq \states \times \states'$ is a \emph{simulation} from $\game$ to $\game'$ 
if for all $(s, s') \in \simul$  the following conditions hold:
\begin{enumerate}
\item \emph{Proposition match:} The atomic propositions match, i.e., $\lab(s) = \lab'(s')$.
\item \emph{Step-wise simulation condition:} For all actions $a \in \av(s)$ and states $t \in \trans(s,a)$ there exists an action $a' \in \av'(s')$ and a state $t' \in \trans(s',a')$ such that $(t,t') \in \simul$.
\end{enumerate}
We denote by $\simul_{\max}^{\game,\game'}$ the largest simulation relation between the two games
(we write $\simul_{\max}$ instead of  $\simul_{\max}^{\game,\game'}$ when $\game$ and $\game'$ are clear from the context).
We write $\game \simgame \game'$ when $(s_0, s'_0)\in \simul_{\max}$.
The largest simulation relation characterizes the logic relation of $\onetwoatl$ and $\onetwoatl^*$:
the $\onetwoatl$-fragment interprets a game as a transition system and the formulas coincide with 
existential $\ctl$, and hence the logic characterization follows from the classical results on 
simulation and $\ctl$~\cite{Milner71,RajeevTomBook}.

\begin{proposition}
For all games $\game$ and $\game'$ we have $\simul_{\max}=\preccurlyeq_{1,2}^*=\preccurlyeq_{1,2}$.
\end{proposition}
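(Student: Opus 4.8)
The plan is to reduce the statement to the classical equivalence between the simulation preorder and the logical preorder induced by the existential fragment of $\ctl$ (and $\ctl^*$) on transition systems. First I would make precise the observation the text already hints at: when only the path quantifier $\llangle 1,2 \rrangle$ is used, a game is indistinguishable from a transition system. Associate to a game $\game$ the transition system $T_\game$ on state space $\states$ with a transition $s \to t$ whenever $t \in \trans(s,a)$ for some $a \in \av(s)$, keeping the same labeling. Since a play $\plays(s,\straa,\strab)$ with both players cooperating can realize an arbitrary sequence $s_0 a_0 s_1 a_1 \cdots$ subject only to $a_i \in \av(s_i)$ and $s_{i+1} \in \trans(s_i,a_i)$, and since path formulas refer only to the state components (actions are never inspected), the game semantics of $\llangle 1,2 \rrangle(\varphi)$ at $s$ coincides with the existential path quantifier $\exists$ applied at $s$ in $T_\game$. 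Hence $\onetwoatl$ (resp. $\onetwoatl^*$) interpreted on $\game$ is literally the existential fragment of $\ctl$ (resp. $\ctl^*$) interpreted on $T_\game$, so $\preccurlyeq_{1,2}(\game,\game')$ and $\preccurlyeq_{1,2}^*(\game,\game')$ equal the corresponding logical preorders between $T_\game$ and $T_{\game'}$. Likewise, the two clauses defining a simulation $\simul$ from $\game$ to $\game'$ are verbatim the defining conditions of a classical simulation from $T_\game$ to $T_{\game'}$ (proposition match, plus: every $\to$-successor of $s$ is matched by some $\to$-successor of $s'$ inside $\simul$), so $\simul_{\max}^{\game,\game'} = \simul_{\max}^{T_\game,T_{\game'}}$.

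Since $\onetwoatl \subseteq \onetwoatl^*$ gives the inclusion $\preccurlyeq_{1,2}^* \subseteq \preccurlyeq_{1,2}$ for free, it then remains to establish $\simul_{\max} \subseteq \preccurlyeq_{1,2}^*$ and $\preccurlyeq_{1,2} \subseteq \simul_{\max}$, which chain to the claimed equalities. The first inclusion is a routine induction on the structure of an $\onetwoatl^*$ formula: literals use proposition match, Boolean cases are immediate, and for $\llangle 1,2 \rrangle(\varphi)$ one takes a witnessing play from $s$ and builds a witnessing play from $s'$ step by step using the step-wise simulation condition, with an inner induction on $\varphi$ that handles $\Next$, $\until$ and $\wrel$ in the usual way. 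The second inclusion is the standard characteristic-formula argument for finite systems: for each depth $n$ one defines an $\onetwoatl$ state formula $\psi^s_n$ describing the behavior of $s$ up to depth $n$ (using $\llangle 1,2 \rrangle \Next$ and finite conjunctions/disjunctions over successors), shows that $s' \models \psi^s_n$ for all $n$ iff $s'$ simulates $s$, and notes that since $\states$ and $\states'$ are finite the family stabilizes, so a single $\onetwoatl$ formula already separates any non-simulating pair. Both facts are exactly the classical simulation/$\ctl$ correspondence, so in the write-up I would state the translation to $T_\game$ and then cite~\cite{Milner71,RajeevTomBook} rather than reprove them.

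The only point genuinely requiring care — and the step on which I would spend the most words — is the translation lemma itself: verifying that "both players cooperating" yields precisely the set of arbitrary paths of $T_\game$, that erasing the action components leaves the truth value of every path formula unchanged, and that proposition match is forced on the logical side by the presence of the negated atoms $\neg q$ in the syntax. Once that lemma is in place, the equalities $\simul_{\max} = \preccurlyeq_{1,2}^* = \preccurlyeq_{1,2}$ follow as a black-box application of the classical result.
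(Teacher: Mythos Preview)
Your proposal is correct and follows exactly the approach the paper takes: the paper does not give a proof at all but simply remarks that the $\onetwoatl$-fragment ``interprets a game as a transition system and the formulas coincide with existential $\ctl$'', then cites~\cite{Milner71,RajeevTomBook}. Your write-up spells out the translation lemma (games with cooperating players $\leftrightarrow$ transition systems, $\llangle 1,2 \rrangle \leftrightarrow \exists$, the game-simulation clauses $\leftrightarrow$ classical simulation) in more detail than the paper bothers to, but the structure and the final appeal to the classical simulation/$\ctl$ correspondence are identical.
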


  % $(s,s')\in S_1\times S'_1$ then for every $a \in \act$ and every $t \in \trans(s, a)$ there exists $t' \in \delta'(s', a)$ such that $\simul(t, t')$.
% \item If $(t,t')\in S_2\times S'_2$ then for every $a \in \act$ and every $s\in \trans(t,a)$ there exists $s' \in \trans'(t', a)$ such that $\simul(s, s')$.
% Given two plays $w=s_0 s_1 s_2 \ldots $ and $w'= s'_0 s'_1 s'_2 \ldots $ we write $\simul(w,w')$ if for all $i\geq0$ we have $\simul(s_i,s'_i)$. We write $\simulation{G}{G'}$ iff there exists a simulation $\simul$ from $G$ to $G'$ such that $\simul(s_0, s'_0)$.

\smallskip\noindent\textbf{Alternating simulation.}
Given two games
$\game = (\states,\act,\av,\trans,\lab,s_0)$ and 
$\game' = (\states',\act',\av',\trans',\lab',s'_0)$,
a relation $\alt \subseteq \states \times \states'$ is an \emph{alternating simulation} from $G$ to $G'$ if for all $(s, s') 
\in \alt$ the following conditions hold:
\begin{enumerate}
\item  \emph{Proposition match:} The atomic propositions match, i.e., $\lab(s) = \lab'(s')$.
\item \emph{Step-wise alternating-simulation condition:} For all actions $a \in \av(s)$ there exists an action $a' \in \av'(s')$ such that for all states $t' \in \trans'(s',a')$ there exists a state $t \in \trans(s,a)$ such that $(t,t') \in \alt$.
 \end{enumerate}
% \item If $(s,s')\in S_1\times S'_1$ then for every $a \in \act$ and every $t \in \trans(s, a)$ there exists $t' \in \delta'(s', a)$ such that $\alt(t, t')$.
% \item If $(t,t')\in S_2\times S'_2$ then for every $a \in \act$ and every $s'\in \trans'(t',a)$ there exists $s \in \trans(t, a)$ such that $\alt(s, s')$.
% We write $\altsim{G}{G'}$ iff there exists an alternating simulation $A$ from $G$ to $G'$ such that $\alt(s_0, s'_0)$.
We denote by $\alt_{\max}^{\game,\game'}$ the largest alternating-simulation relation between the two games
(we write $\alt_{\max}$ instead of  $\alt_{\max}^{\game,\game'}$ when $\game$ and $\game'$ are clear from the context). 
We write $\game \altgame \game'$ when $(s_0, s'_0)\in \alt_{\max}$.
The largest alternating-simulation relation characterizes the logic relation of $\oneatl$ and $\oneatl^*$~\cite{AHKV98}.

\begin{proposition}
For all games $\game$ and $\game'$ we have $\alt_{\max}=\preccurlyeq_{1}^*=\preccurlyeq_{1}$.
\end{proposition}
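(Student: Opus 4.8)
The plan is to establish the two equalities $\alt_{\max} = \preccurlyeq_1^* = \preccurlyeq_1$ by a chain of inclusions. Since $\oneatl \subseteq \oneatl^*$, we trivially have $\preccurlyeq_1^* \subseteq \preccurlyeq_1$, so it suffices to prove (a) $\alt_{\max} \subseteq \preccurlyeq_1^*$ and (b) $\preccurlyeq_1 \subseteq \alt_{\max}$, which together pin all three relations to be equal. This is the standard recipe for logical characterizations of simulation-like relations, adapted to the alternating setting of~\cite{AHKV98}; the novelty here is only bookkeeping against the specific $\atl^*$ syntax and the Player-1 strategy semantics fixed in Section~\ref{sec:defn}.

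\textbf{Soundness direction (a): $\alt_{\max} \subseteq \preccurlyeq_1^*$.} Fix $(s,s') \in \alt_{\max}$; I must show that every $\oneatl^*$ state formula satisfied at $s$ is satisfied at $s'$. The proof goes by induction on formula structure, but the real content is a strategy-transfer argument for the modality $\llangle 1 \rrangle(\varphi)$. Suppose $s \models \llangle 1 \rrangle(\varphi)$ witnessed by a Player-1 strategy $\straa$ in $\game$. Using the step-wise alternating-simulation condition, I build a strategy $\straa'$ in $\game'$ together with a ``tracking'' relation on finite histories: whenever the $\game'$-play has reached a history ending in a state related (via $\alt_{\max}$) to the current state of a corresponding $\game$-history, the action $\straa$ picks in $\game$ has, by the alternating condition, a matching action $a'$ in $\game'$ such that every $\game'$-successor $t'$ of $a'$ can be matched back to some $\game$-successor $t$ of $a$ with $(t,t')\in\alt_{\max}$; let $\straa'$ play $a'$. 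Then for any Player-2 strategy $\strab'$ in $\game'$ one can extract, successor by successor, a Player-2 strategy $\strab$ in $\game$ so that $\plays(s,\straa,\strab)$ and $\plays(s',\straa',\strab')$ are related statewise by $\alt_{\max}$, hence agree on all atomic propositions at every position. A secondary structural induction on the path formula $\varphi$ — handling $\Next$, $\until$, $\wrel$, $\vee$, $\wedge$, and nested state subformulas via the outer induction hypothesis — shows $\plays(s,\straa,\strab)\models\varphi$ implies $\plays(s',\straa',\strab')\models\varphi$; since $\strab'$ was arbitrary, $s'\models\llangle 1\rrangle(\varphi)$. The $\neg q$ case uses proposition match, which is symmetric as an equality, and the Boolean cases are immediate.

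\textbf{Completeness direction (b): $\preccurlyeq_1 \subseteq \alt_{\max}$.} Here I show $\preccurlyeq_1$ is itself an alternating simulation; by maximality it is then contained in $\alt_{\max}$. Proposition match: if $\lab(s)\neq\lab'(s')$ then some literal ($q$ or $\neg q$) in $\oneatl$ separates them, contradicting $(s,s')\in\preccurlyeq_1$. Step-wise condition: suppose for contradiction that for some $a\in\av(s)$ and \emph{every} $a'\in\av'(s')$ there is a successor $t'_{a'}\in\trans'(s',a')$ with $(t,t'_{a'})\notin\preccurlyeq_1$ for all $t\in\trans(s,a)$. For each such pair $(t'_{a'},t)$ pick a distinguishing $\oneatl$ formula $\psi_{t,a'}$ true at $t$ but not $t'_{a'}$ (exists by definition of $\preccurlyeq_1$), and take, for each $t\in\trans(s,a)$, the conjunction $\chi_t = \bigwedge_{a'} \psi_{t,a'}$ and then $\chi = \bigvee_{t\in\trans(s,a)} \chi_t$. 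All sets involved are finite, so $\chi$ is a legitimate $\oneatl$ state formula; one checks $s\models\llangle 1\rrangle(\Next\,\chi)$ (Player~1 plays $a$, and whatever successor Player~2 picks is some $t$ satisfying $\chi_t$ hence $\chi$) while $s'\not\models\llangle 1\rrangle(\Next\,\chi)$ (for every $a'$ Player~1 might pick, Player~2 answers with $t'_{a'}$, which satisfies no $\chi_t$ since it fails $\psi_{t,a'}$ for every $t$). This contradicts $(s,s')\in\preccurlyeq_1$.

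\textbf{The main obstacle} is the strategy-transfer construction in direction (a): because Player-1 strategies are history-dependent, $\straa'$ must be defined along \emph{all} $\game'$-histories simultaneously while maintaining the invariant that each $\game'$-history is shadowed by a $\game$-history whose final state is $\alt_{\max}$-related — and crucially the shadow $\game$-history is \emph{not} determined by the $\game'$-history alone but is co-constructed with Player~2's responses. Making this precise requires carefully phrasing the invariant over pairs of histories (or, equivalently, defining a relation between play prefixes and proving it is preserved round by round) and being careful that the choice of matching $\game$-successor $t$ for a given $\game'$-successor $t'$ is made by the simulating Player~2, after Player~1's action is revealed; once that invariant is stated cleanly, the inductive verification against the $\atl^*$ path-formula grammar is routine. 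Everything else is standard and parallels~\cite{AHKV98,Milner71}.
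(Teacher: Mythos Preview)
The paper does not give its own proof of this proposition; it is stated as a known result with a citation to~\cite{AHKV98}. Your proposal is correct and follows precisely the standard strategy-transfer (for soundness) and distinguishing-formula (for completeness) argument from that reference, so there is nothing to compare---you have simply supplied the details the paper omits.
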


\smallskip\noindent\textbf{Combined simulation.}
We present a new notion of {combined simulation} 
that extends both simulation and alternating simulation, and we 
show how the combined simulation characterizes the logic relation 
induced by $\catl^*$ and $\catl$.
Intuitively, the requirements on the combined-simulation relation 
combine the requirements imposed by alternating simulation and simulation
in a step-wise fashion. 
Given two-player games $\game = (\states,\act,\av,\trans,\lab,s_0)$ and 
$\game' = (\states',\act',\av',\trans',\lab',s'_0)$, 
a relation $\qual \subseteq \states \times \states$ is a \emph{combined simulation} 
from $G$ to $G'$ if for all $(s, s') \in \qual$ the following conditions hold:
\begin{enumerate}
\item  \emph{Proposition match:} The atomic propositions match, i.e., $\lab(s) = \lab'(s')$.
% \item For all actions $a_1 \in \act_1$ there exists an action $a'_1 \in \act'_1$ such that for all actions $a_2 \in \act_2$ there exists an action $a'_2 \in \act'_2$ such that $\simul(\trans(s,a_1,a_2),\trans'(s',a'_1,a'_2))$.
\item  \emph{Step-wise simulation condition:} For all actions $a \in \av(s)$ and states $t \in \trans(s,a)$ there exists an action $a' \in \av'(s')$ and a state $t' \in \trans(s',a')$ such that $(t,t') \in \qual$.
\item  \emph{Step-wise alternating-simulation condition:} For all actions $a \in \av(s)$ there exists an action $a' \in \av'(s')$ such that for all states $t' \in \trans'(s',a')$ there exists a state $t \in \trans(s,a)$ such 
that $(t,t') \in \qual$.
\end{enumerate}
We denote by $\qual_{\max}^{\game,\game'}$ the largest combined-simulation relation between the two games
(and write $\qual_{\max}$ when $\game$ and $\game'$ are clear from the context).
We also write $\game \qualgame \game'$ when $(s_0, s'_0)\in \qual_{\max}$.
We first illustrate with an example that the logic relation $\preccurlyeq_{C}$
induced by $\catl$ is finer than the intersection of simulation and alternating-simulation 
relation; then present a game theoretic characterization of $\qual_{\max}$; and finally show that $\qual_{\max}$ gives the relations $\preccurlyeq_{C}^*$ and 
$\preccurlyeq_{C}$.

\begin{figure}[htb]
  \centering
  \scalebox{0.9}{\begin{tikzpicture}[auto, node distance=2cm,->, semithick,initial text=, align=left]
\tikzstyle{state}=[circle, draw]
\tikzstyle{initstate}=[state,initial]
\tikzstyle{transition}=[->,>=stealth']
\node [initstate, initial above] (s0) {$s_0$};
\node [state, fill=gray!40,right of=s0] (s1)  {$s_1$};
\node  (sl) at (0,1) {$\game$};

\node [state, , right of=s1] (t2) {$t_2$};
\node [initstate, initial above, right of=t2] (t0) {$t_0$};
\node [state, fill=gray!40, right of=t0] (t1) {$t_1$};
\node  (sr) at (6,1) {$\game'$};

\path
        (s0) edge[loop below] node[inner sep=0mm,pos=0.15] (s00) {}  node[above] {} (s0)
        (s0) edge  node[inner sep=0mm,pos=0.3] (s01) 	        {} node[above] {} (s1)
        (s0) edge[loop left] node[left]{$a_2$} (s0)
        (s1) edge[loop right] node[right]{$a_3$} (s1);
\path[-,shorten <=-1.5pt,shorten >=0mm] (s00) edge [bend right] node[right] {$a_1$} (s01) ;

\path
        (t0) edge[loop below] node[inner sep=0mm,pos=0.15] (t00) {}  node[above] {} (t0)
        (t0) edge  node[inner sep=0mm,pos=0.3] (t01) 	        {} node[above] {} (t1)
        (t0) edge[] node[above]{$a_2$} (t2)
        (t1) edge[loop right] node[right]{$a_3$} (t1)
        (t2) edge[loop below] node[right]{$a_2$} (t2);

\path[-,shorten <=-1.5pt,shorten >=0mm] (t00) edge [bend right] node[right] {$a_1$} (t01) ;

\end{tikzpicture} }
  \caption{Games $\game, \game'$ such that $\game \simgame \game'$ and $\game \altgame \game'$, but $\game \not\qualgame \game'$.}
  \label{fig:combined}
\end{figure}
\begin{example}
Consider the games $\game$ and $\game'$ shown in Figure~\ref{fig:combined}. 
% in which Player~1 chooses an action in $\act=\set{a_1, a_2, a_3}$ and Player~2 moves to successor by this action.
White nodes are labeled by an atomic proposition $p$ and gray nodes by $q$.
 % by an atomic proposition $q$.
The largest simulation and alternating-simulation relations between $\game$ and $\game'$ are:
$\simul_{\max}=\set{(s_0, t_0),(s_1, t_1)}, \alt_{\max}=\set{(s_0, t_0),(s_0, t_2),(s_1, t_1)}$. 
However, consider the formula $\psi = \llangle 1 \rrangle (\Next (p \land \llangle 1,2 \rrangle (\Next q)))$. We have
that $s_0 \models \psi$, but $t_0 \not \models \psi$. It follows that 
%by Theorem~\ref{thm:equiv} that 
$(s_0,t_0) \not \in \preccurlyeq_{C}$.
%%\qual_{\max}$.
\qed
\end{example}

\smallskip\noindent{\bf Combined-simulation games.} 
The simulation and the alternating-simulation relation can be obtained by solving 
two-player safety games~\cite{HHK95,AHKV98,CCK12}. 
We now define a two-player game for the combined-simulation relation characterization.
The game is played on the synchronized product of the two input games.
Given a state $(s,s')$, first Player~2 decides whether 
to check for the step-wise simulation condition or the step-wise  alternating-simulation 
condition. 
The step-wise simulation condition is checked by playing a two-step game, 
and the step-wise alternating-simulation condition is checked by playing a 
four-step game.
Consider two games $\game = (\states, \act, \av, \trans,\lab, s_0)$
and $\game' = ( \states', \act', \av', \trans',\lab',s'_0)$. 
We construct the \emph{combined-simulation game} 
$\game^{\qual} = (\states^\qual,\act^\qual,\av^\qual, \trans^\qual,\lab^\qual,s^\qual_0)$ as follows:

\begin{itemize}
\item \emph{The set of states.} The set of states $\states^\qual$ is:
\begin{eqnarray*}
\states^\qual = & &(\states \times \states') \cup (\states \times \states' \times \set{\SI} \times \set{1,2} )\cup (\states \times \states' \times \set{\AL} \times \set{2})\\
& \cup & (\states \times \states' \times \set{\AL} \times \act \times \set{1} )\cup 
(\states \times \states' \times \set{\AL} \times \act \times \act' \times \set{1,2} )
\end{eqnarray*}
Intuitively, in states in $S\times S'$ and in states where the last component is~2 
it is Player~2's turn to make the choice of successors, 
and in all other states Player~1 makes the choice of actions.

\item \emph{The set of actions.} The set of actions is as follows: $\act^\qual = \set{\bot} \cup \states \cup \states' \cup \act'$.

\item \emph{The transition function and the action-available function.} 
\begin{enumerate}

\item \emph{Choice of simulation or alternating-simulation.}
For a state $(s,s')$ we have only one action $\bot$ available for Player~1 and 
we have $\trans^{\qual}((s,s'),\bot)= \set{(s,s',\AL,2), (s,s',\SI,2)}$,
i.e., Player~2 decides whether to check for step-wise simulation or step-wise 
alternating-simulation conditions.

\item \emph{Checking step-wise simulation conditions.}
We describe the transitions for checking the simulation conditions:

\begin{enumerate}
\item For a state $(s,s',\SI,2)$ we have only one action $\bot$ available for Player~1 and 
we have $\trans^{\qual}((s,s',\SI,2),\bot)=\set{(t,s',\SI,1) \mid \exists a \in \av(s): \ t \in \trans(s,a)}$.

\item For a state $\ov{s}=(t,s',\SI,1)$ we have $\av^\qual(\ov{s})=\set{t'\mid \exists a'\in \av(s'): \  t'\in \trans'(s',a')}$
and $\trans^{\qual}(\ov{s},t')=\set{(t,t')}$.
\end{enumerate}
Intuitively, first Player~2 chooses an action $a \in \av(s)$ and a successor $t \in \trans(s,a)$ and 
challenges Player~1 to match, and Player~1 responds with an action $a'\in \av'(s')$ and 
a state $t'\in \trans'(s',a')$.

\item \emph{Checking  step-wise alternating-simulation conditions.}
We describe the transitions for checking the alternating-simulation conditions:
\begin{enumerate}
\item For a state $(s,s',\AL,2)$ we have only one action $\bot$ available for Player~1 and 
we have $\trans^{\qual}((s,s',\AL,2),\bot)=\set{(s,s',\AL,a,1) \mid  a \in \av(s)}$.
\item For a state $\ov{s}=(s,s',\AL,a,1)$ we have $\av^\qual(\ov{s})=\av'(s')$ and $\trans^{\qual}(\ov{s},a')=\set{(s,s',\AL,a,a',2)}$.
\item For a state $(s,s',\AL,a,a',2)$ we have only one action $\bot$ available for Player~1 and 
we have $\trans^{\qual}((s,s',\AL,a,a',2),\bot)=\set{(s,t',\AL,a,a',1) \mid  t' \in \trans'(s',a')}$.
\item For a state $\ov{s}=(s,t',\AL,a,a',1)$ we have $\av^\qual(\ov{s})= \trans(s,a)$ and $\trans^{\qual}(\ov{s},t)=\set{(t,t')}$.
\end{enumerate}
Intuitively, first Player~2 chooses an action $a$ from $\av(s)$ and Player~1 responds with an 
action $a'\in \av'(s')$ (in the first two-steps); 
then Player~2 chooses a successor $t'$ from $\trans'(s',a')$ and Player~1 responds by choosing 
a successor $t$ in $\trans(s,a)$.
\end{enumerate}

\item \emph{The labeling function.} The set of atomic proposition $\ap$ contains a single proposition $p \in \ap$. 
The labeling function $\lab^\qual$ given a state $\ov{s} \in \states^\qual$ is defined as follows:
$\lab^\qual(\ov{s})  = p$ iff  $\ov{s} = (s,s')$ and $\lab(s) \not = \lab'(s')$. 
Intuitively, Player~2's goal is to reach a state $(s,s')$ where 
the propositional labeling of the original games do not match,
i.e., to reach a state labeled $p$ by $\lab^\qual$.
\item \emph{The initial state.} The state $s^\qual_0$ is  $(s_0,s'_0)$.
\end{itemize}
In the combined simulation game we refer to Player~1 as the \emph{proponent} (trying to 
establish the combined simulation) and Player~2 as the \emph{adversary} (trying to violate the combined simulation).

\begin{figure}[htb]
  \centering
  \scalebox{0.9}{\begin{tikzpicture}[auto, node distance=2cm,->, semithick,initial text=, align=left]
\tikzstyle{state}=[ inner sep=0mm, font=\scriptsize]
\tikzstyle{pstate}=[ state, fill=gray!40]
\tikzstyle{slab}=[ inner sep=1mm, font=\scriptsize]
\tikzstyle{initstate}=[initial, state]
\tikzstyle{transition}=[->,>=stealth']

\def\vgap{-1cm}
\def\hgap{3cm}

\node [initstate, initial above] (ta) at (0,1*\vgap) 	{$(s_0, t_0)$};
\node [state] (tb) at (-1.625*\hgap,1*\vgap) 		{$(s_0, t_0, \AL, 2)$};
\node [state] (tc) at (1*\hgap,1*\vgap) 		{$(s_0, t_0, \SI, 2)$};
\node [state] (td) at (-2.25*\hgap,2*\vgap) 		{$(s_0, t_0, \AL, a_1, 1)$};
\node [state] (te) at (-1*\hgap,2*\vgap) 		{$(s_0, t_0, \AL, a_2, 1)$};
\node [slab] (tf) at (-2.25*\hgap,3*\vgap) 		{$\ldots$};
\node [state] (tg) at (-0.5*\hgap,3*\vgap) 		{$(s_0, t_0, \AL, a_2, a_1, 2)$};
\node [state] (th) at (-1.5*\hgap,3*\vgap) 		{$(s_0, t_0, \AL, a_2, a_2, 2)$};
\node [state] (ti) at (1.5*\hgap,3*\vgap) 		{$(s_0, t_0, \SI, 1)$};
\node [state] (tj) at (0.5*\hgap,3*\vgap) 		{$(s_1, t_0, \SI, 1)$};
\node [slab] (tk) at (-0.5*\hgap,4*\vgap) 		{$\ldots$};
\node [state] (tl) at (-1.5*\hgap,4*\vgap) 		{$(s_0, t_2, \AL, a_2, a_2, 1)$};
\node [slab] (tm) at (1.5*\hgap,4*\vgap) 		{$\dots$};
\node [state] (to) at (-1.5*\hgap,5*\vgap) 		{$(s_0, t_2)$};
\node [pstate] (tp) at (-0.5*\hgap,5*\vgap) 		{$(s_1, t_0)$};
\node [state] (tr) at (0.5*\hgap,5*\vgap) 		{$(s_1, t_1)$};
\node [pstate] (ts) at (1.5*\hgap,5*\vgap) 		{$(s_1, t_2)$};
\node [slab] (tt) at (-1.5*\hgap,6*\vgap) 		{$\ldots$};
\node [slab] (tu) at (-0.5*\hgap,6*\vgap) 		{$\ldots$};
\node [slab] (tw) at (0.5*\hgap,6*\vgap) 		{$\ldots$};
\node [slab] (tx) at (1.5*\hgap,6*\vgap) 		{$\ldots$};

\node [below=-1mm of ta] (la)  				{$\bot$};
\node [below=-1mm of tb] (lb)  				{$\bot$};
\node [below=0mm of tc] (lc)  				{$\bot$};

\path
	(ta) edge  (tb)
	(ta) edge  (tc)
	(tb) edge  (td)
	(tb) edge  (te)
        (tc) edge  (ti)
        (tc) edge  (tj)
        (te) edge node[right] {$a_1$} (tg)
        (te) edge node[left] {$a_2$} (th)
	(th) edge node[right] {$\bot$}(tl)
	(tl) edge  node[left] {$s_0$} (to)
        (tj) edge node[left] {$t_0$} (tp)
        (tj) edge node[left] {$t_1$} (tr)
        (tj) edge node[left] {$t_2$} (ts);

\path[dashed]
	(td) edge  (tf)
	(tg) edge  (tk)
	(ti) edge  (tm)
        (to) edge  (tt)
        (tp) edge  (tu)
        (tr) edge  (tw)
        (ts) edge  (tx);

\draw[-,thick]  (-0.6cm,\vgap) arc (180:360:0.6cm);
\draw[-,thick]  (-1.625*\hgap-0.6cm,\vgap-0.32cm) arc (215:325:0.7cm);
\draw[-,thick]  (\hgap-0.45cm,\vgap-0.6cm) arc (230:310:0.7cm);

\end{tikzpicture} }
  \caption{Part of the combined-simulation game of $\game$ and $\game'$ from Figure~\ref{fig:combined}.}
  \label{fig:csg}
\end{figure}
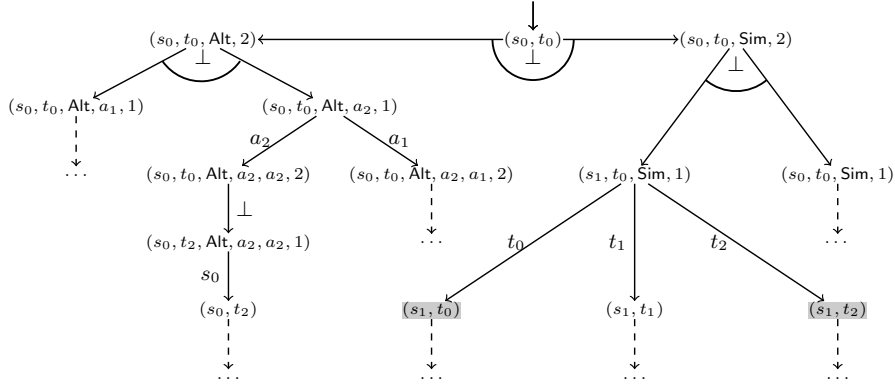

\begin{example}
A part of the combined-simulation game of  $\game$ and $\game'$ from Figure~\ref{fig:combined} is shown in Figure~\ref{fig:csg}.
% Consider games $\game, \game'$ in Figure \ref{fig:combined}. 
% Figure \ref{fig:csg} shows part of the combined simulation game of  $\game$ and $\game'$.
Dashed arrows indicate that the successors of a given state are omitted in the figure.
Gray states are labeled by an atomic proposition $p$, hence are the goal states for the adversary.\qed
\end{example}

\smallskip\noindent{\bf Shorthand for safety objectives.}
We will use the following shorthand for \emph{safety} objectives:
$\Box \: \varphi \equiv \varphi \: \wrel \: \false$;
i.e., the formula $\Box \varphi$ is satisfied by paths where 
$\varphi$ is always true.

\begin{theorem}
For all games $\game$ and $\game'$ we have $\qual_{\max} = \llbracket \llangle 1 \rrangle (\Box \neg p) \rrbracket_{\game^\qual} \cap (\states \times \states')$.
\end{theorem}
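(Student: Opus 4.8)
The plan is to prove the two inclusions separately, establishing in each direction that the adversary's winning region in the safety game $\game^\qual$ coincides with the complement (within $\states\times\states'$) of the largest combined simulation $\qual_{\max}$. First I would observe that solving $\llangle 1\rrangle(\Box\neg p)$ is a two-player safety game: Player~1 (the proponent) wins from $\ov s$ iff it has a strategy that never visits a $p$-labeled state, and by determinacy of safety games Player~2 (the adversary) wins from $\ov s$ iff it can force a visit to some $p$-labeled state $(s,s')$ with $\lab(s)\neq\lab'(s')$. So it suffices to show: for $(s,s')\in\states\times\states'$, the proponent wins from $(s,s')$ in $\game^\qual$ if and only if $(s,s')\in\qual_{\max}$.

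For the inclusion $\qual_{\max}\subseteq\llbracket\llangle 1\rrangle(\Box\neg p)\rrbracket$, I would show that $\qual_{\max}$ (extended by the proponent-controlled states it induces) is a winning region for the proponent. Fix $(s,s')\in\qual_{\max}$. The proponent's strategy is dictated by the three clauses of combined simulation. If the adversary moves to $(s,s',\SI,2)$ and then picks $(t,s',\SI,1)$ for some $t\in\trans(s,a)$, clause~2 gives $a'\in\av'(s')$ and $t'\in\trans'(s',a')$ with $(t,t')\in\qual_{\max}$; the proponent plays $t'$, reaching $(t,t')$. If instead the adversary moves to $(s,s',\AL,2)$ and picks action $a\in\av(s)$, clause~3 gives $a'\in\av'(s')$; the proponent plays $a'$, reaching $(s,s',\AL,a,a',2)$; then for any $t'\in\trans'(s',a')$ the adversary picks, clause~3 provides $t\in\trans(s,a)$ with $(t,t')\in\qual_{\max}$, and the proponent plays $t$, reaching $(t,t')\in\qual_{\max}$. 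By induction the play stays within $\qual_{\max}\cup(\text{induced intermediate states})$, and by clause~1 every visited state $(s,s')$ has $\lab(s)=\lab'(s')$, hence is not $p$-labeled. So $\Box\neg p$ holds and $(s,s')\in\llbracket\llangle 1\rrangle(\Box\neg p)\rrbracket$.

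For the converse $\llbracket\llangle 1\rrangle(\Box\neg p)\rrbracket\cap(\states\times\states')\subseteq\qual_{\max}$, I would take the set $\qual=\{(s,s')\mid (s,s')\in\llbracket\llangle 1\rrangle(\Box\neg p)\rrbracket_{\game^\qual}\}$ and verify it is a combined simulation; then maximality of $\qual_{\max}$ finishes the argument. Propositional match (clause~1) is immediate: if $\lab(s)\neq\lab'(s')$ then $(s,s')$ is itself $p$-labeled, contradicting that the proponent wins $\Box\neg p$ from there. For clause~2, since the proponent has a winning strategy from $(s,s')$ and the adversary can force the play through $(s,s',\SI,2)$ and then to any $(t,s',\SI,1)$ with $t\in\trans(s,a)$, the proponent's winning response at $(t,s',\SI,1)$ is some $t'\in\trans'(s',a')$ leading to $(t,t')$ from which the proponent still wins $\Box\neg p$, i.e. $(t,t')\in\qual$. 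Clause~3 is analogous, tracing the four-step gadget: the adversary forces through $(s,s',\AL,2)$ and any $a\in\av(s)$; the proponent's winning move picks $a'\in\av'(s')$; the adversary then picks any $t'\in\trans'(s',a')$; the proponent's winning move picks $t\in\trans(s,a)$ with $(t,t')\in\qual$. The one subtlety to handle carefully is that the proponent's winning strategy need not be memoryless a priori, but since safety games admit memoryless winning strategies, and all intermediate gadget states have out-degree determined by a single move, the "response" extracted at each gadget state depends only on the current state; this is the step I expect to require the most care in the write-up, though it is routine given memoryless determinacy of safety games.
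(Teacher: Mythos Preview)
Your proposal is correct and follows essentially the same approach as the paper: both rely on the fact that the combined-simulation game $\game^\qual$ is constructed so that each gadget mirrors one clause of the combined-simulation definition, whence the proponent's safety winning region on $\states\times\states'$ coincides with $\qual_{\max}$. The paper's own proof is a one-liner (``follows directly from the definition of combined simulation, and the fact that the game construction mimics the definition''), while you have unpacked the two inclusions and the role of memoryless determinacy; this is a faithful and correct elaboration of that sketch.
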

\begin{proof}
The statement follows directly from the definition of combined simulation, and the fact that the game construction 
mimics the definition of combined simulation (as in the case of simulation and alternating simulation~\cite{HHK95,AHKV98,CCK12}). \qed
\end{proof}

\smallskip\noindent{\bf Winning strategies.}
Given a combined-simulation game $\game^\qual$ we say that a strategy $\straa$ for the proponent is 
\emph{winning} from a state $s$ if for all strategies $\strab$ of the adversary 
we have $\plays(s,\straa,\strab) \models \Box (\neg p)$.
A strategy $\strab$ for the adversary is \emph{winning} from state $s$ if for all strategies 
$\straa$ of the proponent we have $\plays(s,\straa,\strab) \models \true \until p$.
Whenever the proponent (resp. adversary) has a winning strategy, the proponent (resp. adversary) also has 
memoryless winning strategy~\cite{gradel2002automata}.

\smallskip\noindent{\bf Combined simulation logical characterization.}
Our next goal is to establish that combined simulation gives the logical characterization 
of $\catl^*$ and $\catl$.
To prove the result we first introduce the notion of equivalence between plays:
Given two plays $\pat=s_0 a_0 s_1 a_1 s_2 \cdots $ and $\pat'= s'_0 a'_0 s'_1 a'_1 s'_2 \cdots $ 
we write $\pat \sim_\qual \pat'$ if for all $i\geq0$ we have $(s_i,s'_i) \in \qual_{\max}$.

\begin{lemma}
\label{lem:transl_one}
Given two games $\game$ and $\game'$, let $\maxqual$ be the combined simulation.
For all $(s,s') \in \maxqual$ the following assertions hold: 
\begin{itemize}
\item For all Player~1 strategies $\straa$ in $\game$, 
there exists a Player~1  strategy $\straa'$ in $\game'$ such that for every play 
$\pat' \in \plays(s',\straa')$ there exists a play $\pat \in \plays(s,\straa)$ 
such that $\pat \sim_\qual \pat'$.
\item For all pair of strategies $\straa$ and $\strab$ in $\game$, 
there exists a pair of strategies $\straa'$ and $\strab'$ in $\game'$ 
such that $\plays(s,\straa,\strab) \sim_\qual \plays(s',\straa',\strab')$,
\end{itemize}
\end{lemma}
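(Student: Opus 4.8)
The plan is to prove both assertions by constructing the required strategies in $\game'$ via a step-by-step simulation argument, exploiting the two step-wise conditions in the definition of combined simulation. Throughout, I would maintain the invariant that whenever we have built a common finite prefix up to position $i$, the current state pair $(s_i, s'_i)$ lies in $\maxqual$, so the relevant step-wise condition can be invoked at the next round.

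For the first assertion, fix a Player~1 strategy $\straa$ in $\game$. I would build $\straa'$ in $\game'$ together with, for each finite history in $\game'$ consistent with $\straa'$, a ``witness'' finite history in $\game$ consistent with $\straa$ ending in a related state; the construction proceeds by induction on the length of the history. At a state pair $(s_i, s'_i) \in \maxqual$, Player~1 in $\game$ plays $a_i = \straa(\text{history})$; by the \emph{step-wise alternating-simulation condition} applied with action $a_i$, there is an action $a'_i \in \av'(s'_i)$ such that every successor $t' \in \trans'(s'_i, a'_i)$ has some $t \in \trans(s_i, a_i)$ with $(t,t') \in \maxqual$. I would let $\straa'$ choose exactly this $a'_i$; then for whatever successor $t'$ the adversary picks in $\game'$, the witness history in $\game$ is extended by the corresponding $t$. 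This yields, for every play $\pat' \in \plays(s',\straa')$, a play $\pat \in \plays(s,\straa)$ that agrees with $\pat'$ position-wise in $\maxqual$, i.e.\ $\pat \sim_\qual \pat'$. A subtle point I would be careful about: since $\straa$ may be history-dependent, I must feed it the full witness history, and I should note that $\straa'$ built this way is also a (history-dependent) Player~1 strategy — which is all the lemma requires.

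For the second assertion, fix a pair $(\straa, \strab)$ in $\game$, which determines a single play $\pat = \plays(s,\straa,\strab) = s_0 a_0 s_1 a_1 \cdots$. Now I use the \emph{step-wise simulation condition}: inductively, given $(s_i, s'_i) \in \maxqual$, the action $a_i$ and successor $s_{i+1} \in \trans(s_i, a_i)$ yield an action $a'_i \in \av'(s'_i)$ and a state $s'_{i+1} \in \trans'(s'_i, a'_i)$ with $(s_{i+1}, s'_{i+1}) \in \maxqual$. I would define $\straa'$ to play $a'_i$ and $\strab'$ to pick $s'_{i+1}$ along this specific history (and arbitrarily elsewhere). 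Then $\plays(s',\straa',\strab') = s'_0 a'_0 s'_1 a'_1 \cdots$ satisfies $(s_i, s'_i) \in \maxqual$ for all $i$, hence $\plays(s,\straa,\strab) \sim_\qual \plays(s',\straa',\strab')$.

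The routine parts are the inductive bookkeeping. The main obstacle — and the step I would spell out most carefully — is the first assertion: because the alternating-simulation condition only guarantees a \emph{matching} successor in $\game$ for each adversary choice in $\game'$, the correspondence goes from plays of $\game'$ to plays of $\game$ (not the other way), and the witness play in $\game$ genuinely depends on the adversary's moves in $\game'$; one must therefore define $\straa'$ and the family of witness histories by a simultaneous induction, and verify that the resulting object is a well-defined strategy. I would also remark that $\maxqual$ being the \emph{largest} combined simulation is exactly what makes it closed under these step-wise moves, so each invocation of conditions~2 and~3 stays within $\maxqual$.
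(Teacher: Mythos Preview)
Your proposal is correct and follows essentially the same approach as the paper: use the step-wise alternating-simulation condition of $\maxqual$ to build $\straa'$ for the first item (maintaining the invariant $(s_i,s'_i)\in\maxqual$ and producing a witness play in $\game$ for each play in $\game'$), and the step-wise simulation condition to build the pair $(\straa',\strab')$ for the second item. The only cosmetic difference is that the paper phrases the construction via a winning strategy for the proponent in the combined-simulation game $\game^\qual$ (reading off $a'$ from the $\AL$-gadget, respectively the $\SI$-gadget), whereas you invoke conditions~2 and~3 in the definition of $\maxqual$ directly; the content is the same.
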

\begin{proof}
We present the details of the first item.
\begin{itemize}
\item 
Consider a winning strategy $\straa^{\qual}$ for the proponent in $\game^\qual$ such that for all $(s,s') 
\in \maxqual$ and against all strategies $\strab^\qual$ we have 
$\plays(s,\straa^\qual,\strab^\qual) \in \llbracket \Box(\neg p) \rrbracket$.
Given the Player~1 strategy $\straa$ in $\game$ we construct $\straa'$ in 
$\game'$ using the strategy $\straa^\qual$.
Consider a history $w \cdot s$ in $\game$ and $w'\cdot s' \in \game'$ 
such that $(s,s') \in \maxqual$.
Let $\straa(w\cdot s)=a$.
We define $\straa'(w'\cdot s')$ as follows.
Let $h$ be an arbitrary history in $\game^\qual$ that only visits state in 
$\maxqual$ and ends in $(s,s')$.
Let $a'=\straa^\qual(h \cdot (s,s',\AL,2) \cdot (s,s',\AL,a,2))$;
(i.e., the action played by the strategy $\straa^\qual$ in response 
to the choice of checking alternating simulation and the action $a$ by  
Player~2 in $\game^\qual$).
Then the strategy $\straa'$ plays accordingly,
i.e., $\straa'(w'\cdot s')=a'$.
In the next step for every choice $t'$ of the adversary there exists a choice $t$ of 
the proponent such that $\lab(t) = \lab'(t')$ and $(t,t') \in \maxqual$ and 
the matching can proceed.
\item The proof is similar to the first item, and instead of using the step-wise 
alternating-simulation gadget for strategy construction (of the first item) 
we use the step-wise simulation gadget from $\game^\qual$ to construct the 
strategy pairs.
\end{itemize}
The desired result follows.
\qed
% As $\qual(s,s')$ implies $\alt(s,s')$,
% holds, we have that in the corresponding combined simulation game $\game_\qual$ the proponent has a winning strategy against any strategy of the adversary. We restrict the power of the adversary by removing the $\txtsim$ labeled transitions from the game $\game_\qual$, i.e., the adversary cannot choose to play the simulation turns but is forced to play the alternating simulation turns. It follows that in the restricted game the adversary still does not have a winning strategy and therefore the proponent has a winning strategy. As the restricted game corresponds to an alternating simulation game \cite{AHKV98}
% the result follows directly from the standard result for alternating simulation games \cite[Lemma~1]{AHKV98}.
\end{proof}

In the following theorem we establish the relation between combined simulation and the $\catl^*$ fragment of $\atl^*$. 
\begin{theorem}
\label{thm:equiv}
For all games $\game$ and $\game'$ we have
$\qual_{\max} = \preccurlyeq_{C}^*= \preccurlyeq_{C}$.
% $$\qualsim{\game}{\game'} \qquad \Leftrightarrow  \qquad \game \preccurlyeq_{\catl^*} \game'$$
\end{theorem}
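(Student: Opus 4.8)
The plan is to prove the chain of equalities $\qual_{\max} = \preccurlyeq_{C}^* = \preccurlyeq_{C}$ in two directions, using Lemma~\ref{lem:transl_one} for the semantic-to-syntactic direction and an inductive formula construction for the syntactic-to-semantic direction. Since $\catl \subseteq \catl^*$ we automatically have $\preccurlyeq_{C}^* \subseteq \preccurlyeq_{C}$ (fewer formulas to satisfy), so it suffices to establish $\qual_{\max} \subseteq \preccurlyeq_{C}^*$ and $\preccurlyeq_{C} \subseteq \qual_{\max}$, which then forces all three relations to coincide.

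\textbf{Direction $\qual_{\max} \subseteq \preccurlyeq_{C}^*$.} First I would fix $(s,s') \in \qual_{\max}$ and show by induction on the structure of $\catl^*$ formulas that for every state formula $\psi$, if $s \models \psi$ then $s' \models \psi$; simultaneously I would prove the companion statement for path formulas, namely that if $\pat \sim_\qual \pat'$ then $\pat \models \varphi \implies \pat' \models \varphi$ for every $\catl^*$ path formula $\varphi$. The atomic and boolean cases are immediate from the proposition-match condition of combined simulation and the induction hypothesis. The temporal cases ($\Next$, $\until$, $\wrel$) follow because $\sim_\qual$ is defined pointwise along plays, so suffixes of $\sim_\qual$-related plays remain $\sim_\qual$-related, and we can push the state-formula induction hypothesis through at each position. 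The two path-quantifier cases are exactly where Lemma~\ref{lem:transl_one} is used: for $\llangle 1 \rrangle(\varphi)$, given a Player-1 strategy $\straa$ witnessing $s \models \llangle 1 \rrangle(\varphi)$, the first bullet of the lemma produces a strategy $\straa'$ in $\game'$ such that every outcome $\pat'$ from $s'$ is $\sim_\qual$-matched by some outcome $\pat$ from $s$ with $\pat \models \varphi$, hence $\pat' \models \varphi$ by the path-formula induction hypothesis; for $\llangle 1,2 \rrangle(\varphi)$ we use the second bullet identically with a strategy pair.

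\textbf{Direction $\preccurlyeq_{C} \subseteq \qual_{\max}$.} Here I would argue contrapositively: if $(s,s') \notin \qual_{\max}$, I construct a $\catl$ formula $\psi$ with $s \models \psi$ but $s' \not\models \psi$, so $(s,s') \notin \preccurlyeq_{C}$. By Theorem~1 (the game characterization), $(s,s') \notin \qual_{\max}$ means the adversary wins the safety game $\game^\qual$ from $(s,s')$, i.e., the adversary has a memoryless strategy forcing a visit to a $p$-labeled state (a proposition-mismatch state) within a bounded number of rounds. I would induct on the rank $n$ (the number of rounds the adversary needs to guarantee reaching $p$). If $n=0$ then $\lab(s) \neq \lab'(s')$ and some atomic proposition $q$ separates $s$ from $s'$, giving $\psi \in \{q, \neg q\}$. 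For the inductive step, the adversary's first move is to pick either the $\SI$-gadget or the $\AL$-gadget; this dictates the outermost path quantifier of $\psi$. If the adversary picks $\SI$, it then exhibits an action $a \in \av(s)$ and $t \in \trans(s,a)$ such that for every response $a' \in \av'(s')$, $t' \in \trans'(s',a')$ the pair $(t,t')$ has strictly smaller rank; by induction each such $(t,t')$ is separated by some $\catl$ formula $\psi_{t'}$ with $t \models \psi_{t'}$, $t' \not\models \psi_{t'}$, and the finite conjunction $\chi = \bigwedge_{t'} \psi_{t'}$ (over the finitely many $t'$) satisfies $t \models \chi$ and $t' \not\models \chi$ for all reachable $t'$, so $\psi = \llangle 1,2 \rrangle(\Next \chi)$ works: $s \models \psi$ via $(a,t)$, but from $s'$ every $(a',t')$ leads to $t' \not\models \chi$. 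If instead the adversary picks $\AL$, it has an action $a \in \av(s)$ such that for every $a' \in \av'(s')$ there is $t' \in \trans'(s',a')$ with every $t \in \trans(s,a)$ giving $(t,t')$ of smaller rank; one then forms, for each $a'$, the formula $\chi_{a'} = \bigwedge_{t \in \trans(s,a)} \psi_{t,t'_{a'}}$ so that $t \models \chi_{a'}$ for all $t \in \trans(s,a)$ while $t'_{a'} \not\models \chi_{a'}$, and takes $\psi = \llangle 1 \rrangle(\Next(\bigvee_{a'} \chi_{a'}))$; wait — more carefully, one needs a single $\Next$-formula $\chi$ with $t \models \chi$ for all $t \in \trans(s,a)$ but for each $a'$ some $t'_{a'} \not\models \chi$, and $\chi = \bigwedge_{a'} \chi_{a'}$ does this, giving $s \models \llangle 1 \rrangle(\Next \chi)$ via $a$ but $s' \not\models \llangle 1 \rrangle(\Next \chi)$ since every $a'$ admits the bad successor $t'_{a'}$. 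The finiteness of $\act$, $\av'(s')$, and all transition sets guarantees these are genuine (finite) $\catl$ formulas.

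\textbf{Main obstacle.} I expect the delicate part to be the syntactic-to-semantic direction, specifically getting the quantifier nesting in $\game^\qual$ to match the $\Next$-operator depth correctly: the $\AL$-gadget is a four-step sub-game, so one must check that the two intermediate Player-choice layers (Player~1 picking $a'$, then Player~2 picking $t'$, then Player~1 picking $t$) collapse cleanly into a single $\llangle 1 \rrangle(\Next \cdot)$ modality when translated back to $\catl$, and dually for the $\SI$-gadget's two-step structure and $\llangle 1,2 \rrangle(\Next \cdot)$. One also must verify the induction on rank is well-founded, which uses that the adversary has a memoryless winning strategy in the safety game (so the ranks are finite ordinals bounded by $|\states^\qual|$). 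The rest — the boolean and temporal cases in the first direction — is routine structural induction, and the path-quantifier cases are handed to us by Lemma~\ref{lem:transl_one}.
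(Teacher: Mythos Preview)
Your approach is exactly that of the paper: structural induction using Lemma~\ref{lem:transl_one} for $\qual_{\max}\subseteq\preccurlyeq_C^*$, and a rank induction on the adversary's memoryless winning strategy in $\game^\qual$ to build distinguishing $\catl$ formulas for $\preccurlyeq_C\subseteq\qual_{\max}$. The overall architecture and the $\SI$-case construction are correct.

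There is one concrete slip in the $\AL$-case formula. You define $\chi_{a'}=\bigwedge_{t\in\trans(s,a)}\psi_{t,t'_{a'}}$ and then assert that $t\models\chi_{a'}$ for all $t\in\trans(s,a)$; but induction only gives you $t\models\psi_{t,t'_{a'}}$ for the \emph{matching} index, not $t\models\psi_{t^*,t'_{a'}}$ for other $t^*$. The fix is to take the inner operation as a disjunction: set $\chi_{a'}=\bigvee_{t\in\trans(s,a)}\psi_{t,t'_{a'}}$, so that every $t^*\in\trans(s,a)$ satisfies $\chi_{a'}$ via its own disjunct $\psi_{t^*,t'_{a'}}$, while $t'_{a'}$ fails every disjunct and hence fails $\chi_{a'}$. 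Then $\chi=\bigwedge_{a'}\chi_{a'}$ and $\psi=\llangle 1\rrangle(\Next\chi)$ work exactly as you intend. (This is the same construction the paper uses in its detailed appendix proof for the alternating case.) With that correction your argument goes through.
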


\begin{proof}
\noindent{\em First implication.}
We first prove the implication $\maxqual \subseteq \preccurlyeq_{C}^*$. 
We will show the following assertions:
\begin{itemize}
\item For all states $s$ and $s'$ such that $(s,s') \in \maxqual$, we have that every $\catl^*$ state formula satisfied in $s$ is also satisfied in $s'$.
\item For all plays $\pat$ and $\pat'$ such that $\pat \sim_{\qual} \pat'$, 
we have that every $\catl^*$ path formula satisfied in $\pat$ is also satisfied in $\pat'$.
\end{itemize}
We will prove the theorem by induction on the structure of the formulas.
% \smallskip\noindent\textbf{Base case:}
% The case for the $\true$ predicate and atomic propositions $q\in \ap$ and their negations $\neg q$ follows directly from the definition of the alternating simulation.
% \smallskip\noindent\textbf{Induction step:}
The interesting cases for the induction step are formulas $\llangle 1 \rrangle (\varphi)$ and $\llangle 1,2 \rrangle (\varphi)$, 
where $\varphi$ is a path formula.
\begin{itemize}
\item Assume $s \models \llangle 1 \rrangle (\varphi)$ and $(s,s') \in \maxqual$. 
It follows that there exists a strategy $\straa \in \straas$ that ensures the path formula $\varphi$
from state $s$ against any strategy $\strab \in \strabs$. 
We want to show that $s' \models \llangle 1 \rrangle (\varphi)$. 
By Lemma~\ref{lem:transl_one}(item~1) we have that there exists a strategy $\straa'$ 
for Player~1 from $s'$ such that for every play $\pat' \in \plays(s',\straa')$ 
there exists a play $\pat \in \plays(s,\straa)$ such that $\pat \sim_{\qual} \pat'$. 
By inductive hypothesis we have that $s' \models \llangle 1 \rrangle (\varphi)$.

\item Assume $s \models \llangle 1,2 \rrangle (\varphi)$ and $\qual(s,s')$. 
It follows that there exist strategies $\straa \in \straas, \strab \in \strabs$ that ensure the path formula $\varphi$ from state $s$. 
By Lemma~\ref{lem:transl_one}(item~2) we have that there exist strategies $\straa'$ and $\strab'$ such that the two plays $\pat' = \plays(s',\straa',\strab')$ 
and $\pat=\plays(s,\straa,\strab)$ satisfy $\omega \sim_{\qual} \omega'$. By inductive hypothesis we have that $s' \models \llangle 1,2 \rrangle (\varphi)$.

\item Consider a path formula $\varphi$. 
If $\pat  \sim_{\qual} \pat'$, then by inductive hypothesis for every sub-formula $\varphi'$ 
of $\varphi$ we have that if $\pat \models \varphi'$ then $\pat'\models \varphi'$.
It follows that if $\pat \models \varphi$ then $\pat'\models \varphi$.

\end{itemize}

\noindent{\em Second implication.}
It remains to prove the second implication $\preccurlyeq_{C}^* \subseteq \preccurlyeq_{C}\subseteq \maxqual$. 
Assume that given states $s$ and $s'$ we have that $(s,s') \not \in \maxqual$, 
then there exists a winning strategy in the corresponding combined-simulation game for the adversary from state $(s,s')$,
i.e., there exists a strategy $\strab^\qual$ such that against all strategies $\straa^\qual$ we have 
$\plays((s,s'),\straa^\qual,\strab^\qual)$ reaches a state labeled $p$.
As memoryless strategies are sufficient for both players in $\game^\qual$~\cite{gradel2002automata}, there also exists a bound $i \in \nat$, 
such that the proponent fails to match the choice of the adversary in at most $i$ turns. 
We sketch the inductive proof that there exists a formula  with $i$ nested operators 
$\llangle1\rrangle \Next$ or $\llangle 1,2 \rrangle \Next$ that is satisfied in $s$ but not in $s'$. 
For $i$ equal to $0$ the states can be distinguished by atomic propositions. 
For the inductive step one can express the simulation turns by a $\llangle1,2 \rrangle (\Next \ldots)$ formula 
and alternating simulation turns by a $\llangle1\rrangle (\Next \ldots )$ formula. 
It follows that $(s,s') \not \in \preccurlyeq_{C}$.
The result follows.
\qed
\end{proof}

\begin{remark}
Lemma~\ref{lem:transl_one} and Theorem~\ref{thm:equiv} also hold for alternating games. 
Note that in most cases the action set is constant and the state space of the games are huge.
Then the combined simulation game construction is quadratic, and solving safety games on them
can be achieved in linear time (on the size of the game) using discrete graph theoretic algorithms~\cite{Immerman81,Beeri}.
\end{remark}

\begin{theorem}
\label{thm:quadratic}
Given two-player games $\game$ and $\game'$, the $\maxqual$, $\preccurlyeq_{C}^*$, and 
$\preccurlyeq_{C}$ relations can be computed in quadratic time using discrete graph theoretic algorithms.
\end{theorem}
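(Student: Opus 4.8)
The plan is to reduce the computation of all three relations to solving a single safety game, namely the combined-simulation game $\game^\qual$, and then to bound both the size of that game and the cost of solving it. By Theorem~\ref{thm:equiv} we have $\maxqual = \preccurlyeq_{C}^* = \preccurlyeq_{C}$, so it suffices to compute $\maxqual$; and by the theorem stated just before the proof of Theorem~\ref{thm:equiv} (the characterization $\qual_{\max} = \llbracket \llangle 1 \rrangle (\Box \neg p) \rrbracket_{\game^\qual} \cap (\states \times \states')$), computing $\maxqual$ amounts to computing the set of states of $\game^\qual$ from which the proponent has a strategy to avoid the $p$-labeled states forever, and then intersecting with $\states \times \states'$. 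That winning set is exactly the solution of a two-player safety game.

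First I would bound the size of $\game^\qual$. Let $n = |\states| + |\states'|$ and $m = |\act| + |\act'|$; the paper already observes (in the remark before the theorem) that when the action set is constant the construction is quadratic, so concretely $|\states^\qual| = O(n^2 \cdot m^2)$ and the number of edges is of the same order, since each gadget (the two-step simulation gadget and the four-step alternating-simulation gadget) has branching bounded by $|\states| + |\states'|$ or by $|\act|+|\act'|$ and a constant number of layers. In the regime emphasized by the paper, where $m$ is a fixed constant and $n$ is large, this is $O(n^2)$ states and edges. Second I would invoke the classical linear-time algorithms for solving safety (equivalently, reachability/B\"uchi with a trivial acceptance) games via the attractor/backward-reachability computation~\cite{Immerman81,Beeri,gradel2002automata}: computing the adversary's attractor to the $p$-labeled states, and taking its complement, runs in time linear in the size of $\game^\qual$, i.e.\ $O(n^2)$. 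Restricting the resulting winning region to $\states \times \states'$ is then a linear-time filter, giving $\maxqual$, and hence $\preccurlyeq_{C}^*$ and $\preccurlyeq_{C}$, in quadratic time. All operations are discrete graph algorithms (backward BFS over the game graph with in-degree counters for the Player-1 "all successors" vertices), so no numerical computation is needed, which is the stated point of contrast with strong simulation.

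The main obstacle, such as it is, is the bookkeeping in the size estimate: one must check that the auxiliary components in the state space of $\game^\qual$ (the $\set{\SI,\AL}$ flags, the recorded actions $a \in \act$, $a' \in \act'$, and the phase counters in $\set{1,2}$) contribute only a constant or linear factor each, so that the product with $\states \times \states'$ stays quadratic rather than blowing up — in particular the component $(\states \times \states' \times \set{\AL} \times \act \times \act' \times \set{1,2})$ has size $O(n^2 m^2)$, which is quadratic in $n$ for fixed $m$ but not in general, so the "quadratic time" claim is understood in the size of the games with the action alphabet treated as fixed, exactly as the preceding remark makes explicit. Granting that reading, the proof is essentially the chain: Theorem~\ref{thm:equiv} collapses the three relations; the safety-game characterization reduces the problem to one safety game; the construction is quadratic-sized; and safety games are solved in linear time by attractor computation. \qed
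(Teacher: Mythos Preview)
Your proposal is correct and follows exactly the argument the paper gives (in the remark immediately preceding the theorem, which serves as its proof): collapse the three relations via Theorem~\ref{thm:equiv}, reduce $\maxqual$ to the safety winning region in $\game^\qual$, observe that the construction is quadratic in the state space when the action alphabet is fixed, and solve the safety game in linear time by attractor computation~\cite{Immerman81,Beeri}. Your careful treatment of the $O(n^2 m^2)$ size and the explicit caveat that ``quadratic'' is understood with a constant action set is more detailed than the paper's own sketch but entirely in line with it.
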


%%% Local Variables: 
%%% mode: latex
%%% TeX-master: "main"
%%% End: 

%!TEX root = main.tex

\section{MDPs and Qualitative Logics}
\label{sec:mdp}
In this section we consider Markov decisions processes (MDPs) and 
logics to reason qualitatively about them.
We consider MDPs which can be viewed as a variant of two-player games defined in Section~\ref{sec:defn}.
First, we fix some notation: a probability distribution $f$ on a finite set $X$ 
is a function $f:X \to [0,1]$ such that $\sum_{x\in X} f(x)=1$, 
and we denote by  $\distr(X)$ the set of all probability distributions on $X$. 
For $f \in \distr(X)$ we denote by $\supp(f)=\set{x\in X \mid f(x)>0}$ the \emph{support of} $f$.

\subsection{MDPs}
A \emph{Markov decision process} (MDP) is a tuple 
$\game = (\states,(\states_1,\states_P),\act,\av,\trans_1, \trans_P,\lab,s_0)$; where
(i)~$\states$ is a finite set of states with a partition of $\states$ into 
Player-1 states $\states_1$ and probabilistic states $\states_P$;
(ii)~$\act$ is a finite set of actions;
(iii)~$\av: S_1 \to 2^\act \setminus \emptyset$ is an action-available 
function that assigns to every Player-1 state the non-empty set 
$\av(s)$ of actions available in $s$;
(iv)~$\trans_1:\states_1 \times \act \to \states  $ is a deterministic
transition function that given a Player-1 state and an action gives the
next state;
(v)~$\trans_P: \states_P \to \distr(\states)$ is a probabilistic 
transition function that given a probabilistic state gives a probability 
distribution over the successor states (i.e., $\trans_P(s)(s')$ is the 
transition probability from $s$ to $s'$);
(vi)~the function $\lab$ is the proposition labeling function as for two-player games;
and (vii)~$s_0$ is the initial state.
Strategies for Player~1 are defined as for games.
In this work we will consider MDPs with qualitative  properties, and 
hence not consider reward-based MDP models.

%%\textbf{MARTIN: TODO, should we define strategies for Player 1 here?}

\smallskip\noindent\textbf{Interpretations.}
We interpret an MDP in two distinct ways: 
(i)~as a $1\tfrac{1}{2}$-player game and 
(ii)~as an alternating two-player game. 
In the $1\tfrac{1}{2}$-player setting in a state $s \in \states_1$, 
Player~1 chooses an action $a \in \av(s)$ and the MDP moves to a unique successor  $s'$.
In probabilistic states $s_p \in \states_P$ the successor is chosen
according to the probability distribution $\trans_P(s_p)$.
In the alternating two-player interpretation, we regard the probabilistic states as Player-2 states,
 i.e., in a state $s_p \in \states_P$,  Player~2 chooses a successor state $s'$ from the support of the 
probability distribution $\trans_P(s)$. 
Given an MDP $\game$ we denote by $\wh{\game}$ its two-player interpretation,
and $\wh{G}$ is an alternating game.  
The $1\tfrac{1}{2}$-player interpretation is the classical definition of MDPs. 
We will use the two-player interpretation to relate logical characterizations of MDPs
and logical characterization of two-player games with fragments of $\atl^*$.
% {\bf MODIFY and MAKE FORMAL. MAKE IT CLEAR THAT THE 2-PLAYER INTERPRETATION IS
% AN ALTERNATING GAME.}

\smallskip\noindent\textbf{$1\tfrac{1}{2}$-Player Interpretation.}
Once a strategy $\straa \in \straas$ for Player~1 is fixed, the outcome of the MDP is a random walk 
for which the probabilities of \emph{events} are uniquely defined, where an \emph{event}
$\Phi \subseteq \Omega$ is a measurable set of plays~\cite{gradel2002automata}. 
For a state $s\in \states$ and an event $\Phi \subseteq \Omega$, 
we write $\prb^\straa_{s}(\Phi)$ for the probability that a play belongs to $\Phi$ 
if the game starts from the state s and Player~1 follows the strategy~$\straa$.
 % Given a MDP $\game= (\states,\act,\trans,\lab,s_0)$ and a event $A \subseteq \Omega$ a Player~1 strategy $\straa \in \straas$ is almost-sure winning for Player~1 in the MDP $\game$ if $\prb^{\straa}_{s_0}(A) = 1$. 

\smallskip\noindent\textbf{Two-player Interpretation.} The two-player interpretation corresponds to alternating
two-player games introduced in Section~\ref{sec:defn}, 
where  the probabilistic aspect of the MDP is replaced by a second player. 
Formally, given an MDP $\game = (\states,(\states_1,\states_P),\act,\av,\trans_1, \trans_P,\lab,s_0)$ we define an 
alternating two-player game $\wh{\game} = (\wh{\states},\wh{\act},\wh{\av},\wh{\trans},\wh{\lab},\wh{s_0})$ as follows:
(i)~ the states are $\wh{\states} = \states_1 \cup \states_P$; (ii)~the set of actions contains a new action $\bot$ not present
in $\act$, i.e., $\wh{\act} = \act \cup \set{\bot}$; (iii)~the action-available function for states $s \in \states_1$ is defined
as $\wh{\av}(s)  = \av(s)$ and for states $s_p \in \states_P$ as $\wh{\av}(s_p) = \set{\bot}$; 
(iv) for $s \in \states_1$ and  $a$ in $\wh{\av}(s)$ we have $\wh{\trans}(s,a) = \set{\trans_1(s,a)}$,
and for $s_p \in \states_P$ we have $\wh{\trans}(s_p,\bot) = \supp(\trans_p(s_p))$; 
(v)~the labeling function for a Player-1 state $s$ is  $\wh{\lab}(s) = \lab(s) \cup \set{\turn}$ and for a Player-2 state $s'$ coincides with $\lab(s')$; and (vi)~the initial state is the same $\wh{s}_0 = s_0$.
Given an MDP $\game$ we denote by $\wh{\game}$ the two-player interpretation of the MDP.
Note that for all Player-1 states $s \in \states_1$ we have $\vert \wh{\trans}(s) \vert = 1$ and for all Player-2 states 
$s_p \in \states_P$ we have $\vert \av(s_p) \vert = 1$. Therefore for any MDP the corresponding two-player interpretation
is an alternating game.

% In order to be compatible with the transition function of the two-player games we transform a given MDP 
% transition function $\trans$ to a two-player game transition function $\wh{\trans}$ as follows:
% for every state $s\in \states_1$, action $a \in \av(s)$, we have $\wh{\trans}(s,a) = \trans_1(s,a)$. And for all
% states $s_p \in \states_P$, we introduce a new action availabhave $\wh{\trans}

\smallskip\noindent\textbf{Parallel composition of MDPs.}
An MDP is said to be strictly alternating if the initial state is a Player-1 state and all the successors of Player-1 states are probabilistic states, and vice versa. 
Given two strictly alternating MDPs 
$\game = (\states,(\states_1,\states_P),\act,\av,\trans_1, \trans_P,\lab,s_0)$ and 
$\game' = (\states',(\states'_1,\states'_P),\act,\av',\trans'_1, \trans'_P,\lab',s'_0)$,
the parallel composition  is an MDP $\game \parallel \game' = (\wb{\states},(\wb{\states}_1,\wb{\states}_P) 
,\act, \wb{\av}, \wb{\trans}_1, \wb{\trans}_P, \wb{\lab}, \wb{s}_0)$ defined as follows:
(i)~the states are $\wb{\states} = \wb{\states}_1 \cup \wb{\states}_P$, where $\wb{\states}_1 = \states_1 \times \states'_1$ and
 $\wb{\states}_P = \states_P \times \states'_P$; (ii)~for a state $(s,s') \in \wb{\states}_1$ we have $\wb{\av}((s,s'))  = \av(s) 
 \cap \av'(s')$; (iii)~for a state $(s,s') \in \wb{\states}_1$ and an action $a \in \wb{\av}((s,s'))$ we have $\wb{\trans}_1((s,s'),a)
  = (\trans_1(s,a),\trans'_1(s',a))$; (iv)~for a state $(s_p,s_p') \in \wb{\states}_P$ we have $\wb{\trans}((s_p,s_p'))(t,t')= 
  \trans_P(s_p)(t) \cdot \trans'_P(s_p')(t')$; (v)~ for a state $(s,s') \in \wb{\states}$ we have $\wb{\lab}((s,s')) = \lab(s) \cup \lab'(s')$, 
  and (vi)~the initial state is
$(s_0,s'_0)$.

 % in the case of two-player games with the exception of the transition function. The probability distribution from state $(s,s')$ under action $a \in \wb{\av}((s,s')) $ is defined as follows: $\wb{\trans}((s,s'),a)(t,t') = \trans(s,a)(t) \cdot \trans(s',a)(t')$. 

\begin{example}

\label{ex:mdps}
In Figure~\ref{fig:mdps} we present three MDPs $\game_1, \game_2$, and $\game'$ that we use as running examples. 
We thoroughly describe only MDP~$\game' = (\states,(\states_1,\states_P),\act,\av,\trans_1, \trans_P,\lab,s_0)$.
Player-1 states, depicted as circles,  are $\states_1 = \set{s'_0,s'_2, s_3'}$
and probabilistic states, depicted as rectangles, are $\states_P = \set{s'_1, s'_4}$.
The set of actions is $\act = \set{a,b}$. Action $a$ is available in states $s'_0,s'_2$ and action $b$ is available only in states $s'_0,s_3'$. 
The deterministic transition function is $\trans_1(s'_0, a)=s'_1, \trans_1(s'_0, b)=s'_4, \trans_1(s'_2, a)=s'_4, \trans_1(s'_2, b)=s'_4, \trans_1(s'_3, b)=s'_4$.
The probabilistic transition function $\trans_P$  gives the following probability distributions over possible successor states: 
$\trans_P(s'_1)(s'_2)= \frac{1}{2}, \trans_P(s'_1)(s'_3)= \frac{1}{2}, \trans_P(s'_4)(s'_3)= 1$.
% In the figure we omit probabilities when there is a single successor in the support of the probability distribution. a
There is a single atomic proposition $p \in \ap$ and the states labeled by $p$ are depicted in gray.
The initial state is $s'_0$.\qed
\end{example}
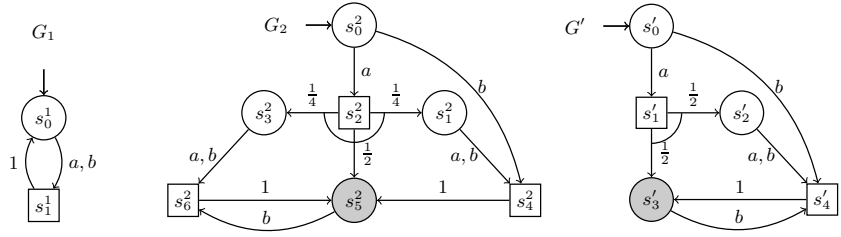
\begin{figure}[htb]
\centering
  \scalebox{0.8}{%!TEX root = ../main.tex

\begin{tikzpicture}[auto, node distance=1.5cm,->, semithick,initial text=, align=left]
\tikzstyle{nstate}=[circle, draw]
\tikzstyle{pstate}=[rectangle, draw]
\tikzstyle{initstate}=[nstate,initial]
\tikzstyle{transition}=[->,>=stealth']

\node [](g1){
	
	\begin{tikzpicture}[semithick]
	\node [initstate, initial above] (s10) {$s^1_0$};
	\node [pstate, below=0.8cm of s10] (s11) {$s^1_1$};
	\node [above=0.8cm of s10]   {$\game_1$};

	\draw[->, semithick]
	(s10) edge[bend left] node[right] {$a,b$} (s11)
	(s11) edge[bend left] node[left] {$1$} (s10);
	\end{tikzpicture}
};

\node[right of=g1,xshift=100] (g2){
	\begin{tikzpicture}[semithick]
	\node [initstate, initial left] (s20) at (0,0) {$s^2_0$};
	\node [pstate, below=0.8cm of s20] (s22)  {$s^2_2$};
	\node [nstate, right of=s22] (s21)  {$s^2_1$};
	\node [nstate, fill=gray!40, below=0.8cm of s22] (s25)  {$s^2_5$};
	\node [nstate, left of=s22] (s23)  {$s^2_3$};
	\node [pstate, right=2.2cm of s25] (s24)  {$s^2_4$};
	\node [pstate, left=2.2cm of s25] (s26)  {$s^2_6$};

	\node [left=0.6cm of s20]   {$\game_2$};

        \path[->, semithick]
        (s20) edge node[right] {$a$}    (s22)
        (s22) edge node[above] {$\frac{1}{4}$} (s21)
        (s22) edge node[above] {$\frac{1}{4}$} (s23)
        (s21) edge node[left] {$a,b$}    (s24)
        (s22) edge node[right] {$\frac{1}{2}$}    (s25)
        (s23) edge node[left] {$a,b$}    (s26)
        (s24) edge node[above] {$1$}    (s25)
        (s25) edge[bend left] node[above] {$b$}    (s26)
        (s26) edge node[above] {$1$}    (s25)
        (s20) edge[bend left]node[right] {$b$}	   (s24);

        \draw[-, semithick] (-0.5,-1.45) arc (180:360:0.5);
	\end{tikzpicture}
};

\node[right of=g2,xshift=120] (g'){
	\begin{tikzpicture}[semithick]
	\node [initstate, initial left] (s30) at (0,0) {$s'_0$};
	\node [pstate, below=0.8cm of s30] (s32)  {$s'_1$};
	\node [nstate, right of=s32] (s33)  {$s'_2$};
	\node [nstate, fill=gray!40, below=0.8cm of s32] (s35)  {$s'_3$};
	\node [pstate, right=2.2cm of s35] (s36)  {$s'_4$};

	\node [left=0.6cm of s30]   {$\game'$};

        \path[->, semithick]
        (s30) edge node[right] {$a$}    (s32)
        (s32) edge node[above] {$\frac{1}{2}$} (s33)
        (s32) edge node[right] {$\frac{1}{2}$}    (s35)
        (s33) edge node[left] {$a,b$}    (s36)
        (s35) edge[bend right] node[above] {$b$}    (s36)
        (s36) edge node[above] {$1$}    (s35)
        (s30) edge[bend left]node[right] {$b$}	   (s36);

%        \draw[-, semithick] (-0.5,-1.45) arc (180:270:0.5);
        \draw[-, semithick] (0.5,-1.45) arc (0:-90:0.5);
\end{tikzpicture}
};

\end{tikzpicture}

%%% Local Variables: 
%%% mode: latex
%%% TeX-master: "../main"
%%% End:  }
  \caption{Examples of MDPs.}
  \label{fig:mdps}
\end{figure}

\subsection{Qualitative Logics for MDPs}
We consider the qualitative fragment of $\pctl^*$~\cite{HJ94,BerkP95,BdA95}  and refer to the logic as \emph{qualitative pCTL$^*$} (denoted as $\qatl^*$) 
as it can express qualitative properties of MDPs.

\smallskip\noindent\textbf{Syntax and semantics.}  
The syntax of the logic is given in positive normal form and is similar to the syntax of $\atl^*$. 
It has the same state and path formulas as $\atl^*$ with the exception of path quantifiers. 
The logic $\qatl^*$ comes with two path quantifiers ($\PQ)$, namely $\almost$ and $\positive$
(instead of $\llangle 1 \rrangle, \llangle 2 \rrangle,\llangle 1,2 \rrangle \text{, and } \llangle \emptyset \rrangle$).
\begin{eqnarray*}
\qatl^* \text{ path quantifiers:} & &  \almost, \positive.
\end{eqnarray*}
The semantics of the logic $\qatl^*$ is the same  for the fragment shared with $\atl^*$, 
therefore we only give semantics for the new  path quantifiers. 
Given a path formula $\varphi$, we denote by $\llbracket \varphi \rrbracket_{\game}$ the set of plays $\pat$ such that $\pat \models \varphi$. 
For a state $s$ and a path formula $\varphi$ we have:
\begin{eqnarray*}
s \models \almost(\varphi) \qquad  & &\text{iff } \exists \straa \in \straas: \prb_s^{\straa}(\llbracket \varphi \rrbracket) = 1 \\
s \models \positive(\varphi) \qquad & &\text{iff } \exists \straa \in \straas: \prb_s^{\straa}(\llbracket \varphi \rrbracket) > 0.
\end{eqnarray*}
As before, we denote by $\qatl$ the fragment of $\qatl^*$ where every temporal operator is immediately preceded by a path quantifier, and for a state formula $\psi$ the set 
$\llbracket \psi \rrbracket_\game$ denotes the set of states in $\game$ that satisfy the formula $\psi$. 

\smallskip\noindent{\bf Logical relation induced by $\qatl$ and $\qatl^*$.}
Given two MDPs $\game$ and $\game'$, the logical relation induced by $\qatl^*$, denoted as 
$\preccurlyeq_{Q}^*$, (resp. by $\qatl$, denoted as $\preccurlyeq_{Q}$), is defined as follows:
$$ \preccurlyeq_{Q}^* = \{(s,s') \in \states \times \states' \mid \forall \psi \in \qatl^*: \text{ if } s \models \psi \text{ then } s' \models \psi\}$$
(resp. $\forall \psi \in \qatl$).

\section{Characterization of Qualitative Simulation for MDPs}
\label{sec:mdplogic}
In this section we establish the equivalence of the $\preccurlyeq_{Q}^{*}$ 
relation on MDPs with the $\preccurlyeq_{C}^*$ relation on the two-player interpretation of
MDPs, i.e., we prove that for all MDPs $\game$ and $\game'$ we have 
$\preccurlyeq_{Q}^{*}(\game,\game') = \preccurlyeq_{C}(\wh{\game},\wh{\game}')$,
where $\wh{\game}$ (resp. $\wh{\game}'$) is the two-player interpretation of
the MDP $\game$ (resp. $\game'$). 
%%It follows that our assume-guarantee reasoning methods can be applied for checking qualitative properties in MDPs.
In the first step we show how to translate some of the $\qatl$ formulas into $\catl$ formulas. 
We only need to translate the path quantifiers due to the similarity of path formulas in the logics.

\begin{lemma}
\label{lem:transl}
For all atomic propositions $q,r$ and for all MDPs $\game$, we have:
\setcounter{equation}{0}
\begin{align}
\llbracket \almost (\Next q )\rrbracket_\game &= \llbracket \llangle 1 \rrangle (\Next q) \rrbracket_{\wh{\game}} \\
\llbracket \almost (q \wrel r) \rrbracket_\game &= \llbracket \llangle 1 \rrangle (q \wrel r) \rrbracket_{\wh{\game}} \\
\llbracket \positive (\Next q )\rrbracket_\game &= \llbracket \llangle 1,2 \rrangle (\Next q) \rrbracket_{\wh{\game}} \\
\llbracket \positive (q \until r) \rrbracket_\game &= \llbracket \llangle 1,2 \rrangle (q \until r) \rrbracket_{\wh{\game}}
\end{align}
\end{lemma}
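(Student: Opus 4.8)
The plan is to prove the four identities by unwinding the semantics of the qualitative path quantifiers on the $1\tfrac12$-player interpretation and the semantics of $\llangle 1 \rrangle, \llangle 1,2 \rrangle$ on the alternating two-player interpretation $\wh\game$, exploiting the fact that for the simple one-step formula $\Next q$ and the simple liveness/safety patterns $q \wrel r$ and $q \until r$, the precise transition probabilities are irrelevant — only the support of each distribution matters. Recall that in $\wh\game$ the probabilistic states become Player-2 states with the unique action $\bot$ and $\wh\trans(s_p,\bot)=\supp(\trans_P(s_p))$, while Player-1 states keep their actions and deterministic successors; a Player-1 strategy in $\game$ is literally a Player-1 strategy in $\wh\game$ (the extra label $\turn$ does not affect formulas not mentioning it).

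First I would prove the ``positive'' identities (3) and (4), which are the most direct. For (3): $s\models\positive(\Next q)$ iff some Player-1 strategy $\straa$ gives positive probability to $\{\pat : \pat[1]\models q\}$; since every probabilistic branch has positive probability and only finitely many steps are involved, this holds iff there is some play under $\straa$ whose second state satisfies $q$, i.e.\ iff there exist $\straa$ and an adversary strategy $\strab$ with $\plays(s,\straa,\strab)[1]\models q$, which is exactly $s\models\llangle 1,2\rrangle(\Next q)$ in $\wh\game$. For (4) the same reasoning applies: $\prb^\straa_s(\llbracket q\until r\rrbracket)>0$ iff there is a finite path prefix consistent with $\straa$ that witnesses $q\until r$ (a $q$-run reaching an $r$-state), and such a prefix has positive probability precisely because each probabilistic choice along it lies in the support; conversely a positive-probability witness yields such a finite prefix. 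A finite witnessing prefix in $\game$ corresponds to a consistent play in $\wh\game$ realizable by a cooperating pair $(\straa,\strab)$, giving $s\models\llangle 1,2\rrangle(q\until r)$, and vice versa.

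Next I would handle the ``almost-sure'' identities (1) and (2) by a complementation/duality argument. For (1): $s\models\almost(\Next q)$ iff some $\straa$ makes all successors of $s$ reachable in one step satisfy $q$ — but a successor has probability $0$ of being reached only if it is outside the relevant support, so $\prb^\straa_s=1$ forces \emph{every} support successor (equivalently every successor in $\wh\trans$) to satisfy $q$. Since $\Next q$ only constrains the first move (Player-1's action choice at $s$ if $s\in\states_1$, or the forced $\bot$ if $s\in\states_P$), this is exactly: Player~1 has a strategy in $\wh\game$ forcing $\Next q$ against all Player-2 choices, i.e.\ $s\models\llangle 1\rrangle(\Next q)$. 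For (2), the key is the standard fact that $\prb^\straa_s(\llbracket q\wrel r\rrbracket)=1$ iff under $\straa$ there is \emph{no} reachable finite path prefix staying in $q\wedge\neg r$ that reaches a $\neg q\wedge\neg r$ state — again because any such prefix, if consistent with $\straa$, has positive probability. This is precisely the condition that Player~1 can, by choosing actions at her states, avoid ever leaving $q$ before $r$ regardless of the probabilistic (now Player-2) choices, which is $s\models\llangle 1\rrangle(q\wrel r)$ in $\wh\game$.

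The main obstacle I anticipate is making rigorous the passage between ``positive probability of an event'' and ``existence of a finite witnessing prefix'' for the until/weak-until cases, and symmetrically between ``probability one'' and ``absence of a bad finite prefix.'' One must argue that an almost-sure guarantee cannot depend on the numeric values of the probabilities but only on the supports: formally, for any Player-1 strategy $\straa$, the set of finite prefixes consistent with $\straa$ in $\game$ coincides with the set of finite prefixes of plays $\plays(s,\straa,\strab)$ over all Player-2 strategies $\strab$ in $\wh\game$, and a finite prefix has positive $\prb^\straa_s$-measure iff it is consistent with $\straa$ in this support sense. Given that correspondence, each equality reduces to a purely combinatorial statement about the common game tree, proved by the finite-horizon (for $\Next$) or fixpoint/König's-lemma (for $\until$, $\wrel$) characterizations of reachability and safety. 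I would isolate this support-correspondence as a small sublemma and then dispatch (1)--(4) uniformly.
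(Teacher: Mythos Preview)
Your proposal is correct and follows essentially the same approach as the paper: both arguments reduce each identity to the observation that, for these formulas, only the support of the probabilistic transitions matters, and then use a finite-prefix argument (a violating prefix of $q\,\wrel\,r$ has positive probability; a witnessing prefix of $q\,\until\,r$ has positive probability; for $\Next q$ a one-step case split on Player-1 versus probabilistic states suffices). The paper proves the almost-sure cases first and says the positive cases are analogous, whereas you start with the positive cases and then derive the almost-sure ones; your explicit isolation of the ``support-correspondence sublemma'' makes the structure slightly cleaner but is exactly what the paper uses implicitly.
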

\begin{proof}

\emph{Point 1.} The inclusion $\llbracket \almost (\Next q) \rrbracket \supseteq \llbracket \llangle 1 \rrangle (\Next q) \rrbracket$ follows from the fact
that there exists a strategy for Player~1 such that for all strategies of Player~2 the next state reached satisfies $q$. It follows that the same strategy for Player~1
ensures the formula with probability $1$. For the second inclusion $\llbracket \almost (\Next q) \rrbracket \subseteq \llbracket \llangle 1 \rrangle (\Next q) \rrbracket$
we consider two cases: (i) let $s \in \llbracket \almost (\Next q) \rrbracket$ be a Player-1 state. Then there exists an available action $a$ that leads to
a state that satisfies formula $q$. As $s$ is a Player-1 state, the transition function under $a$ has a unique successor. Therefore, playing the same
action ensures $q$ also in the two-player interpretation. The second case is that $s$ is a probabilistic states. In that case all the successors in the support
of the probabilistic transition function satisfy $q$. Therefore formula $q$ is also satisfied in the two-player interpretation.

% \emph{Point 1.} The inclusion $\llbracket \almost (\Next q) \rrbracket \supseteq \llbracket \llangle 1 \rrangle (\Next q) \rrbracket$ follows from the fact
% that there exists a strategy for Player~1 such that for all strategies of Player~2 the next state reached satisfies $q$. It follows that the same strategy for Player~1
% ensures the formula with probability $1$. For the second inclusion $\llbracket \almost (\Next q) \rrbracket \subseteq \llbracket \llangle 1 \rrangle (\Next q) \rrbracket$
% % we show that the strategy $\straa$ for Player~1 that ensures  $\almost (\Next q) $ also ensures  $\llangle 1 \rrangle (\Next q)$ in the two-player interpretation. 
%  assume towards contradiction that for every strategy $\straa$ of Player~1 there 
% is a strategy $\strab$ for Player~2 such that the path $\pat = \plays(s,\straa,\strab)$ violates $\Next q$. Then the finite prefix of length one of the path occurs with positive probability also in the $1 \tfrac{1}{2}$ interpretation when Player~1 plays according the strategy $\straa$ and therefore strategy $\straa$ does not satisfy $\almost (\Next q)$. As this is true for all Player~1 strategies  the contradiction follows.
%
\emph{Point 2.} As for the previous point the inclusion $\llbracket \almost (q \wrel r) \rrbracket \supseteq \llbracket \llangle 1 \rrangle (q \wrel r) \rrbracket$ follows easily from the definition. For the second inclusion assume towards contradiction that for every strategy $\straa$ for Player~1 there exists a strategy $\strab$ for Player~2 such that the  play $\plays(s,\straa,\strab)$  violates $q \wrel r$. It follows that for every strategy $\straa$ for Player~1 there exists a strategy $\strab$ for Player~2 such that play $\plays(s,\straa,\strab)$ satisfies $\neg r \until \neg q$. This is possible only if there exists a finite path to a $\neg q$ state that uses only $\neg r$ states, and the finite path has a positive probability in the $1 \tfrac{1}{2}$-player interpretation of the MDP. It follows that for every strategy of Player~1 there is a positive probability of violating $q \wrel r$ and the contradiction follows.

\emph{Point 3. and 4.} Point~$3$ follows similarly to Point~1, and Point~$4$ follows the same arguments as in Point~2. 
\qed
\end{proof}

\begin{lemma}
\label{lem:wreltrans}
For all atomic propositions $r$ and for all MDPs we have:
$\llbracket \positive (\Box \: r) \rrbracket = \llbracket \positive(r \until \almost(\Box \: r)) \rrbracket$.
\end{lemma}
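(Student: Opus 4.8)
The claim is $\llbracket \positive (\Box\, r) \rrbracket = \llbracket \positive(r \until \almost(\Box\, r)) \rrbracket$ for an atomic proposition $r$ in an MDP $\game$. The plan is to prove the two inclusions separately, working in the $1\tfrac12$-player interpretation throughout, and using the standard fact that in a finite MDP an event of positive probability is witnessed by a finite prefix of positive probability that can be completed to a strategy realizing the relevant tail behaviour.

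\textbf{Easy inclusion ($\supseteq$).}
First I would show $\llbracket \positive(r \until \almost(\Box\, r)) \rrbracket \subseteq \llbracket \positive(\Box\, r) \rrbracket$. Suppose $s \models \positive(r \until \almost(\Box\, r))$, so there is a strategy $\straa$ with $\prb^{\straa}_s(r \until \almost(\Box\, r)) > 0$. Then there is a finite path $\rho$ of positive probability from $s$, consisting of states satisfying $r$, ending in a state $t$ with $t \models \almost(\Box\, r)$, i.e. $t$ has a strategy $\strab$ with $\prb^{\strab}_t(\Box\, r) = 1$ — in particular $\prb^{\strab}_t(\Box\, r) > 0$. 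Concatenating: follow $\straa$ along $\rho$ (this happens with positive probability and keeps $r$ true), then switch to $\strab$ from $t$; since $r$ holds along $\rho$ and, with probability $1$ (hence positive probability), forever after $t$, we get a strategy witnessing $\prb_s(\Box\, r) > 0$. Hence $s \models \positive(\Box\, r)$.

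\textbf{Harder inclusion ($\subseteq$).}
Now suppose $s \models \positive(\Box\, r)$: there is a strategy $\straa$ and a play $\pat$ from $s$ with positive probability that stays in $r$ forever. The goal is to extract a witness for $\positive(r \until \almost(\Box\, r))$. The key idea: let $R$ be the set of states from which $\positive(\Box\, r)$ holds, i.e. states with positive probability of staying in $r$ forever. I claim that the set $W$ of states $t$ with $t \models \almost(\Box\, r)$ is reachable with positive probability, while remaining in $r$, from any state in $R$. This is where the main work lies, and it is essentially the standard characterization of positive/almost-sure safety in MDPs: in a finite MDP, $\positive(\Box\, r)$ at $s$ means there is a (memoryless) strategy and a bottom strongly connected sub-structure, all labelled $r$, reachable from $s$ within $r$; once inside such a closed sub-structure $C$ (a set of states all satisfying $r$, closed under the probabilistic transitions and admitting an action staying inside at Player-1 states), the safety objective $\Box\, r$ is satisfied almost surely, so every state of $C$ satisfies $\almost(\Box\, r)$. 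Since $C$ is reachable from $s$ with positive probability along an $r$-path (this reachability itself has positive probability because it's a finite path with all transitions of positive probability), the play that reaches $C$ and then plays the safe strategy inside $C$ witnesses $r \until \almost(\Box\, r)$ with positive probability. Therefore $s \models \positive(r \until \almost(\Box\, r))$.

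\textbf{Main obstacle.}
The crux is the structural lemma that $\positive(\Box\, r)$ at a state is equivalent to positive-probability reachability (within $r$) of a "closed $r$-component" from which $\Box\, r$ holds almost surely. This is a well-known fact about qualitative safety analysis in MDPs (see e.g.\ the treatment in~\cite{BaierBook,gradel2002automata}), and I would either cite it or include a short self-contained argument: take a witnessing memoryless strategy $\straa$ for $\positive(\Box\, r)$, restrict the MDP to $\straa$ to get a Markov chain, and observe that a positive-probability infinite run inside $r$ must reach, with positive probability, a bottom SCC of this chain contained entirely in $r$; every state of that BSCC then satisfies $\almost(\Box\, r)$ (the same $\straa$ keeps the run in the BSCC, hence in $r$, with probability $1$). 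The only subtlety to state carefully is that $\almost$ and $\positive$ quantify existentially over strategies, so the "switch strategies upon reaching $W$" construction is legitimate (strategies may use memory / the full history), which is immediate from the definition of strategies in the excerpt.
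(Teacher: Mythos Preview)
Your argument is correct. The paper itself does not give a proof at all: it simply invokes \cite[Lemma~1]{CDH10} (stated there for the more general setting of partially observable MDPs). So your self-contained argument is already more than what the paper supplies.

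Your route via bottom SCCs of the Markov chain induced by a memoryless witness for $\positive(\Box\, r)$ is the standard and cleanest way to see this. The one step you should make explicit is why a \emph{memoryless} witness exists for $\positive(\Box\, r)$; you gesture at this by citing \cite{BaierBook,gradel2002automata}, which is fine, but since the whole argument hinges on getting a finite Markov chain you should state precisely which result you are invoking (memoryless sufficiency for positive winning of safety/co-B\"uchi objectives in finite MDPs). For comparison, the paper's source contains a commented-out alternative proof of the hard inclusion that proceeds by contraposition: assuming no $r$-path reaches $\llbracket \almost(\Box\, r)\rrbracket$, it shows that from every reachable $r$-state there is probability at least $c_{\min}^{n+1}$ of leaving $r$ within $n{+}1$ steps, whence $\neg r$ is reached almost surely under any strategy. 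Your BSCC argument is arguably more transparent, and it also handles the easy inclusion more carefully than that draft (which claims ``the same strategy'' works, glossing over the need to switch to the $\almost(\Box\, r)$ witness upon arrival).
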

\begin{proof}
The result follows from~\cite[Lemma~1]{CDH10} (shown even for a more general class of partially observable MDPs). 
\qed
\end{proof}

\begin{lemma}
\label{lem:transl_w}
For all atomic propositions $q,r$ and for all MDPs, we have:
$\llbracket \positive (q \wrel r) \rrbracket = \llbracket \llangle 1,2 \rrangle (q \until r) \rrbracket \cup \llbracket \llangle 1,2 \rrangle (q \until (\llangle 1 \rrangle( q \wrel \false))) \rrbracket $.
\end{lemma}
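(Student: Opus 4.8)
The plan is to reduce the weak-until case to two situations that are already settled: the strong-until case of Lemma~\ref{lem:transl}(4), and the ``$q$ holds forever'' case of Lemma~\ref{lem:wreltrans}. The bridge is the path-formula identity $q \wrel r \equiv (q \until r) \vee (q \wrel \false)$: by the semantics of $\wrel$, a play $\pat$ satisfies $q \wrel r$ iff it satisfies $q \until r$ or $\pat[j] \models q$ for all $j$, and since no play satisfies $q \until \false$, the latter disjunct is precisely $\pat \models q \wrel \false = \Box q$. Hence, as sets of plays, $\llbracket q \wrel r \rrbracket = \llbracket q \until r \rrbracket \cup \llbracket q \wrel \false \rrbracket$.

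Next I would observe that $\positive$ commutes with a disjunction of path formulas: for any $\varphi_1,\varphi_2$ we have $\llbracket \positive(\varphi_1 \vee \varphi_2) \rrbracket_\game = \llbracket \positive(\varphi_1) \rrbracket_\game \cup \llbracket \positive(\varphi_2) \rrbracket_\game$, because a Player-1 strategy that assigns positive probability to $\llbracket \varphi_1 \rrbracket \cup \llbracket \varphi_2 \rrbracket$ must, by finite subadditivity of the probability measure, assign positive probability to $\llbracket \varphi_1 \rrbracket$ or to $\llbracket \varphi_2 \rrbracket$ (and conversely by monotonicity, using the same witness strategy). Combining this with the identity above yields
\[
  \llbracket \positive(q \wrel r) \rrbracket_\game \;=\; \llbracket \positive(q \until r) \rrbracket_\game \;\cup\; \llbracket \positive(\Box q) \rrbracket_\game .
\]

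It then remains to translate the two summands into the two-player interpretation $\wh{\game}$. The first summand equals $\llbracket \llangle 1,2 \rrangle(q \until r) \rrbracket_{\wh{\game}}$ by Lemma~\ref{lem:transl}(4). For the second, Lemma~\ref{lem:wreltrans} with its atomic proposition instantiated to $q$ gives $\llbracket \positive(\Box q) \rrbracket_\game = \llbracket \positive(q \until \almost(\Box q)) \rrbracket_\game$. Since $\almost(\Box q)$ is literally $\almost(q \wrel \false)$, Lemma~\ref{lem:transl}(2) identifies the states of $\game$ satisfying $\almost(\Box q)$ with the states of $\wh{\game}$ satisfying $\llangle 1 \rrangle(q \wrel \false)$, using that the MDP and $\wh{\game}$ share their state space. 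Moreover, the argument behind Lemma~\ref{lem:transl}(4) — comparing positive-probability realizability of a ``$q$-invariant-until-target'' objective by Player~1 in the MDP and by the coalition $\llangle 1,2 \rrangle$ in $\wh{\game}$ — refers to the target only through the set of states that satisfy it, so it applies verbatim with the atomic target replaced by $\almost(\Box q)$ on the MDP side and by $\llangle 1 \rrangle(q \wrel \false)$ on the game side. Hence $\llbracket \positive(q \until \almost(\Box q)) \rrbracket_\game = \llbracket \llangle 1,2 \rrangle(q \until \llangle 1 \rrangle(q \wrel \false)) \rrbracket_{\wh{\game}}$, and taking the union with the first summand gives the claimed equality.

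I expect the last step to be the main obstacle: Lemma~\ref{lem:transl}(4) is stated only for an atomic until-target, so I must check that its proof is insensitive to replacing that proposition by any state formula with the same extension, and that Lemma~\ref{lem:transl}(2) also covers the boundary target $\false$ needed to identify $\almost(\Box q)$ with $\llangle 1 \rrangle(q \wrel \false)$ across the two games. The remaining ingredients — the $\wrel$ identity and the elementary union bound for probabilities — are routine.
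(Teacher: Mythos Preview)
Your proposal is correct and follows essentially the same route as the paper: decompose $q \wrel r$ into $(q \until r) \vee \Box q$, distribute $\positive$ over the disjunction, invoke Lemma~\ref{lem:wreltrans} on the $\Box q$ summand, and then apply the translations of Lemma~\ref{lem:transl}. You are in fact more careful than the paper on the one subtle point---that Lemma~\ref{lem:transl} is stated only for atomic $q,r$ while here the until-target is the state formula $\almost(\Box q)$ (resp.\ $\llangle 1 \rrangle(q \wrel \false)$)---which the paper passes over silently; your observation that the proof of Lemma~\ref{lem:transl}(4) depends only on the extension of the target is exactly the right justification.
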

\begin{proof}
By definition we have that $\llbracket \positive (q \wrel r) \rrbracket = \llbracket \positive ((q \until r) \vee (\Box q)) \rrbracket$. 
We write the formula as follows:
$\llbracket \positive ((q \until r) \vee (\Box q)) \rrbracket = \llbracket \positive (q \until r) \rrbracket \cup \llbracket \positive (\Box q) \rrbracket$. 
By Lemma~\ref{lem:wreltrans} we have that $ \llbracket \positive (\Box q) \rrbracket = \llbracket \positive(q \until \almost(\Box \: q)) \rrbracket$. 
Note that $\Box \: q \equiv q \wrel \false$. 
All these facts together with the already established translations presented in Lemma~\ref{lem:transl} give us the desired result.
\qed
\end{proof}

To complete the translation of temporal operators it remains to express the $\qatl$ formula $\llbracket \almost (q \until r) \rrbracket$ in terms of $\catl$. We first introduce the $\apre$ function:

\smallskip\noindent\emph{$\apre$.} Given two sets of states $X \subseteq Y \subseteq \states$ we define the predecessor operator $\apre$ as follows:
\begin{eqnarray*}
% \apre(Y,X) = \{s \in \states \mid \exists a \in \act \av(s,a)) \subseteq Y \wedge \supp(\trans(s,a)) \cap X \not = \emptyset\}.
\apre(Y,X) &=& \set{s \in \states_1 \mid \exists a \in \av(s)\ :\ \trans_1(s,a) \in X }  \cup \\
 & &\set{s_p \in \states_P \mid \supp(\trans_P(s_p)) \subseteq Y \wedge \supp(\trans_P(s_p)) \cap X \neq \emptyset}.
\end{eqnarray*}
As is shown in \cite{dAHK98} we can express the states $\llbracket \almost (q \until r) \rrbracket$ using the following $\mu$-calculus notation, where $\mu$ (resp. $\nu$) denotes the least (resp. greatest) fixpoint:
\begin{equation}
\label{eq:fix}
\llbracket  \almost (q \until r) \rrbracket  = \nu Y.\mu X. (\llbracket r \rrbracket \cup (\llbracket q \rrbracket \cap \apre(Y,X))).
\end{equation}
The fixpoint computation on an MDP with $n$ states can be described as follows: 
$Y_0$ is initialized to all states, and in each iteration $i$ the set $X_{i,0}$ 
is initialized to the empty set; and $X_{i,j+1}$ is obtained from $X_{i,j}$ applying 
the one step operators, and $Y_i$ is set as the fixpoint of iteration $i$.
Formally, for $1\leq i \leq n$ and $0\leq j \leq n-1$ we have
\[
Y_0  = \llbracket  \true \rrbracket; \quad X_{i,0}  =  \llbracket  \false \rrbracket; \quad X_{i,j+1}  =   (\llbracket r \rrbracket \cup (\llbracket q \rrbracket \cap \apre(Y_{i-1},X_{i,j}))); 
\quad Y_i  =  X_{i,n};
\]
and then $Y_n= \llbracket  \almost (q \until r) \rrbracket$.
Next we show that the $\apre$ function can be expressed in $\catl$. 
For $\catl$ formulas $\psi_1,\psi_2$ such that 
$\llbracket \psi_1 \rrbracket \subseteq \llbracket \psi_2 \rrbracket$ 
we define:
$$F_{\apre}(\psi_1,\psi_2)= \llangle1\rrangle (\Next \psi_1) \wedge \llangle1,2 \rrangle (\Next \psi_2) $$
% $$F_{\apre}(\psi_1,\psi_2) = \llangle1\rrangle (\Next (\psi_1 \cap \psi_2)) \vee (\llangle 1 \rrangle (\Next (\psi_1)) \wedge \llangle1,2\rrangle(\Next(\psi_2)))$$

\begin{lemma}
\label{lem:comp}
For $\catl$ state formulas $\psi_1,\psi_2$ such that
$\llbracket \psi_1 \rrbracket \subseteq \llbracket \psi_2 \rrbracket$ 
we have: 
$\llbracket F_{\apre}(\psi_1,\psi_2) \rrbracket = \apre(\llbracket \psi_1 \rrbracket, \llbracket \psi_2 \rrbracket)$.
\end{lemma}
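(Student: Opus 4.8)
The plan is to prove the claimed set equality pointwise: for an arbitrary state $s$ of $\game$ (equivalently of $\wh{\game}$, since $\wh{\states}=\states_1\cup\states_P$), show that $s$ satisfies $F_{\apre}(\psi_1,\psi_2)$ in the two-player interpretation $\wh{\game}$ if and only if $s\in\apre(\llbracket\psi_1\rrbracket,\llbracket\psi_2\rrbracket)$ computed on $\game$. Since $\apre$ is itself defined by a case split between Player-1 states and probabilistic states, I would mirror that split, exploiting the shape of $\wh{\game}$: a Player-1 state $s\in\states_1$ keeps its available actions $\wh{\av}(s)=\av(s)$ but has deterministic successor sets $\wh{\trans}(s,a)=\set{\trans_1(s,a)}$, whereas a probabilistic state $s_p\in\states_P$ has only the dummy action $\bot$, and Player~2 then chooses an arbitrary successor from $\wh{\trans}(s_p,\bot)=\supp(\trans_P(s_p))$. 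So each case is just an unfolding of the $\atl^*$ semantics restricted to one-step (next) formulas.

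\emph{Probabilistic states.} At $s_p$ Player~1 has no genuine choice, so $s_p\models\llangle 1 \rrangle(\Next \psi_1)$ holds exactly when \emph{every} successor in $\supp(\trans_P(s_p))$ satisfies $\psi_1$, i.e.\ $\supp(\trans_P(s_p))\subseteq\llbracket\psi_1\rrbracket$; dually, since Player~2 may pick any successor, $s_p\models\llangle 1,2 \rrangle(\Next \psi_2)$ holds exactly when \emph{some} successor satisfies $\psi_2$, i.e.\ $\supp(\trans_P(s_p))\cap\llbracket\psi_2\rrbracket\neq\emptyset$. The conjunction of these two conditions is, term for term, the probabilistic clause in the definition of $\apre(\llbracket\psi_1\rrbracket,\llbracket\psi_2\rrbracket)$, so the two sides coincide on $\states_P$ with no use of the hypothesis.

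\emph{Player-1 states.} For $s\in\states_1$ every $\wh{\trans}(s,a)$ is a singleton, so Player~2 is powerless one step ahead and both $\llangle 1 \rrangle(\Next \psi)$ and $\llangle 1,2 \rrangle(\Next \psi)$ collapse to the same condition $\exists a\in\av(s)\colon\trans_1(s,a)\in\llbracket\psi\rrbracket$. Hence $s\models F_{\apre}(\psi_1,\psi_2)$ iff some action leads into $\llbracket\psi_1\rrbracket$ \emph{and} some action leads into $\llbracket\psi_2\rrbracket$; by the assumed inclusion between $\llbracket\psi_1\rrbracket$ and $\llbracket\psi_2\rrbracket$ one of these two conjuncts subsumes the other, and what remains is precisely the Player-1 clause of $\apre(\llbracket\psi_1\rrbracket,\llbracket\psi_2\rrbracket)$. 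Combining the two cases yields $\llbracket F_{\apre}(\psi_1,\psi_2)\rrbracket_{\wh{\game}}=\apre(\llbracket\psi_1\rrbracket,\llbracket\psi_2\rrbracket)$.

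I do not expect a real obstacle: the argument is a mechanical case-by-case reading of the game semantics in an alternating game. The only points that need care are bookkeeping ones: getting the direction of the containment hypothesis right so that the redundant conjunct is discarded on the correct side in the Player-1 case (this is also exactly what makes the arguments of $\apre$ satisfy its own precondition $X\subseteq Y$), and keeping in mind that $F_{\apre}$ must be evaluated in the two-player interpretation $\wh{\game}$—where the MDP's probabilistic states are Player-2 states—while $\apre$ is computed on $\game$. One also uses silently that $\av(s)$ and $\supp(\trans_P(s_p))$ are nonempty, which is guaranteed by the MDP definition.
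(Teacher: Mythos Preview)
Your proposal is correct and follows essentially the same approach as the paper: a case split on whether $s\in\states_1$ or $s\in\states_P$, followed by unfolding the one-step $\llangle 1\rrangle$ and $\llangle 1,2\rrangle$ semantics in the alternating game $\wh{\game}$ and matching the two resulting clauses against the two clauses of $\apre$. The only cosmetic difference is that the paper proves the two set inclusions separately whereas you argue the equivalence directly; you are also right to flag the orientation of the inclusion hypothesis, since the argument (and the paper's own proof) actually uses $\llbracket\psi_2\rrbracket\subseteq\llbracket\psi_1\rrbracket$, which is what is needed both for the Player-1 case and for the precondition $X\subseteq Y$ of $\apre$.
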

\begin{proof}
We prove the two inclusions. We start with $\apre(\llbracket \psi_1 \rrbracket, \llbracket \psi_2 \rrbracket) \subseteq \llbracket F_{\apre}(\psi_1,\psi_2) \rrbracket$.
Let $s$ be a state in $\apre(\llbracket \psi_1 \rrbracket, \llbracket \psi_2 \rrbracket)$, we consider two cases: (i)~$s \in \states_1$; and (ii)~$s \in \states_P$.
For the case (i)~it follows from the definition of $\apre$ that there exists an action $a \in \av(s)$ such that the unique state $\trans_1(s,a)$ satisfies $\psi_1 \land \psi_2$.
It follows that $s \in \llbracket \llangle1\rrangle (\Next \psi_1) \wedge \llangle1,2 \rrangle (\Next \psi_2)\rrbracket$ and therefore
$s \in \llbracket F_{\apre}(\psi_1,\psi_2) \rrbracket$. In case (ii)~
we have $s \in \states_P$, $\supp(\trans_P(s)) \subseteq \llbracket \psi_1 \rrbracket$, and $\supp(\trans_P(s)) \cap \llbracket \psi_2 \rrbracket \neq \emptyset$.
It follows that $s \in \llbracket \llangle1\rrangle (\Next \psi_1) \wedge \llangle1,2 \rrangle (\Next \psi_2) \rrbracket$ and therefore 
$s \in \llbracket F_{\apre}(\psi_1,\psi_2) \rrbracket$.

We continue with the second inclusion $\llbracket F_{\apre}(\psi_1,\psi_2) \rrbracket \subseteq \apre(\llbracket \psi_1 \rrbracket, \llbracket \psi_2 \rrbracket)$.
Let $s$ be a state in $\llbracket F_{\apre}(\psi_1,\psi_2) \rrbracket$, we again consider two cases: (i)~$s \in \states_1$; and (ii)~$s \in \states_P$.
For case (i)~assume $s \in \llbracket \llangle1\rrangle (\Next \psi_1) \wedge \llangle1,2 \rrangle (\Next \psi_2) \rrbracket$,
it follows that there exists an available action $a \in \av(s)$ such that the state $\trans_1(s,a)$ is in $\llbracket \psi_2 \rrbracket$ and as we 
have $\llbracket \psi_2 \rrbracket \subseteq \llbracket \psi_1 \rrbracket$, we have that there exists an action $a \in \av(s)$ such that $\trans_1(s,a) \in \llbracket \psi_1 \rrbracket \cap \llbracket \psi_2 \rrbracket$. For the second case (ii)~when $s \in \states_P$ we again assume $s \in \llbracket \llangle1\rrangle (\Next \psi_1) \wedge \llangle1,2 \rrangle (\Next \psi_2) \rrbracket$. The first part of the formula ensures that $\trans_P(s) \subseteq \llbracket \psi_1 \rrbracket$ and the second part ensures that $\trans_P(s) \cap \llbracket \psi_2 \rrbracket \neq \emptyset$. The desired result follows.
\qed
\end{proof}

The following lemma shows the first of the two inclusions:
% In the next part we show that the evaluation of the $\mu$-calculus formula can be done by evaluating nested $\oneatl$ formulas as follows. Let $\{\alpha_{i,j} \mid 0 \leq i,j \leq n\}$ be a set of formulas defined by the following clauses
\begin{lemma}
\label{lem:catltoqatl}
For an MDP we have $\preccurlyeq_{C} \: \subseteq \: \preccurlyeq_{Q}$.
\end{lemma}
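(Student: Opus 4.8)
The plan is to show that if $(s,s') \in \preccurlyeq_{C}$ on the two-player interpretation $\wh{\game}$ of the MDP, then $s' $ satisfies every $\qatl$ state formula that $s$ satisfies. Since by Theorem~\ref{thm:equiv} we have $\preccurlyeq_{C} = \qual_{\max}$ on $\wh{\game}$, it suffices to work with the combined simulation relation directly, and we proceed by structural induction on the $\qatl$ state formula $\psi$. The base cases (atomic propositions $q$, $\neg q$) are immediate from the proposition-match condition of combined simulation, and the boolean cases $\psi_1 \vee \psi_2$, $\psi_1 \wedge \psi_2$ are routine. The heart of the argument is the case $\psi = \PQ(\varphi)$ where $\PQ \in \set{\almost,\positive}$ and, since we are in $\qatl$ (not $\qatl^*$), $\varphi$ is one of the one-step-guarded temporal forms: $\Next \psi_1$, $\psi_1 \until \psi_2$, or $\psi_1 \wrel \psi_2$.

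The key idea is to push each such $\qatl$ formula through the translations already established: Lemma~\ref{lem:transl} handles $\almost(\Next q)$, $\almost(q \wrel r)$, $\positive(\Next q)$, $\positive(q \until r)$; Lemma~\ref{lem:transl_w} handles $\positive(q \wrel r)$; and the remaining form $\almost(q \until r)$ is handled by the fixpoint characterization~\eqref{eq:fix} together with Lemma~\ref{lem:comp}, which expresses $\apre$ via the $\catl$ formula $F_{\apre}$. Concretely, for a formula $\PQ(\varphi)$ with immediate subformulas $\psi_1,\psi_2$, I would first apply the inductive hypothesis to $\psi_1,\psi_2$ so that $\llbracket \psi_i \rrbracket$ behave like atomic propositions with respect to $\preccurlyeq_C$ (i.e.\ $s$-side membership implies $s'$-side membership), then invoke the relevant translation lemma to rewrite $\PQ(\varphi)$ as a $\catl$ state formula $\chi$ over the two-player interpretation with $\llbracket \PQ(\varphi) \rrbracket_{\game} = \llbracket \chi \rrbracket_{\wh{\game}}$, and finally use Theorem~\ref{thm:equiv} ($\preccurlyeq_C = \preccurlyeq_C^*$ preserves all $\catl$ formulas) to conclude $s \models \chi \Rightarrow s' \models \chi$, hence $s \models \PQ(\varphi) \Rightarrow s' \models \PQ(\varphi)$. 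For the $\almost(q \until r)$ case I would argue by induction on the fixpoint iteration: each $X_{i,j}$ and $Y_i$ is $\llbracket \cdot \rrbracket$ of a $\catl$ formula built from $\psi_1,\psi_2$ using $\vee,\wedge,F_{\apre}$, and is therefore preserved by $\preccurlyeq_C$; taking the limit, $\llbracket \almost(q \until r) \rrbracket$ is a (bounded) union/intersection of such sets and is still preserved.

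The main obstacle I anticipate is a subtlety in the $\almost(q \until r)$ case: the fixpoint formulas are parameterized by the MDP's state count $n$, and one must be careful that the nested $\mu$/$\nu$ unrolling yields a genuine \emph{finite} $\catl$ formula over \emph{both} MDPs simultaneously (so that $\preccurlyeq_C$ applies uniformly) — in particular the index $n$ should be taken large enough for both $\game$ and $\game'$ (e.g.\ the disjoint-union game of Section~\ref{sec:defn}), so that the same formula computes the fixpoint on each side. A second, more bookkeeping-level issue is that $\preccurlyeq_C$ is defined on $\wh{\game}$ (Player-1 states carrying the extra proposition $\turn$), so one must check that the $\turn$-labeling introduced in the two-player interpretation does not spuriously separate states that are $\preccurlyeq_Q$-related on the MDP — but since we only need the $\catl \Rightarrow \qatl$ direction here, and $\qatl$ formulas never mention $\turn$, this causes no trouble: adding $\turn$ to the label set only refines $\preccurlyeq_C$, which is the safe direction for this inclusion. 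Modulo these points, the proof is an assembly of the already-proven translation lemmas plus the logical-characterization theorem.
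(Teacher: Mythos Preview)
Your proposal is correct and follows essentially the same route as the paper: a structural induction on $\qatl$ formulas that translates each path-quantifier case into a $\catl$ formula via Lemmas~\ref{lem:transl}, \ref{lem:transl_w}, and (for $\almost(\cdot\,\until\,\cdot)$) the finite unrolling of the $\apre$-fixpoint using Lemma~\ref{lem:comp}, then invokes the $\catl$-preservation of $\preccurlyeq_C$. The two technical caveats you flag (taking $n$ large enough for the disjoint union so that one formula computes the fixpoint on both sides, and the harmlessness of the added $\turn$ proposition for this direction of the inclusion) are exactly the right points to watch, and the paper's proof tacitly assumes them.
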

\begin{proof}
We prove the counterpositive, i.e., we construct a mapping of formulas $f: \qatl \rightarrow \catl$ such that given two states $s,s'$ and a $\qatl$ formula $\psi$ 
we have that if $s \models \psi$ and $s' \not \models \psi$ then the  $\catl$ formula $f(\psi)$ is true in $s$ and not true in $s'$. We proceed by structural induction on the $\qatl$ formula and replace parts that are in scope of a path quantifier by their $\catl$ version. The cases where $\psi$ is an atomic
proposition or a Boolean combination of formulas are straightforward. 
It remains to translate the  formulas $\almost (\Next \varphi_1)$, $\almost (\varphi_1 \wrel \varphi_2)$, and $\almost (\varphi_1 \until \varphi_2)$ for $\qatl$ formulas $\varphi_1, \varphi_2$. The translation of the first two follows directly from Lemma~\ref{lem:transl}, therefore it remains to translate the $\qatl$ formula $\almost(\varphi_1 \until \varphi_2)$. 
We proceed by encoding the fixpoint computation of the $\almost (\varphi_1 \until \varphi_2)$ formula into nested $\catl$ formulas. 
Let $n$ be the number of states of the MDP.
Let $\{\wt{\phi}_i,\ \phi_{i,j} \mid 0 \leq i,j \leq n\}$ be a set of formulas defined by the following clauses:
\begin{eqnarray*}
 & & \wt{\phi}_0= \true; \\
\forall 1 \leq i \leq n: & & \phi_{i,0} = \false  \\
%%\forall 1 \leq j \leq n: & & \alpha_{0,j} = \true  \\
\forall 1 \leq i \leq n. \forall 0 \leq j \leq n-1: & & \phi_{i,j+1} = f(\varphi_2) \vee (f(\varphi_1) \wedge F_{\apre}(\wt{\phi}_{i-1},\phi_{i,j})) \\
\forall 1 \leq i \leq n: & &  \wt{\phi}_{i} =\phi_{i,n};
\end{eqnarray*}
By Lemma~\ref{lem:comp} the set of nested formulas $\phi_{i,j}$ represents the computation of $X_{i,j}$ and $\wt{\phi}_i$ the computation of $Y_i$ 
(for the computation of the fixpoint formula). 
It follows that we have $\llbracket \almost (\varphi_1 \until \varphi_2) \rrbracket = \llbracket \wt{\phi}_{n} \rrbracket$ and concludes the translation. 
The translation for formulas $\positive (\Next \varphi_1)$, $\positive (\varphi_1 \wrel \varphi_2)$, and $\positive (\varphi_1 \until \varphi_2)$
to $\catl$ formulas follows from Lemma~\ref{lem:transl} and Lemma~\ref{lem:transl_w}.
The desired result follows.
\qed
\end{proof}

\begin{lemma}
For an MDP $\game$ we have $\preccurlyeq_{Q} \: \subseteq \: \preccurlyeq_{C}$.
\end{lemma}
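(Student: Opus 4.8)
The plan is to prove the contrapositive: starting from two states $s, s'$ of an MDP $\game$ with $(s,s') \notin \preccurlyeq_C$ (equivalently, by Theorem~\ref{thm:equiv}, $(s,s') \notin \maxqual$ in the two-player interpretation), I will construct a $\qatl$ formula $\psi$ such that $s \models \psi$ and $s' \not\models \psi$. The starting point is the proof of Theorem~\ref{thm:equiv}: since $(s,s')\notin\maxqual$, the adversary has a memoryless winning strategy in the combined-simulation game $\game^\qual$, and hence there is a bound $i$ such that the proponent fails to match within $i$ turns. The second implication of Theorem~\ref{thm:equiv} already produces a $\catl$ formula with $i$ nested $\llangle 1 \rrangle \Next$ and $\llangle 1,2 \rrangle \Next$ operators distinguishing $s$ from $s'$; so the real task here is to translate that $\catl$ distinguishing formula \emph{back} into $\qatl$, i.e. to replace each $\llangle 1 \rrangle \Next$ by an $\almost$-formula and each $\llangle 1,2 \rrangle \Next$ by a $\positive$-formula that has the same meaning on the MDP.

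The key technical step is therefore a converse-direction companion to Lemma~\ref{lem:transl}: for an arbitrary MDP state formula $\psi_0$ (not just an atomic proposition $q$), I want $\llbracket \llangle 1 \rrangle (\Next \psi_0) \rrbracket_{\wh\game} = \llbracket \almost(\Next \psi_0) \rrbracket_\game$ and $\llbracket \llangle 1,2 \rrangle (\Next \psi_0) \rrbracket_{\wh\game} = \llbracket \positive(\Next \psi_0) \rrbracket_\game$. The arguments of Lemma~\ref{lem:transl}, Points~1 and~3, do not actually use that $q$ is atomic — they only use the structure of the two-player interpretation (Player-1 states have deterministic successors; probabilistic states become Player-2 states whose available action is $\bot$ and whose successors are the support of the distribution). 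So I would restate and reprove those two equalities with $\psi_0$ in place of $q$: for a Player-1 state, a single action witnesses the almost-sure (resp. positive) next-step property iff it witnesses the $\llangle 1 \rrangle$ (resp. $\llangle 1,2 \rrangle$) property, since the successor is unique; for a probabilistic state, $\almost(\Next\psi_0)$ means \emph{all} support successors satisfy $\psi_0$, which is exactly what $\llangle 1 \rrangle(\Next\psi_0)$ demands of a Player-2 state (Player~2 picks any support element), and $\positive(\Next\psi_0)$ means \emph{some} support successor satisfies $\psi_0$, matching $\llangle 1,2\rrangle(\Next\psi_0)$. I would also note that $\turn$-based case splitting is harmless since $\wh\game$ is alternating.

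With that lemma in hand, the proof is a straightforward structural induction: given the $\catl$ formula $\chi$ from Theorem~\ref{thm:equiv} distinguishing $s$ from $s'$, define by induction on the nesting depth a $\qatl$ formula $g(\chi)$, replacing innermost-first each subformula $\llangle 1 \rrangle(\Next \chi')$ by $\almost(\Next g(\chi'))$ and each $\llangle 1,2 \rrangle(\Next \chi')$ by $\positive(\Next g(\chi'))$, and passing through Boolean connectives and atomic propositions unchanged; by the generalized Lemma~\ref{lem:transl} applied level by level, $\llbracket g(\chi) \rrbracket_\game = \llbracket \chi \rrbracket_{\wh\game}$, so $s \models g(\chi)$ and $s' \not\models g(\chi)$, giving $(s,s') \notin \preccurlyeq_Q$. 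Combined with Lemma~\ref{lem:catltoqatl} this yields $\preccurlyeq_Q = \preccurlyeq_C$ on MDPs (and, together with the intended theorem identifying $\preccurlyeq_Q^*$ with $\preccurlyeq_C^*$ and $\preccurlyeq_C$ with $\preccurlyeq_C^*$, the full characterization).

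The main obstacle I anticipate is purely bookkeeping: making precise that the $\catl$ witness formula from the (only sketched) second implication of Theorem~\ref{thm:equiv} really is built solely from $\llangle 1 \rrangle\Next$, $\llangle 1,2 \rrangle\Next$, Booleans, and atomic propositions — so that the naive replacement $g$ is well-defined — and checking that the equality $\llbracket g(\chi)\rrbracket_\game = \llbracket\chi\rrbracket_{\wh\game}$ propagates correctly through the induction, given that the two formulas live over different transition systems ($\game$ versus $\wh\game$) but over the same state set $\states_1 \cup \states_P$. No genuinely new idea beyond Lemma~\ref{lem:transl} and Theorem~\ref{thm:equiv} should be needed; everything else is routine.
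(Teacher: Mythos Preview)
Your proposal is correct and follows essentially the same idea as the paper. The paper's proof builds, for each approximation level $i$, a finite set $\Psi_i$ of $\qatl$ formulas using only atomic propositions, Boolean connectives, $\almost(\Next\cdot)$ and $\positive(\Next\cdot)$, and argues that $\Psi_i$ characterizes the $i$-step approximation $\preccurlyeq_C^i$ of the combined simulation; since $\preccurlyeq_C^n=\preccurlyeq_C$ for an MDP with $n$ states, this yields $\preccurlyeq_Q\subseteq\preccurlyeq_C$. You instead start from the concrete $\catl$ distinguishing formula supplied by Theorem~\ref{thm:equiv} and translate it syntactically via $\llangle 1\rrangle\Next\mapsto\almost\Next$ and $\llangle 1,2\rrangle\Next\mapsto\positive\Next$. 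The two arguments are really the same: the paper's sets $\Psi_i$ are exactly the images under your map $g$ of the $\catl$ formulas of nesting depth $\leq i$ used in Theorem~\ref{thm:equiv}, and the correctness of both hinges on the same observation (which you make explicit and the paper leaves implicit) that Points~1 and~3 of Lemma~\ref{lem:transl} hold with an arbitrary state formula in place of the atomic proposition $q$.
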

\begin{proof}
Given an MDP with $n$ states, it follows from the proof of Theorem~\ref{thm:equiv} for the combined-simulation game that the $n$-step approximation 
$\preccurlyeq_{C}^n$ is exactly the same as $\preccurlyeq_{C}$. We define a sequence $\Psi_0, \Psi_1, \ldots, \Psi_n$ of sets of formulas of $\qatl$ with the property that $s \preccurlyeq_{C}^i t$ iff every formula $\psi \in \Psi_i$ that is true in $s$ is also true in $t$. We denote by $\boolc(\Psi)$ all the formulas that consist of disjunctions and conjunctions of formulas in $\Psi$. We assume that $\boolc(\Psi)$ does not contain repeated elements, therefore from finiteness of $\Psi$ follows finiteness of $\boolc(\Psi)$. We define $\Psi_0 = \boolc(\{q, \neg q \mid q \in \ap\})$, and for all $0 \leq i < n$ we define $\Psi_{i+1} = \boolc(\{\Psi_i \cup \{ \positive (\Next \psi), \almost (\Next \psi) \mid \psi \in \Psi_i\}\})$. The formulas in $\Psi_0, \Psi_1, \ldots, \Psi_n$ provide witnesses that for all $0 \leq i \leq n$ we have that $\preccurlyeq_{Q} \subseteq \preccurlyeq_{C}^i$, in particular we have that $\preccurlyeq_{Q} \subseteq \preccurlyeq_{C}$.
\qed
\end{proof}

\begin{theorem}
\label{thm:catlqatl}
For all MDPs $\game$ and $\game'$ we have $\preccurlyeq_{Q} \: = \: \preccurlyeq_{C}$.
\end{theorem}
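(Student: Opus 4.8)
The plan is to read off Theorem~\ref{thm:catlqatl} as the conjunction of the two inclusions just established, so that no genuinely new argument is needed. The equality $\preccurlyeq_{Q} = \preccurlyeq_{C}$ splits into $\preccurlyeq_{C} \subseteq \preccurlyeq_{Q}$ and $\preccurlyeq_{Q} \subseteq \preccurlyeq_{C}$. The first inclusion is exactly Lemma~\ref{lem:catltoqatl}, proved contrapositively by exhibiting a translation $f : \qatl \to \catl$ that, by structural induction on formulas, sends every distinguishing $\qatl$ formula to a distinguishing $\catl$ formula. The second inclusion is the preceding (unnamed) lemma, which in the opposite direction builds a finite hierarchy $\Psi_0, \dots, \Psi_n$ of sets of $\qatl$ formulas whose $i$-th level witnesses the $i$-step approximation $\preccurlyeq_{C}^{i}$ of combined simulation, together with the fact (from the proof of Theorem~\ref{thm:equiv}) that on an $n$-state MDP the $n$-step approximation already coincides with $\preccurlyeq_{C}$. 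Composing the two gives $\preccurlyeq_{Q} = \preccurlyeq_{C}$.

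I would therefore write the proof of the theorem in one line, citing the two lemmas. The only point worth flagging --- and it is already discharged inside Lemma~\ref{lem:catltoqatl} rather than here --- is the translation of $\almost(\varphi_1 \until \varphi_2)$: the nested greatest/least fixpoint characterization~(\ref{eq:fix}) is unrolled, using that the iteration on a finite MDP stabilizes within $n$ outer and $n$ inner steps, into the bounded family $\{\wt{\phi}_i, \phi_{i,j}\}$ of nested $\catl$ formulas, each occurrence of $\apre$ being replaced by $F_{\apre}$ with correctness supplied by Lemma~\ref{lem:comp}. The remaining cases (atomic propositions, Boolean connectives, and the $\Next$ and weak-until cases for $\almost$ and $\positive$, handled by Lemmas~\ref{lem:transl}, \ref{lem:wreltrans}, and \ref{lem:transl_w}) are routine bookkeeping, so I anticipate no real obstacle in assembling the theorem from its components.
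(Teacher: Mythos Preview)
Your proposal is correct and matches the paper exactly: the theorem is stated in the paper without its own proof, being the immediate conjunction of Lemma~\ref{lem:catltoqatl} ($\preccurlyeq_{C}\subseteq\preccurlyeq_{Q}$) and the unnamed lemma just above ($\preccurlyeq_{Q}\subseteq\preccurlyeq_{C}$). Your summary of how those two lemmas are proved is also accurate.
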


\begin{theorem}
\label{thm:extension}
For all MDPs $\game$ and $\game'$ we have $\preccurlyeq_{Q}^{*} \: = \: \preccurlyeq_{Q}$
\end{theorem}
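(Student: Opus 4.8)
The plan is to prove the two inclusions separately. The inclusion $\preccurlyeq_{Q}^{*} \subseteq \preccurlyeq_{Q}$ is immediate: $\qatl$ is syntactically a fragment of $\qatl^*$, so more formulas have to be preserved for $\preccurlyeq_{Q}^{*}$ than for $\preccurlyeq_{Q}$, and hence $\preccurlyeq_{Q}^{*}$ is the finer (smaller) relation. The content of the theorem is therefore the reverse inclusion $\preccurlyeq_{Q} \subseteq \preccurlyeq_{Q}^{*}$. By Theorem~\ref{thm:catlqatl} we have $\preccurlyeq_{Q} = \preccurlyeq_{C}$ on the two-player interpretations, and by Theorem~\ref{thm:equiv} we additionally have $\preccurlyeq_{C} = \preccurlyeq_{C}^{*} = \maxqual$; so it suffices to show $\maxqual(\wh\game,\wh\game') \subseteq \preccurlyeq_{Q}^{*}(\game,\game')$, or equivalently, taking the counterpositive, that any two MDP states separated by a $\qatl^*$ state formula are already separated by some $\catl$ formula on the two-player interpretation.

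To establish this I would extend the translation of Lemma~\ref{lem:catltoqatl} from $\qatl$ to $\qatl^*$, i.e. construct a map $f^{*}:\qatl^{*}\to\catl$ such that whenever a $\qatl^{*}$ formula $\psi$ holds in $s$ but not in $s'$, the $\catl$ formula $f^{*}(\psi)$ holds in $s$ but not in $s'$, and then invoke Theorem~\ref{thm:catlqatl}. The construction goes by structural induction, replacing every maximal subformula in the scope of a path quantifier by its $\catl$ counterpart; the atomic and Boolean cases and the recursion on state subformulas inside path formulas are routine, exactly as in Lemma~\ref{lem:catltoqatl}. The real work is to translate $\almost(\varphi)$ and $\positive(\varphi)$ for an \emph{arbitrary} path formula $\varphi$, rather than a single temporal operator applied to atomic propositions as handled in Lemmas~\ref{lem:transl}, \ref{lem:wreltrans}, and~\ref{lem:transl_w}. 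Here I would use the classical fact that qualitative (almost-sure, resp.\ positive) satisfaction of an $\omega$-regular path property over a finite MDP is a purely graph-theoretic question, decidable by taking the product of the MDP with a deterministic automaton for $\varphi$ and performing an end-component / nested-fixpoint analysis built from the $\apre$ operator --- Equation~\eqref{eq:fix} being the special case $\varphi = q\until r$. Each such fixpoint over the finite product can then be unrolled into a nested $\catl$ formula over the original game, using the $F_{\apre}$-to-$\catl$ translation of Lemma~\ref{lem:comp} together with a finite unrolling exactly as in the proof of Lemma~\ref{lem:catltoqatl}; the automaton component being deterministic and finite, it can be compiled into the nesting structure of the formula.

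The main obstacle is precisely this last step for nested temporal operators: one cannot simply push $\almost$ and $\positive$ inward past the temporal operators, since e.g.\ $\almost(\varphi_1\wedge\varphi_2)$ is \emph{not} equivalent to $\almost(\varphi_1)\wedge\almost(\varphi_2)$ as state formulas, so a naive induction on the shape of $\varphi$ fails; one is forced to route through the automaton product and end-component machinery, and the delicate bookkeeping is to turn the resulting nested fixpoints over the product MDP back into a single $\catl$ formula over the original two-player game, where, as in Lemma~\ref{lem:catltoqatl}, one only needs the translation to preserve separability and so the unrolling depth may depend on the sizes of $\game$, $\game'$ and of the automaton for $\varphi$. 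An alternative, more semantic route would run the structural induction directly against the game characterization, transferring Player-1 strategies between $\wh\game$ and $\wh\game'$ via Lemma~\ref{lem:transl_one} (which also applies to alternating games): this handles $\positive(\varphi)$ smoothly, as it behaves like $\llangle 1,2\rrangle(\varphi)$ for which Lemma~\ref{lem:transl_one}(2) applies, but for $\almost(\varphi)$ the play-by-play matching provided by combined simulation does not respect measure, so one would again have to fall back on the graph-theoretic characterization of almost-sure satisfaction to close the gap.
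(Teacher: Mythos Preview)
Your proposal is correct and follows essentially the same route as the paper's (sketched) proof: convert the path formula under a $\qatl^*$ quantifier into a deterministic parity automaton, express the almost-sure/positive winning set in the product as a nested $\mu$-calculus fixpoint, and then unroll that fixpoint into a $\catl$/$\qatl$ formula via the technique of Lemma~\ref{lem:catltoqatl}. The only cosmetic difference is that the paper lands directly in $\qatl$ (citing~\cite{CDFL09,CdAH11}) whereas you land in $\catl$ and then invoke Theorem~\ref{thm:catlqatl}, which is immaterial since $\preccurlyeq_C=\preccurlyeq_Q$.
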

\begin{proof} {\em (Sketch).}
We need to show that if a $\qatl^*$ formula distinguishes two states,
then there is a $\qatl$ formula that also distinguishes them.
The basic idea is similar to the proof of~\cite[Theorem~7.1, assertion~2]{CDFL09}.
We first construct a deterministic parity automata given the formula in 
$\qatl^*$, and the almost-sure or positive solutions for MDPs with parity 
objectives can be encoded as a $\mu$-calculus formula~\cite{CdAH11}.
The translation of $\mu$-calculus formulas to a $\qatl$ formula is done 
as in Lemma~\ref{lem:catltoqatl}.
\qed
\end{proof}

\begin{theorem}
Given an MDP the relation $\preccurlyeq^*_{Q}$ can be computed in 
quadratic time using discrete graph theoretic algorithms.
\end{theorem}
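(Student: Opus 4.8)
The plan is to chain together the results already established in the excerpt. By Theorem~\ref{thm:extension} we have $\preccurlyeq_{Q}^{*} = \preccurlyeq_{Q}$, and by Theorem~\ref{thm:catlqatl} we have $\preccurlyeq_{Q} = \preccurlyeq_{C}$ on any MDP (where $\preccurlyeq_{C}$ here is understood on the two-player interpretation $\wh{\game}$, $\wh{\game}'$ of the MDPs, as spelled out at the start of Section~\ref{sec:mdplogic}). Finally, by Theorem~\ref{thm:equiv} we have $\preccurlyeq_{C} = \preccurlyeq_{C}^{*} = \qual_{\max}$ on the two-player interpretations, and by Theorem~\ref{thm:quadratic} the relation $\qual_{\max}$ on two-player games can be computed in quadratic time using discrete graph theoretic algorithms. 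So the first part of the argument is simply: $\preccurlyeq_{Q}^{*}(\game,\game') = \qual_{\max}(\wh{\game},\wh{\game}')$, and the latter is computable in quadratic time.

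The remaining point to check is that passing from an MDP $\game$ to its two-player interpretation $\wh{\game}$ does not blow up the size: by construction $\wh{\game}$ has the same state set $\states_1 \cup \states_P$ as $\game$, the action set grows only by the single action $\bot$, and the transition relation is essentially the support graph of the probabilistic transitions together with the deterministic Player-1 transitions. Hence $|\wh{\game}|$ is linear in $|\game|$, and forming the disjoint union of $\wh{\game}$ and $\wh{\game}'$ keeps everything linear. Building the combined-simulation game $\game^{\qual}$ from these then gives, as noted in the remark following Theorem~\ref{thm:equiv}, a game quadratic in the input size (the action set being essentially constant for MDPs arising in practice, or in any case the product construction is quadratic), on which safety games are solved in linear time in the size of that game by the attractor computation of~\cite{Immerman81,Beeri}. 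This yields the overall quadratic bound.

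I would therefore structure the proof as: (i) invoke Theorem~\ref{thm:extension} and Theorem~\ref{thm:catlqatl} to reduce $\preccurlyeq_{Q}^{*}$ on $\game,\game'$ to $\qual_{\max}$ on $\wh{\game},\wh{\game}'$; (ii) observe the two-player interpretation is constructible in linear time and has size linear in the MDP; (iii) apply Theorem~\ref{thm:quadratic} to $\wh{\game},\wh{\game}'$. The only mild subtlety, and the one step deserving an explicit sentence, is (ii): confirming that $\wh{\game}$ really is just a graph-sized object with no hidden dependence on the transition probabilities (which is exactly the point of the qualitative setting — only the supports matter), so that the quadratic bound of Theorem~\ref{thm:quadratic}, stated for two-player games, transfers verbatim to MDPs. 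Everything else is a direct citation of the preceding theorems, so there is no real obstacle.
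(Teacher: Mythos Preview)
Your proposal is correct and matches the paper's own proof, which simply reads ``Follows directly from Theorems~\ref{thm:quadratic}, \ref{thm:catlqatl}, and~\ref{thm:extension}.'' You additionally spell out the (tacit but necessary) observation that the two-player interpretation $\wh{\game}$ has size linear in the MDP $\game$, which is a helpful clarification the paper leaves implicit.
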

\begin{proof}
Follows directly from Theorems~\ref{thm:quadratic}, \ref{thm:catlqatl}, and \ref{thm:extension}.
\hfill
\qed
\end{proof}

%%% Local Variables: 
%%% mode: latex
%%% TeX-master: "main"
%%% End: 

%!TEX root = main.tex

\section{CEGAR for Combined Simulation}
\label{sec:cegar}
In this section we present a CEGAR approach for the computation of
combined simulation.

\subsection{Simulation Abstraction and Alternating-Simulation Abstraction}
\label{sec:abstraction}

\smallskip\noindent\textbf{Abstraction.}
An \emph{abstraction} of a game consists of a partition of the game graph
such that in each partition the atomic proposition labeling match for all 
states. 
Given an abstraction of a game, the abstract game can be defined by  
collapsing states of each partition and redefining the action-available and transition functions.
The redefinition of the action-available and transition functions can 
either increase or decrease the power of the players.
If we increase the power of Player~1 and decrease the power of Player~2,
then the abstract game will be in alternating simulation with the original game,
and if we increase the power of both players, then 
the abstract game will simulate the original game.
We now formally define the partitions, and the two abstractions.

\smallskip\noindent\textbf{Partitions for abstraction.}
A \emph{partition} of a  game $\game=(\states, \act, \av, \trans, \lab, s_0)$ is an equivalence relation 
$\Part=\set{\pi_1,\pi_2,\ldots,\pi_k}$ on $\states$ such that:
(i)~for all $1\leq i \leq k$ we have $\pi_i \subseteq \states$ and for all $s,s'\in \pi_i$ we have $\lab(s)=\lab(s')$ 
(labeling match);
(ii)~$\bigcup_{1\leq i\leq k} \pi_i = \states$ (covers the state space); and 
(iii)~for all $1\leq i,j \leq k$, such that $i\neq j$ we have $\pi_i \cap \pi_j=\emptyset$ (disjoint).
Note that in alternating games Player~1 and Player~2 states are distinguished by proposition $\turn$, so they belong to 
different partitions.

\smallskip\noindent\textbf{Simulation abstraction.}
Given a two-player game $\game=(\states, \act, \av, \trans, \lab, s_0)$ and 
a partition $\Part$ of $\game$, 
we define the \emph{simulation abstraction of} $\game$ as a two-player game 
$\simabs{\game}{\Part}  = (\ov{\states}, \act, \ov{\av}, \ov{\trans}, \ov{\lab}, \ov{s}_0)$, 
where
\begin{itemize}
\item $\ov{\states} = \Part$: the partitions in $\Part$ are the states of the abstract game.
\item For all $\pi_i \in \Part$ we have $\ov{\av}(\pi_i) = \bigcup_{s\in \pi_i} \av(s)$: 
the set of available actions is the union of the actions available to the states in the partition,
and this gives more power to Player~1.
\item For all $\pi_i \in \Part$ and $a \in \ov{\av}(\pi_i)$ we have 
$\ov{\trans}(\pi_i,a) = \set{ \pi_j \mid \exists s \in \pi_i: \ ( a \in \av(s) \wedge \exists s'\in \pi_j: 
\ s'\in \trans(s,a))}$: 
there is a transition from a partition $\pi_i$ given an action $a$ to a partition $\pi_j$ 
if some state $s \in \pi_i$ can make an $a$-transition to some state in $s' \in \pi_j$, and
this gives more power to Player~2.
\item For all $\pi_i \in \Part$ we have $\ov{\lab}(\pi_i)=\lab(s)$ for some $s\in \pi_i$:
the abstract labeling is well-defined, since all states in a partition are labeled
by the same atomic propositions.
\item $\ov{s}_0$ is the partition in $\Part$ that contains state $s_0$.
\end{itemize}

\smallskip\noindent\textbf{Alternating-simulation abstraction.}
Given a two-player game $\game=(\states, \act, \av, \trans, \lab, s_0)$ and 
a partition $\Part$ of $\game$, 
we define the \emph{alternating-simulation abstraction of} $\game$ as a two-player game 
$\altabs{\game}{\Part}  = (\wt{\states}, \act, \wt{\av}, \wt{\trans}, \wt{\lab}, \wt{s}_0)$, where
\begin{itemize}
\item (i)~$\wt{\states} = \Part$;
(ii)~for all $\pi_i \in \Part$ we have $\wt{\av}(\pi_i) = \bigcup_{s\in \pi_i} \av(s)$;
(iii)~for all $\pi_i \in \Part$ we have $\wt{\lab}(\pi_i)=\lab(s)$ for some $s\in \pi_i$;
(iv)~$\wt{s}_0$ is the partition in $\Part$ that contains state $s_0$
(as in the case of simulation abstraction).
%the set of available actions is the union of the actions available to the states in the partition,
%and this gives more power to Player~1.
\item For all $\pi_i \in \Part$ and $a \in \wt{\av}(\pi_i)$ we have 
$\wt{\trans}(\pi_i,a) = \set{ \pi_j \mid \forall s \in \pi_i: \ (a \in \av(s) \wedge \ \exists s'\in \pi_j: 
\ s'\in \trans(s,a))}$: 
there is a transition from a partition $\pi_i$ given an action $a$ to a partition $\pi_j$ 
if all states $s \in \pi_i$ can make an $a$-transition to some state in $s' \in \pi_j$, and
this gives less power to Player~2.
For technical convenience we assume $\wt{\trans}(\pi_i,a)$ is non-empty.
\end{itemize}
The following proposition states that (alternating-)simulation abstraction of a game $\game$ is in
(alternating-)simulation with $\game$.
\begin{proposition}
\label{proposition:abs}
For all partitions $\Part$ of a two-player game $\game$ we have:
(1)~$\game \altgame  \altabs{\game}{\Part}$; and 
(2)~$\game \simgame \simabs{\game}{\Part}$.
\end{proposition}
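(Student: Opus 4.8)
The plan is to exhibit, in each case, the block-membership relation as a witnessing simulation or alternating simulation. Concretely, I would set $R = \{(s,\pi) \mid s \in \states,\ \pi \in \Part,\ s \in \pi\}$ and show that $R$ is a simulation from $\game$ to $\simabs{\game}{\Part}$ for part~(2), and an alternating simulation from $\game$ to $\altabs{\game}{\Part}$ for part~(1). Since $\ov{s}_0$ (resp.\ $\wt{s}_0$) is by definition the block containing $s_0$, we have $(s_0,\ov{s}_0)\in R$ and $(s_0,\wt{s}_0)\in R$; hence membership of these pairs in $\simul_{\max}$ and $\alt_{\max}$ — that is, $\game\simgame\simabs{\game}{\Part}$ and $\game\altgame\altabs{\game}{\Part}$ — follows once $R$ is shown to satisfy the defining clauses of the two relations.

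The proposition-match condition is immediate from clause~(i) of the definition of a partition, which forces all states of a block to carry the same label, precisely matching how $\ov{\lab}$ and $\wt{\lab}$ are defined. For the step-wise conditions the key move in both cases is to let the abstract player copy the concrete action, i.e.\ choose $a' = a$. For part~(2): given $(s,\pi)\in R$, $a\in\av(s)$ and $t\in\trans(s,a)$, I would take $\pi'$ to be the block of $t$; since $a\in\av(s)$ and $s\in\pi$ we get $a\in\ov{\av}(\pi)$, and the existential form of $\ov{\trans}$ (namely that some state of $\pi$ has an $a$-edge into $\pi'$) is witnessed by $s$ itself, so $\pi'\in\ov{\trans}(\pi,a)$ and $(t,\pi')\in R$. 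For part~(1): given $(s,\pi)\in R$ and $a\in\av(s)$, again $a\in\wt{\av}(\pi)$; for every $\pi'\in\wt{\trans}(\pi,a)$ the universal form of $\wt{\trans}$ says in particular that $s$ (being in $\pi$) has an $a$-successor inside $\pi'$, and that successor is the required $t\in\trans(s,a)$ with $(t,\pi')\in R$ (if $\wt{\trans}(\pi,a)$ happens to be empty the condition is vacuous, though the definition assumes it is not).

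I do not expect any genuine obstacle: both statements are a direct unfolding of the definitions. The only point that deserves a little care is the mismatch between the quantifier structure of $\ov{\trans}$ (existential over states of a block) and that of $\wt{\trans}$ (universal over states of a block): the former is exactly what makes the simulation abstraction grant Player~2 at least as much power, while the latter is exactly what allows pulling back every abstract successor block to a concrete successor in the alternating-simulation abstraction. So it is worth stating explicitly at each step which clause of the respective transition-function definition is being invoked.
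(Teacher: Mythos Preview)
Your proposal is correct and is exactly the standard argument one would give. The paper itself states Proposition~\ref{proposition:abs} without proof, treating it as an immediate consequence of the definitions of the two abstractions; your block-membership relation $R=\{(s,\pi)\mid s\in\pi\}$ is the canonical witness, and your verification of the two step-wise clauses (using $a'=a$ and unfolding the existential form of $\ov{\trans}$ versus the universal form of $\wt{\trans}$) is precisely the intended reasoning.
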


\begin{example}
Consider a two-player interpretation of the MDP~$\game_2$ from Figure~\ref{fig:mdps}.
The coarsest partition of $\game_2$ is $\Part=\set{\ppart_0, \ppart_1, \ppart_2}$, where $\ppart_0=\set{s^2_0, s^2_1, s^2_3}, \ppart_1 = \set{s^2_2, s^2_4, s^2_6}, \ppart_2 = \set{s^2_5}$.
The alternating-simulation abstraction and the simulation abstraction of $\Part$ are depicted in Figure~\ref{fig:abs}.
% \begin{itemize}
% \item The simulation abstraction $\simabs{\game_2}{\Part}  = (\ov{\states}, \act, \ov{\av}, \ov{\trans}, \ov{\lab}, \ppart_0)$ 
% is as follows: the states are $\ov{\states} = \Part$ and the set of actions is $A=\set{a,b,\bot}$.
%  Action $a$ is available in state $\ppart_0$, action $b$ is available in state $\ppart_2$ and action $\bot$ is available in state $\ppart_1$.
%  The transition function $\ov{\trans}$ is defined as follows: $\ov{\trans}(\ppart_0,a)
%  =\set{\ppart_1}, \ov{\trans}(\ppart_1,\bot)=\set{\ppart_0, \ppart_2}$ and $\ov{\trans}(\ppart_2,b) = \set{\ppart_1}$. 
% The labeling is: $\ov{\lab}(\ppart_0)=\emptyset, \ov{\lab}(\ppart_1)=\set{\turn}, \ov{\lab}(\ppart_2)=\set{p}$.
% \item The alternating-simulation abstraction $\altabs{\game_2}{\Part}  = (\wt{\states}, \act, \wt{\av}, \wt{\trans}, \wt{\lab}, \ppart_0)$ differs only
% in the transition function: $\wt{\trans}(\ppart_0,a) =\set{\ppart_1}, \wt{\trans}(\ppart_1,\bot) =\set{\ppart_2}$,  $\wt{\trans}(\ppart_2,b) = \set{\ppart_1}$.\qed
% \end{itemize}
\qed\end{example}
\begin{figure}[htb]
\centering
  \scalebox{0.8}{%!TEX root = ../main.tex

\begin{tikzpicture}[auto, node distance=1.5cm,->, semithick,initial text=, align=left]
\tikzstyle{nstate}=[circle, draw]
\tikzstyle{pstate}=[rectangle, draw]
\tikzstyle{initstate}=[nstate,initial]
\tikzstyle{transition}=[->,>=stealth']

\node [](p1){
	
	\begin{tikzpicture}[semithick]
	\node [initstate, initial left] (a0) {$\ppart_0$};
	\node [pstate, right of=a0] (a1) {$\ppart_1$};
	\node [nstate, fill=gray!40, right of=a1] (a2) {$\ppart_2$};

	\node [left=0.4cm of a0]   {$\altabs{\game_2}{\Part}$};

	\draw[->, semithick]
        (a0) edge node[above] {$a,b$} (a1)
        (a1) edge[bend right] node[below] {$\bot$} (a2)
        (a2) edge[bend right] node[above] {$b$} (a1);
	\end{tikzpicture}
};

\node[right of=g1,xshift=150] (p2){
	\begin{tikzpicture}[semithick]
	\node [initstate, initial left] (s0) {$\ppart_0$};
	\node [pstate, right of=a0] (s1) {$\ppart_1$};
	\node [nstate, fill=gray!40, right of=a1] (s2) {$\ppart_2$};

	\node [left=0.4cm of s0]   {$\simabs{\game_2}{\Part}$};
	\node [below=-1mm of s1]   {$\bot$};

	\draw[->, semithick]
        (s0) edge[bend left] node[above] {$a,b$} (s1)
        (s1) edge[bend right] node[above] {} (s2)
        (s1) edge[bend left] node[above] {} (s0)
        (s2) edge[bend right] node[above] {$b$} (s1);
        \draw[-] (0.9,-0.3) arc (220:325:0.7);
	\end{tikzpicture}

};

\end{tikzpicture}

%%% Local Variables: 
%%% mode: latex
%%% TeX-master: "../main"
%%% End:  }
  \caption{Alternating-simulation and simulation abstractions of $\game_2$ from Figure \ref{fig:mdps}.}
  \label{fig:abs}
\end{figure}
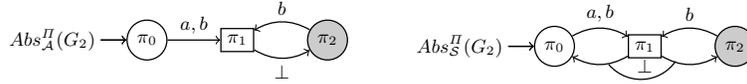

\subsection{Sound Assume-Guarantee Rule}
In this section we  present the sound assume-guarantee rule for
the combined-simulation problem.
To achieve this we first need an extension of the notion of combined-simulation game.

\begin{comment}
In Section \ref{sec:alt_sim} we reduced the problem of checking if $\qualsim{\game}{\game'}$ holds to a combined simulation game between the adversary and the proponent.
In the combined game, the adversary could repeatedly choose to play in either the alternating-simulation or the simulation gadget between $\game$ and $\game'$.
We extend this idea to a \emph{dual simulation game}, where the sub-games are played using two games $\game^\txtalt, \game^\txtsim$  instead of $\game$, 
where $\game^\txtalt,\game^\txtsim$ have the same states and labeling, but may have different  available-actions and transitions functions.
In this dual simulation game,  the alternating-simulation gadget is played between $\game^\txtalt$ and $\game'$, while the simulation gadget is played between $\game^\txtsim$ and $\game'$.
\end{comment}

\smallskip\noindent\textbf{Modified combined-simulation games.}
Consider games $\game^\txtalt=(\states, \act, \trans^\txtalt, \av^\txtalt,  \lab, s_0)$, 
$\game^\txtsim=(\states, \act, \trans^\txtsim, \av^\txtsim, \lab, s_0)$ and 
$\game'=(\states', \act, \trans', \av', \lab', s'_0)$.
The \emph{modified simulation game} $\game^{\M} = (\states^\M,\act^\M,\av^\M, \trans^\M,\lab^\M,s^\M_0)$ 
is defined exactly like the combined simulation game given $\game^\txtalt$ and $\game'$, 
with the exception that the step-wise simulation gadget is defined using the transitions of 
$\game^\txtsim$ instead of $\game^\txtalt$.
Formally, we change the transitions as follows:

\begin{itemize}
\item \emph{Checking step-wise simulation conditions.}
Transition (a) is redefined: 
for a state $(s,s',\SI,2)$ we have only one action $\bot$ available for Player~1 and 
we have $\trans^{\M}((s,s',\SI,2),\bot)=\set{(t,s',\SI,1) \mid \exists a \in \av^\txtsim(s): \ t \in \trans^\txtsim(s,a)}$.
\end{itemize}
We write $(\game^\txtalt \otimes \game^\txtsim) \dualgame \game'$ 
if and only if $(s_0, s'_0) \in  \llbracket \llangle 1 \rrangle (\Box \neg p) \rrbracket_{\game^\M}$.

%If both $\game^\txtalt$ and  $\game^\txtsim$ give more power to the adversary than $\game$  in the respective gadgets and the proponent still 
%has  a winning strategy, then the proponent has also a winning strategy for the combined game between $\game$ and $\game'$, so  $\game \qualgame \game'$ holds.
%The dual simulation game can be solved significantly faster than the combined game if we are able to find suitable games $\game^\txtalt$ and $\game^\txtsim$ that are smaller than $\game$.
\begin{proposition}
\label{proposition:mcsg}
Let $\game, \game', \game^\txtalt, \game^\txtsim$ be  games such that 
$\game \altgame \game^\txtalt$ and $\game \simgame \game^\txtsim$.
Then $(\game^\txtalt \otimes \game^\txtsim) \dualgame \game'$ 
implies $\game \qualgame \game'$.
\end{proposition}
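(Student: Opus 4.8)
The plan is to exhibit a single relation $R\subseteq\states\times\states'$ containing $(s_0,s_0')$ that is a combined simulation from $\game$ to $\game'$; since $\qual_{\max}$ is the largest combined simulation, this yields $(s_0,s_0')\in\qual_{\max}$, i.e.\ $\game\qualgame\game'$. First I would put the three hypotheses into a usable form. From $\game\altgame\game^\txtalt$ and $\game\simgame\game^\txtsim$ I would fix one relation $\rho$ between the states of $\game$ and the (common) states of $\game^\txtalt,\game^\txtsim$ that is simultaneously an alternating simulation from $\game$ into $\game^\txtalt$ and a simulation from $\game$ into $\game^\txtsim$, and that contains the initial pair; in the situation the rule is meant for, $\game^\txtalt=\altabs{\game}{\Part}$ and $\game^\txtsim=\simabs{\game}{\Part}$, this $\rho$ is just the map $s\mapsto\pi(s)$ underlying Proposition~\ref{proposition:abs}. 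From $(\game^\txtalt\otimes\game^\txtsim)\dualgame\game'$ I would take the proponent's winning region $W$ in the modified game $\game^\M$ (safety games are memoryless determined); by the way $\game^\M$ is built, the restriction of $W$ to product states $(u,s')$ satisfies $\lab(u)=\lab'(s')$ and is closed both under matching a $\game^\txtsim$-move from $u$ by a $\game'$-move from $s'$ (the $\SI$-gadget) and under the alternating matching of a $\game^\txtalt$-action from $u$ by a $\game'$-action from $s'$ (the $\AL$-gadget).

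I would then define $(s,s')\in R$ iff there is a state $u$ with $(s,u)\in\rho$ and $(u,s')\in W$. The initial pair lies in $R$ with witness the common initial state of $\game^\txtalt,\game^\txtsim$, using the base cases of the three hypotheses, and proposition match is immediate from $\lab(s)=\lab(u)=\lab'(s')$. For the two step-wise conditions the idea is to drive the appropriate gadget of $\game^\M$ using the challenge received in $\game^\qual$, after relaying that challenge through $\rho$. For the alternating-simulation condition with challenge $a\in\av(s)$: $\rho$ supplies $b\in\av^\txtalt(u)$; play $b$ as the adversary's move in the $\AL$-gadget of $\game^\M$ and let the proponent's winning strategy answer $a'\in\av'(s')$, which $R$'s proponent echoes; on the adversary's reply $t'\in\trans'(s',a')$, play $t'$ in $\game^\M$ and read off the proponent's answer $v\in\trans^\txtalt(u,b)$ with $(v,t')\in W$; finally $\rho$ turns $v$ back into a matching $t\in\trans(s,a)$ with $(t,v)\in\rho$, so $(t,t')\in R$ with witness $v$. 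The simulation condition is symmetric through the $\SI$-gadget: the challenge $a\in\av(s)$, $t\in\trans(s,a)$ is first turned by $\rho$ into $b\in\av^\txtsim(u)$, $v\in\trans^\txtsim(u,b)$ with $(t,v)\in\rho$, this move is played against $\game^\M$ to obtain $a',t'$ with $(v,t')\in W$, and again $(t,t')\in R$ with witness $v$.

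The main obstacle is exactly this invariant maintenance: the single witness $u$ carried by $R$ must serve both gadgets, so after every move one must recover a fresh witness that is again $\rho$-related to the new state of $\game$ and lies in $W$ together with the new state of $\game'$. Here it is essential that $\rho$ is itself a simulation and an alternating simulation, not merely a relation sandwiched in $\simul_{\max}\cap\alt_{\max}$: that is what forces the successor $v$ matched on the $\game^\txtalt$/$\game^\txtsim$ side to be $\rho$-related to the matched $\game$-successor $t$, hence reusable as the next witness — if $\game^\txtalt$ and $\game^\txtsim$ did not share a state space and admit such a common $\rho$, the two components of $R$ would decouple and the argument would fail (indeed $\alt_{\max}\cap\simul_{\max}$ is strictly coarser than $\qual_{\max}$, by the example of Figure~\ref{fig:combined}). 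Once this invariant is verified, $R$ is a combined simulation and the proof is complete; as in the rest of Section~\ref{sec:alt_sim}, the argument is insensitive to whether the games are alternating, which is the form needed for MDP interpretations.
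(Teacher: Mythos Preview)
Your approach is the paper's one-line idea (``the adversary is stronger in $\game^\M$ than in $\game^\qual$'') made precise: you build a candidate combined simulation $R$ from $\game$ to $\game'$ by composing a witnessing relation $\rho$ with the proponent's winning region $W$ in $\game^\M$, and you check the two gadgets step by step. That verification is correct.

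What you have also done---and this goes beyond the paper---is to isolate a hypothesis that the paper's sketch leaves implicit. The stated premises $\game\altgame\game^\txtalt$ and $\game\simgame\game^\txtsim$ give two separate relations $\alt_{\max}$ and $\simul_{\max}$, not a single $\rho$ that is simultaneously an alternating simulation into $\game^\txtalt$ and a simulation into $\game^\txtsim$; the intersection $\alt_{\max}\cap\simul_{\max}$ need not be closed under either step condition, so it cannot play the role of $\rho$. This is not a cosmetic point: taking $\game^\txtalt=\game^\txtsim=\game'$ to be the right-hand game of Figure~\ref{fig:combined} and $\game$ the left-hand one, all three hypotheses of the proposition hold (the third reduces to $\game'\qualgame\game'$, which is trivial), yet the figure shows $\game\not\qualgame\game'$. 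So the proposition is actually false at the stated generality.

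What you prove is the correct statement: under the extra hypothesis that a common $\rho$ exists, the conclusion follows, and this hypothesis is met in the only place the paper uses the proposition (the assume-guarantee rule, where $\rho$ is the partition map $(s_1,s_2)\mapsto(s_1,\pi(s_2))$ underlying Proposition~\ref{proposition:abs}). Your write-up should make this explicit: state the additional assumption on $\rho$ up front rather than appearing to derive it from $\altgame$ and $\simgame$, and note that the sound rule in the next proposition is unaffected.
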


The key proof idea for the above proposition is as follows: 
if $\game \altgame \game^\txtalt$ and $\game \simgame \game^\txtsim$,
then in the modified combined-simulation game $\game^\M$ 
the adversary (Player~2) is stronger than in the combined-simulation game 
$\game^\qual$. Hence winning in $\game^\M$ for the proponent (Player~1)
implies winning in $\game^\qual$ and gives the desired result of
the proposition.

\smallskip\noindent{\bf Sound assume-guarantee method.}
Given two games $\game_1$ and $\game_2$, checking whether their parallel composition 
$\game_1 \parallel \game_2$ is in combined simulation with a game $\game'$ 
can be done explicitly by constructing the synchronized product.
The composition, however, may be much larger than the components and thus make the method
ineffective in practical cases.
We present an alternative method that proves combined simulation in a compositional manner, 
by abstracting $\game_2$ with some partition $\Part$ and then composing it with $\game_1$.
The sound assume-guarantee rule follows from Proposition~\ref{proposition:abs} and Proposition~\ref{proposition:mcsg}.

\begin{proposition}[Sound assume-guarantee rule]
Given games $\game_1,\game_2, \game'$, and a partition $\Part$ of $\game_2$, let $\altcomp=\game_1 \parallel \altabs{\game_2}{\Part}$
and   $\simcomp = \game_1 \parallel \simabs{\game_2}{\Part}$.
If $(\altcomp\otimes \simcomp) \dualgame \game'$, then $(\game_1 \parallel \game_2) \qualgame \game'$,
i.e., 
\begin{equation}
  \label{eq:ag_rule}
 \infer{(\game_1 \parallel \game_2) \qualgame \game'}{\altcomp = \game_1 \parallel \altabs{\game_2}{\Part}; \ & \ \simcomp = \game_1 \parallel \simabs{\game_2}{\Part}; \ & \ (\altcomp\otimes \simcomp) \dualgame \game'}  
\end{equation}
\end{proposition}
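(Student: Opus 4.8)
The plan is to derive the rule by a single instantiation of Proposition~\ref{proposition:mcsg}. Put $\game := \game_1 \parallel \game_2$, $\game^\txtalt := \altcomp = \game_1 \parallel \altabs{\game_2}{\Part}$ and $\game^\txtsim := \simcomp = \game_1 \parallel \simabs{\game_2}{\Part}$; all three have the common action set of $\game_1$ and $\game_2$ (abstraction preserves the action set), so these compositions are well defined under the standing non-emptiness conventions for available actions and for $\wt\trans(\pi,a)$. Proposition~\ref{proposition:mcsg} then gives $\game \qualgame \game'$ --- which is exactly the claimed conclusion $(\game_1 \parallel \game_2) \qualgame \game'$, and the displayed inference rule~\eqref{eq:ag_rule} is merely a typographic restatement --- \emph{provided} that $\game \altgame \game^\txtalt$ and $\game \simgame \game^\txtsim$, since the hypothesis $(\altcomp \otimes \simcomp) \dualgame \game'$ is literally $(\game^\txtalt \otimes \game^\txtsim) \dualgame \game'$. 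So the only work is to prove these two refinements.

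Both come from Proposition~\ref{proposition:abs} together with the fact that (alternating) simulation is a precongruence for parallel composition with a fixed first component: if $\game_2 \altgame B$ then $\game_1 \parallel \game_2 \altgame \game_1 \parallel B$, and likewise for $\simgame$. I would prove this directly. For the alternating case take the witnessing alternating simulation $\alt$ from $\game_2$ to $B$ and lift it to $R := \set{((s_1,s_2),(s_1,t_2)) \mid (s_2,t_2) \in \alt}$; the proposition-match condition is immediate because composite labels are unions of component labels and $\alt$ matches labels on the second component. For the step-wise clause, given $a \in \wb\av((s_1,s_2)) = \av_1(s_1) \cap \av_2(s_2)$, the key observation is that the (alternating) simulations produced by Proposition~\ref{proposition:abs} are \emph{action-preserving}: for $\altabs{\game_2}{\Part}$ with relation $\set{(s,\pi)\mid s\in\pi}$ the responder can always reuse the challenger's action $a$, since $\wt\av(\pi) = \bigcup_{s\in\pi}\av(s) \supseteq \av(s_2)$ when $s_2\in\pi$, and the definition of $\wt\trans$ forces, for every $\pi' \in \wt\trans(\pi,a)$, some $t \in \pi' \cap \trans(s_2,a)$. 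Hence in the composite game Player~1 answers $a$ with $a$; any composite $a$-successor $(t_1,t_2')$ of $(s_1,t_2)$ has $t_1 \in \trans_1(s_1,a)$ and a matching $t_2 \in \trans_2(s_2,a)$ with $(t_2,t_2') \in \alt$, so $(t_1,t_2) \in \wb\trans((s_1,s_2),a)$ and the pair lies in $R$. The simulation case is symmetric, using the action-preserving simulation $\set{(s,\pi)\mid s\in\pi}$ from $\game_2$ to $\simabs{\game_2}{\Part}$ and the existential form of the step-wise clause. (Alternatively one may cite the standard compositionality results for alternating refinement, e.g.~\cite{AHKV98,RajeevTomBook}, but the direct argument is short.)

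The main obstacle is precisely this compositionality step. In general an (alternating) simulation is allowed to change the action along a matched transition, which is incompatible with parallel composition because composition \emph{synchronizes} on a common action; the lifting of $\alt$ to $R$ only works because the simulations exhibited in Proposition~\ref{proposition:abs} keep the action fixed. So the plan would make that action-preservation explicit --- either by inspecting the proof of Proposition~\ref{proposition:abs} or by recording it as an auxiliary observation --- before invoking it here. A secondary, purely bookkeeping point is checking that all pairwise available-action intersections arising in $\game_1 \parallel \altabs{\game_2}{\Part}$ and $\game_1 \parallel \simabs{\game_2}{\Part}$, as well as the abstract transitions $\wt\trans(\pi,a)$, are non-empty, which holds by the conventions fixed earlier in the paper. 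Once compositionality is in hand, the result is immediate from Propositions~\ref{proposition:abs} and~\ref{proposition:mcsg}. \qed
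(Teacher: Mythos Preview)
Your proposal is correct and follows the same route the paper takes: the paper's entire argument is the one sentence ``The sound assume-guarantee rule follows from Proposition~\ref{proposition:abs} and Proposition~\ref{proposition:mcsg},'' i.e., exactly the instantiation of Proposition~\ref{proposition:mcsg} with $\game=\game_1\parallel\game_2$, $\game^\txtalt=\altcomp$, $\game^\txtsim=\simcomp$ that you describe. You are in fact more careful than the paper, since you make explicit the compositionality step (lifting $\game_2 \altgame \altabs{\game_2}{\Part}$ and $\game_2 \simgame \simabs{\game_2}{\Part}$ through $\game_1\parallel-$) and correctly isolate why it works here: the witnessing relations $\set{(s,\pi)\mid s\in\pi}$ from Proposition~\ref{proposition:abs} are action-preserving, which is what the synchronizing parallel composition needs.
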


\begin{remark}\label{rem:complete}
%The  sound assume-guarantee rule is also complete, because 
Note that for the trivial partition $\Part$, where every equivalence relation is a singleton, 
the modified combined-simulation game coincides with the combined simulation game.
We will use this fact to argue about completeness our CEGAR approach.
\end{remark}

If the partition $\Part$  is coarse, then the abstractions in the assume-guarantee rule can be smaller than 
$\game_2$ and also their composition with $\game_1$.
As a consequence, combined simulation can be proved faster as compared to explicitly computing the composition. 
In Section~\ref{sec:cegar_alg} we describe how to effectively compute the partitions $\Part$ and refine them 
using CEGAR approach.

\subsection{Counterexamples Analysis}
\label{sec:cex}
If the premise $(\altcomp\otimes \simcomp) \dualgame \game'$ of the assume-guarantee rule~\eqref{eq:ag_rule} 
is not satisfied, then the adversary (Player~2) has a memoryless winning strategy in $\game^\M$,
and the memoryless strategy is the \emph{counterexample}.
To use the sound assume-guarantee rule~\eqref{eq:ag_rule} in a CEGAR loop,
we need analysis of counterexamples.

\smallskip\noindent{\em Representation of counterexamples.}
A counterexample is a memoryless winning strategy for Player~2 
in $\game^\M$. 
Note that in $\game^\M$ Player~2 has a reachability objective,
and thus a winning strategy ensures that the target set is always
reached from the starting state, and hence no cycle can be formed 
without reaching the target state once the memoryless winning strategy
is fixed.
Hence we represent counterexamples as directed-acyclic graphs (DAG), 
where the leafs are the target states
and every non-leaf state has a single successor chosen by the strategy of Player~2 
and has all available actions for Player~1.

\smallskip\noindent{\em Abstract, concrete, and spurious counterexamples.}
Given two-player games $\game_1$ and $\game_2$, let 
$\game=(\game_1 \parallel \game_2)$ be the parallel composition.
Given $\game$ and $\game'$, let $\game^\qual$ be the combined-simulation game 
of $\game$ and $\game'$.
The abstract game $\game^\M$  is the modified combined-simulation 
game of $(\altcomp\otimes \simcomp)$ and $\game'$, where 
$\altcomp = \game_1 \parallel \altabs{\game_2}{\Part}$ and $\simcomp = \game_1 \parallel \simabs{\game_2}{\Part}$.
We refer to a counterexample $\strab_{\abs}$ in $\game^\M$ as \emph{abstract}, and 
to a counterexample $\strab_{\con}$ in $\game^\qual$ as  \emph{concrete}.
An abstract counterexample is \emph{feasible} if we can substitute partitions 
in $\altcomp$ and $\simcomp$  with states of $\game_2$ to obtain a concrete counterexample. 
%%with the same moves for the adversary. 
An abstract counterexample is \emph{spurious} if it is not feasible.

\smallskip\noindent{\em Concretization of counterexamples.}
We follow the approach of~\cite{Henzinger03} to check the feasibility of a 
counterexample by finding a \emph{concretization} function $\conc$ from states 
in $\game^\M$ to a set of states in $\game_2$ that  witness a 
concrete strategy from the abstract strategy.
A state in $\game^\M$ has a component which is a partition for $\game_2$, 
and the concretization  constructs a subset of the partition.
Intuitively, for a state $\ov{s}$ of $\game^\M$ in the counterexample DAG, the concretization 
represents the subset of states of $\game_2$ in the partition where a concrete 
winning strategy exists using the strategy represented by the DAG below the state $\ov{s}$.
Informally, the witness concrete strategy is constructed inductively, going bottom-up 
in the DAG as follows:
(i)~the leaves already represents winning states and hence their concretization 
is the entire partition;
(ii)~for non-leaf states in the DAG of the abstract counterexample, 
the concretization represents the set of states of $\game_2$ of the partition 
which lead to a successor state that belongs to the concretization of the 
successor in the DAG.
An abstract counterexample is feasible, if the concretization of the 
root of the DAG contains the initial state of $\game_2$.

\smallskip\noindent{\em Computation of the concretization.}
Given an abstract counterexample $\strab_{\abs}$ and a state $\ov{s}$ in $\game^\M$,
let $\tsucc(\ov{s})$ be the set of all successor of $\ov{s}$ in $\game^\M$ 
given $\strab_{\abs}$ is 
fixed by Player~2.
The formal description of the concretization is given in Figure~\ref{fig:conc},
where the concretization of a state $\ov{s}$ in the abstract counterexample is computed 
from its successors in the DAG.
We use the notation $\av^1$, $\av^2$, and  $\trans^2$ to represent the action-available
functions of $\game_1$ and $\game_2$, and the transition function of $\game_2$,
respectively.

\smallskip\noindent{\em Illustrative examples.}
We present intuitive description of two representative cases of concretization
from Figure~\ref{fig:conc}:
(1)~Consider a state $\ov{s}=((s_1, \pi_2),s',\AL,2)$ where the abstract counterexample chooses 
the successor $\ov{s}'=((s_1, \pi_2),s',\AL,a,1)$ (intuitively this corresponds to choice of action $a$).
The concretization $\conc(\ov{s}) = \set{s \in \pi_2 \mid a \in \av^2(s) \land
s \in \conc(\ov{s}')}$ is the subset of states in $\pi_2$ where the action $a$ is available and 
$s$ also belongs to the concretization of the successor state $\ov{s}'$.
(2)~For a state $\ov{s}=((s_1,\ppart_2),s',\AL,a,a',1)$, the concretization is 
the set of states where action $a$ is not available or 
all successors given action $a$ belong to the concretization of the successors of $\ov{s}$.

\begin{figure}[h]
  \centering
  $\begin{array}{l l l }
    \ov{s}=((s_1,\pi_2),s') &: \ \ \conc(\ov{s})=&
    \begin{cases}
\pi_2 & \text{$\ov{s}$ is a leaf} \\
\conc(\ov{s}') & \text{otherwise, where }\tsucc(\ov{s})=\{\ov{s}'\} \\[0.5ex]      
    \end{cases} \\[2ex]
    \ov{s}=((s_1,\pi_2),s',\SI,2)&: \ \ \conc(\ov{s})=&\set{s\in \pi_2 \mid \exists a\in \av^1(s_1)\cap\av^2(s): \trans^2(s,a)\cap   \conc(\ov{s}') \neq \emptyset}\\
    				 &                    &\text{ where }\tsucc(\ov{s})=\set{\ov{s}'} \\[2ex]
    \ov{s}=((s_1,\pi_2),s',\SI,1)&: \ \ \conc(\ov{s})=& \displaystyle \bigcap_{\ov{s}'\in \tsucc(\ov{s})} \conc(\ov{s}') \\[2ex]
    \ov{s}=((s_1,\pi_2),s',\AL,2)&: \ \ \conc(\ov{s})=&\{s\in \pi_2 \mid a\in \av^2(s) \land s\in \conc(\ov{s}'),  \} \text{ where }\\[0.5ex]
                        & & \tsucc(\ov{s})=\{\ov{s}'\}\text{ and }\ov{s}'=((s_1,\pi_2),s',\AL,2,a)\\[2ex]
    \ov{s}=((s_1,\pi_2),s',\AL,a,1)&: \ \ \conc(\ov{s})=& \displaystyle \bigcap_{\ov{s}'\in\tsucc(\ov{s})} \conc(\ov{s}') \\[2ex]
    \ov{s}=((s_1,\pi_2),s',\AL,a,a',2)&: \ \ \conc(\ov{s})=& \conc(\ov{s}'), \text{ where }\tsucc(\ov{s})=\{\ov{s}'\} \\[2ex]
     \ov{s}=((s_1,\pi_2),s',\AL,a,a',1)&: \ \ \conc(\ov{s})=& \displaystyle \{s\in \pi_2 \;|\; a \not\in \av^2(s) \vee
\trans^2(s,a)\subseteq  \bigcup_{\ov{s}'\in\tsucc(\ov{s})} \conc(\ov{s}') \}
  \end{array}$
  \caption{Concretization function; $\ov{s}$ is a state in an abstract counterexample.}
  \label{fig:conc}
\end{figure}

\begin{figure}[htb]
  \centering
  \scalebox{1}{\begin{tikzpicture}[auto, node distance=3cm,->, semithick,initial text=, align=left, font=\tiny]
\tikzstyle{state}=[rectangle, draw, inner sep=0.5mm]
\tikzstyle{state2}=[ inner sep=1mm]
\tikzstyle{slab}=[font=\small]
\tikzstyle{initstate}=[state,initial]
\tikzstyle{transition}=[->,>=stealth']
\def\pact{\bot}
\node [initstate, initial left] (t0) {$((s^1_0,\ppart_0),s'_0)$};
\node (t1) [state, below=0.4cm of t0]  {$((s^1_0,\ppart_0),s'_0,\SI,2)$};
\node [state, below=0.4cm of t1] (t2) {$((s^1_1,\ppart_1),s'_0,\SI,1)$};
\node [state, below right=0.5cm of t2] (t3) {$((s^1_1,\ppart_1),s'_1)$};
\node [state2, below=0.4cm of t3] (t4) {$\ldots$};

\node [state, below left=0.5cm of t2] (q3) {$((s^1_1,\ppart_1),s'_4)$};
\node [state2, below=0.4cm of q3] (q4) {$\ldots$};

\node [state2, font=\normalsize, right=0.9cm of t0] (c0) {$\emptyset$};
\node [state2, font=\normalsize, right=0.9cm of t1] (c1) {$\emptyset$};
\node [state2, font=\normalsize, right=0.9cm of t2] (c2) {$\emptyset$};
\node [state2, font=\normalsize, right=0.9cm of t3] (c3) {$\{s^2_4, s^2_6\}$};
\node [state2, font=\normalsize, left=0.9cm of q3] (cq3) {$\{s^2_2\}$};

\path
        (t0) edge node[right]  {$\bot$} (t1)
        (t1) edge node[right] {$\bot$}  (t2)
        (t2) edge node[below] {$s'_1$}  (t3)
        (t3) edge node[right]  {$\bot$} (t4)
        (t2) edge node[below] {$s'_4$}  (q3)
        (q3) edge node[right]  {$\bot$} (q4);

\path[dashed, font=\scriptsize]
	(t0) edge node[above] {$\conc$} (c0)
	(t1) edge node[above] {$\conc$} (c1)
	(t2) edge node[above] {$\conc$} (c2)
	(t3) edge node[above] {$\conc$} (c3)
	(q3) edge node[above] {$\conc$} (cq3);

\end{tikzpicture} }
  \caption{Abstract counterexample to the modified combined-simulation game of $(\altcomp\otimes \simcomp)$ and $\game'$, 
where $\altcomp = \wh{\game_1} \parallel \altabs{\wh{\game_2}}{\Part}$ and $\simcomp = \wh{\game_1} \parallel \simabs{\wh{\game_2}}{\Part}$. }
  \label{fig:cex}
\end{figure}

\begin{example}
\label{ex:spurious}
Consider MDPs $\game_1, \game_2, \game'$ in Figure~\ref{fig:mdps} interpreted as games and the abstract games $\altabs{\wh{\game_2}}{\Part}$, $\simabs{\wh{\game_2}}{\Part}$ in Figure~\ref{fig:abs}.
Let $\altcomp = \wh{\game_1} \parallel \altabs{\wh{\game_2}}{\Part}$ and $\simcomp = \wh{\game_1} \parallel \simabs{\wh{\game_2}}{\Part}$.
Figure \ref{fig:cex} shows part of an abstract counterexample to the modified combined-simulation game of $(\altcomp\otimes \simcomp)$ and $\game'$. 
In this counterexample the adversary first plays in the simulation gadget and the proponent responds by moving to a state $((s^1_1,\ppart_1),s'_1)$ or a state $((s^1_1,\ppart_1),s'_4)$ (their successors are not depicted in Figure~\ref{fig:cex}). 
From the state $((s^1_1,\ppart_1),s'_1)$ the adversary has a winning strategy by playing in the alternating-simulation gadget, and from $((s^1_1,\ppart_1),s'_4)$ by playing in the simulation gadget.
The dashed shows assign the concretization of states in the abstract counterexample.
The counterexample is spurious, since the initial state of $\game_2$ does not belong to the concretization of the initial state of the counterexample.\qed
\end{example}

\label{sec:ref}
\begin{algorithm}[h]
  \caption{Assume-guarantee CEGAR for $\qualgame$.}
  \label{alg:rg}
\begin{algorithmic}
\Require Two-player games $\game_1, \game_2, \game'$.
\Ensure \textbf{yes} if $\game_1 \parallel \game_2 \qualgame \game'$, otherwise \textbf{no}
\State $\Part \gets $ coarsest partitioning of $\game_2$
\Loop
	\State $\altcomp \gets \game_1 \parallel \altabs{\game_2}{\Part};$ $ \quad \simcomp \gets \game_1 \parallel \simabs{\game_2}{\Part}$
  \State $\game^\M \gets $ modified combined simulation game of $(\altcomp\otimes \simcomp)$ and $\game'$
	\If {Player~1 wins in $\game^\M$}
\quad  \Return \textbf{yes}
	\Else
        	\State $\Cex \gets $abstract counterexample in $\game^\M$
                \If {Feasible($\Cex$)}
\quad \Return \textbf{no}
                \Else
\quad  $\Part \gets$ Refine($\Cex$, $\Part$)
                \EndIf
	\EndIf
\EndLoop
\end{algorithmic}
\end{algorithm}

\subsection{CEGAR}
\label{sec:cegar_alg}
The counterexample analysis presented in the previous section allows us to 
automatically refine abstractions using the CEGAR 
paradigm \cite{Clarke00}.
The code of the CEGAR algorithm for the assume-guarantee combined simulation is shown in Algorithm~\ref{alg:rg}.
The algorithm takes $\game_1, \game_2, \game'$
as arguments and answers whether $(\game_1 \parallel \game_2) \qualgame \game'$ holds.
Initially, the algorithms computes the coarsest partition $\Part$ of $\game_2$. 
Then, it executes the CEGAR loop:
in every iteration the algorithm constructs $\altcomp$ (resp. $\simcomp$) as the parallel composition of $\game_1$ and 
the alternating-simulation abstraction (resp. simulation abstraction) of $\game_2$.
Let $\game^\M$ be the modified combined-simulation game of $(\altcomp \otimes \simcomp)$ and $\game'$.
If Player~1 has a winning strategy in $\game^\M$ then the algorithm  returns YES; 
otherwise it finds  an abstract counterexample $\Cex$ in $\game^\M$.
In case the counterexample is feasible, then  it corresponds to a concrete counterexample, and 
the algorithm returns NO.
If $\Cex$ is spurious, the algorithm calls a refinement procedure that uses the concretization of $\Cex$
to return a partition  $\Part'$ finer than partition $\Part$.
 % finer in a way that 
% ensures that $\Cex$ will not be a counterexample in the next iteration of the algorithm.
Our technique can be extended to handle multiple components in a similar way as presented 
in~\cite[Section~5]{Komuravelli12}.

\smallskip\noindent\textbf{Refinement procedure.}
Given a partition $\Part$ and a spurious counterexample $\Cex$ together with its concretization function 
$\conc$ we describe how to compute the refined partition $\Part'$.
Consider a partition $\ppart \in \Part$ and let $\ov{\states}_\ppart = 
\set{\ov{s}_1,\ov{s}_2, \ldots, \ov{s}_m}$ denote the states of the abstract counterexample
$\Cex$ that contain $\ppart$ as its component. 
Every state $\ov{s}_i$ splits $\ppart$ into at most two sets 
$\conc(\ov{s}_i)$ and $\ppart \setminus \conc(\ov{s}_i)$, and 
let this partition be denoted as $T_i$.
We define a partition $\mathcal{P}_{\ppart}$ as the largest equivalence relation on 
$\ppart$ that is finer than any of the equivalence relation $T_i$ for all $1 \leq i \leq m$.
Formally, $\mathcal{P}_{\ppart} = \set{\ov{\pi}_1,\ov{\pi}_2, \ldots,\ov{\pi}_k}$ is a 
partition of $\ppart$ such that for all $1 \leq j \leq k$ and $1\leq i \leq m$ we have 
$\ov{\pi}_j \subseteq \conc(\ov{s}_i)$ or 
$\ov{\pi}_j \subseteq \ppart \setminus \conc(\ov{s}_i)$. 
The new partition $\Part'$ is then defined as the union over $\mathcal{P}_{\ppart}$ 
for all $\ppart \in \Part$.

\begin{example}
We continue with our running example. 
In Example~\ref{ex:spurious} we showed that the abstractions
 of $\wh{\game_2}$ by the coarsest partition $\Part$ lead to a spurious counterexample depicted 
 in Figure~\ref{fig:cex}. 
%The concretization function $\conc$ identifies a finer partition of the states of $\game_2$. 
Consider the partition $\ppart_1 = \set{s^2_2,s^2_4,s^2_6}$.
There are three states in the counterexample
that have $\ppart_1$ as its component and the concretization function assigns to them three subsets of states: 
$\emptyset, \set{s^2_2}, \set{s^2_4,s^2_6}$. After the refinement partition $\ppart_1$ is split 
into two partitions $\ppart'_1 = \set{s^2_2}$ and $\ppart''_1 = \set{s^2_4,s^2_6}$.
% Based on the concretization of the counterexample, our refinement procedure creates a finer partition of states $\Part'=\set{\ppart_0, \ppart_1, \ppart_2, \ppart_3, \ppart_4, \ppart_5 }$, where
% $\ppart_0=\set{s^2_0}, \ppart_1=\set{s^2_1, s^2_3}, \ppart_2=\set{s^2_2}, \ppart_3=\set{s^2_4, s^2_6}$ and $\ppart_4=\set{s^2_5}$.
% For $\Part'$ the simulation abstraction coincides with the alternating-simulation abstraction and is depicted in Figure~\ref{fig:abs'}.
%   One can easily check that the proponent has a winning strategy in the modified combined-simulation game and 
%   the algorithm returns answer YES.
\qed
\end{example}
% \begin{figure}[htb]
% \centering
%   \scalebox{0.8}{\input{figures/combined3-alt.tex} }
%   \caption{$\simabs{\game_{2}}{{\Part'}}$ and $\altabs{\game_{2}}{{\Part'}}$}
%   \label{fig:abs'}
% \end{figure}

\begin{proposition}
Given a partition $\Part$ and a spurious counterexample $\Cex$, the partition $\Part'$ obtained 
as refinement of $\Part$  is finer than $\Part$.
\end{proposition}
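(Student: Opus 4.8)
The plan is to establish two things, both essentially bookkeeping on top of the construction of $\Part'$: that $\Part'$ is a legitimate partition of $\game_2$, and that every block of $\Part'$ is contained in a block of $\Part$, which is precisely the statement that $\Part'$ is finer than $\Part$. Recall that $\Part' = \bigcup_{\ppart \in \Part} \mathcal{P}_{\ppart}$, where for each $\ppart \in \Part$ the family $\mathcal{P}_{\ppart}$ is itself a partition of $\ppart$, namely the coarsest common refinement of the two-block partitions $T_i$ of $\ppart$. First I would check the three defining conditions of a partition. \emph{Labeling match:} every block $\ov{\ppart} \in \Part'$ is a subset of some $\ppart \in \Part$, and since all states of $\ppart$ carry the same $\lab$-value, so do all states of $\ov{\ppart}$; in the alternating setting the same inclusion keeps Player-$1$ and Player-$2$ states, which disagree on $\turn$, in distinct blocks. \emph{Covering:} for each $\ppart$ we have $\bigcup_{\ov{\ppart} \in \mathcal{P}_{\ppart}} \ov{\ppart} = \ppart$, hence $\bigcup_{\ov{\ppart} \in \Part'} \ov{\ppart} = \bigcup_{\ppart \in \Part} \ppart = \states$ since $\Part$ covers $\states$. \emph{Disjointness:} two blocks of $\Part'$ coming from the same $\ppart$ are disjoint because $\mathcal{P}_{\ppart}$ is a partition, and two coming from distinct blocks of $\Part$ are disjoint because those blocks already are. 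Finally, each block of $\Part'$ is a subset of a block of $\Part$ by construction, so $\Part'$ refines $\Part$.

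The hypothesis that $\Cex$ is \emph{spurious} appears, presumably, so that one can moreover conclude the refinement is \emph{strict}, $\Part' \neq \Part$; this is the ingredient one invokes in a termination argument for the CEGAR loop of Algorithm~\ref{alg:rg}, since a finite state space can be strictly refined only finitely often, after which the partition is trivial and, by Remark~\ref{rem:complete}, the modified combined-simulation game is exact and produces no spurious counterexample. To prove strictness I would argue as follows. By definition $\Cex$ being spurious means that $\conc(\ov{r})$ omits the initial state of $\game_2$, where $\ov{r}$ is the root of the counterexample DAG; but every leaf of the DAG is a target state, so by the first clause of the definition of $\conc$ its concretization equals its entire block. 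Since the DAG is finite and acyclic, the non-empty set of nodes $\ov{s}$ whose concretization is a proper subset of their block (it contains $\ov{r}$) has an element $\ov{s}$ all of whose descendants have concretization equal to their full block; fixing such an $\ov{s}$, with block $\ppart$, it suffices to show $\conc(\ov{s}) \neq \emptyset$, for then $\ppart$ is split into the two non-empty pieces $\conc(\ov{s})$ and $\ppart \setminus \conc(\ov{s})$, so $\mathcal{P}_{\ppart} \neq \set{\ppart}$ and $\Part' \neq \Part$.

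I expect the non-emptiness of $\conc(\ov{s})$ to be the main obstacle. It calls for a case distinction over the form of $\ov{s}$ in the combined-simulation game, combining the recursive clauses for $\conc$ with the way the abstractions $\simabs{\game_2}{\Part}$ and $\altabs{\game_2}{\Part}$ are built: in the simulation-gadget states one uses that every transition of $\simabs{\game_2}{\Part}$ leaving $\ppart$ is witnessed by an actual $\game_2$-transition of some state of $\ppart$, and in the alternating-simulation-gadget states one uses the standing non-emptiness assumption on $\wt{\trans}$ together with the fact that the universally quantified transitions of $\altabs{\game_2}{\Part}$ are realized by \emph{all} states of the block. The delicate case is $\ov{s} = ((s_1,\ppart),t',\AL,a,a',1)$, where $\conc(\ov{s})$ is cut out by a universal condition relating $\trans^2(s,a)$ to the union of the concretizations of the successors of $\ov{s}$; here one has to argue, from the structure of $\altabs{\game_2}{\Part}$, that this condition holds for at least one $s \in \ppart$, and this is the point of the proof that I would expect to require the most care.
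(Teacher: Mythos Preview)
The paper states this proposition without proof, so there is no argument of the paper's to compare against.

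Your verification that $\Part'$ is a partition refining $\Part$ is correct and complete: each $\mathcal{P}_{\ppart}$ partitions $\ppart$, blocks arising from distinct $\ppart$'s are disjoint because $\Part$ already is a partition, and every block of $\Part'$ sits inside a block of $\Part$ by construction. If ``finer'' is read merely as ``refines'' (not necessarily strictly), you are done and the spuriousness hypothesis is indeed superfluous.

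You then argue, reasonably, that the intended content is \emph{strict} refinement, since this is what the termination argument immediately following the proposition needs. Your strategy---locate a node $\ov{s}$ in the DAG that is minimal among those whose concretization is a proper subset of its block, and show $\conc(\ov{s})\neq\emptyset$---works in every case except the one you flag, and your analysis of those other cases is sound. In the delicate case $\ov{s}=((s_1,\ppart_2),t',\AL,a,a',1)$, however, the non-emptiness you hope for can genuinely fail. Take $\ppart_2=\{x,x'\}$ with $a$ available at both, $\trans^2(x,a)=\{u,v\}$, $\trans^2(x',a)=\{v,w\}$, and singleton blocks $\{u\},\{v\},\{w\}$ on the successor side; then $\wt{\trans}(\ppart_2,a)=\{\{v\}\}$, so the union of the (full) successor concretizations is $\{v\}$, yet neither $x$ nor $x'$ has all of its $a$-successors inside $\{v\}$. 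Hence $\conc(\ov{s})=\emptyset$ even though every successor of $\ov{s}$ has full concretization, and no split of $\ppart_2$ is produced. Your minimal-node argument therefore does not close the gap here; whether strictness nonetheless always holds (via some other node of $\Cex$), or whether the paper's completeness claim tacitly relies on the sharper refinement described in the implementation section, the paper does not say.
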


\smallskip\noindent{\bf Sound and completeness of our CEGAR approach.}
Since we consider finite games, the refinement procedure only executes for finitely many 
steps. 
In every iteration of the CEGAR algorithm, either the algorithm returns a correct answer 
(by soundness), or a finer partition is obtained.
Thus either we end up with a correct answer, or the trivial partition, and hence 
by Remark~\ref{rem:complete} the completeness of our approach follows.
%% and hence the CEGAR loop eventually terminates.
%This fact, together with soundness and completeness of the assume-guarantee rule, implies that the CEGAR algorithm is a decision procedure.
Thus our CEGAR approach is both sound and complete.

%%% Local Variables: 
%%% mode: latex
%%% TeX-master: "main"
%%% End: 

%!TEX root = main.tex
\newcommand{\cs}{\mathsf{CS}}
\newcommand{\sn}{\mathsf{SN}}
\newcommand{\mer}{\mathsf{MER}}
\newcommand{\rdp}{\mathsf{RDP}}
\newcommand{\lep}{\mathsf{LE}}
\newcommand{\ec}{\mathsf{EC}}
\newcommand{\petp}{\mathsf{PETP}}
\newcommand{\petg}{\mathsf{PETG}}
\newcommand{\viruss}{\mathsf{VIR1}}
\newcommand{\viruse}{\mathsf{VIR2}}
\newcommand{\agc}{\mathsf{AGCS}}
\newcommand{\agas}{\mathsf{AGAS}}
\newcommand{\ags}{\mathsf{AGSS}}
\newcommand{\monc}{\mathsf{MONCS}}
\newcommand{\mons}{\mathsf{MONSS}}
\newcommand{\monas}{\mathsf{MONAS}}
\newcommand{\memout}{\mbox{MO}}
\newcommand{\timeout}{\mbox{TO}}
\newcommand{\error}{Error}
\newcommand{\mb}{MB}
\vspace{-1em}
\section{Experimental Results}
\label{sec:impl}
\vspace{-0.5em}
We implemented our CEGAR approach for combined simulation in Java, 
and experimented with our tool on a number of MDPs and two-player games examples.
We use PRISM~\cite{prism} model checker to specify the examples and generate 
input files for our tool.

\smallskip\noindent\emph{Observable actions.}
To be compatible with the existing benchmarks (e.g. \cite{Komuravelli12}) 
in our tool actions are observable instead of atomic propositions. 
Our algorithms are easily adapted to this setting.
We also allow the user to specify silent actions for components,
which are not required to be matched by the specification $\game'$. 

\smallskip\noindent\emph{Improved (modified) combined-simulation game.}
We leverage the fact that MDPs are interpreted as alternating games to simplify the 
(modified) combined-simulation game.
When comparing two Player-1 states, the last two steps in the alternating-simulation gadget can be omitted, 
since the players have unique successors given the actions chosen in the first two steps.
Similarly, for  two probabilistic states, the first two steps in the alternating-simulation 
gadget can be skipped.
We check the (modified) combined-simulation games using the standard attractor 
algorithm to solve games with safety 
(as well as reachability) objectives~\cite{RajeevTomBook,Zie98}.

\smallskip\noindent\emph{Improved partition refinement procedure.}
In the implementation we adopt the approach of~\cite{Henzinger03} for refinement.
Given a state $\ov{s}$ of the abstract counterexample with partition $\ppart$ as its component, 
the equivalence relation may split the set $\ppart \setminus \conc(\ov{s})$ into multiple 
equivalence classes. 
Intuitively, this ensures that similar-shaped spurious counterexamples do not reappear in the 
following iterations. 
This approach is more efficient than the naive one, and also implemented in our tool.
% For performance reasons we explicitly model Player~1 as well as Player~2 states. 
% These two types of states are partitioned independently. 

\smallskip\noindent\textbf{MDP examples.}
We used our tool on all the MDP examples from~\cite{Komuravelli12}:
 % that consists of the following MDPs. 
%All of our examples are parametrized by the number of component (clients, nodes, etc.), to illustrate scalability of our algorithms.

\begin{compactitem}
\item \emph{$\cs_1$ and $\cs_n$} model a Client-Server protocol with mutual exclusion with probabilistic failures in one or all of the $n$ clients, 
respectively.

\item \emph{$\mer$} is an arbiter module of NASAs software for Mars Exploration Rovers which grants shared resources for several users. 

\item \emph{$\sn$} models a network of sensors that communicate via a bounded buffer with probabilistic behavior in the components.
\end{compactitem}
In addition, we also considered two other classical MDP examples:
\begin{compactitem}
\item \emph{$\lep$} is based on a PRISM case study~\cite{prism} that models 
the \emph{Leader election protocol}~\cite{IR90}, where $n$ agents on a ring randomly pick a 
number from a pool of $K$ numbers. 
The agent with the highest number becomes the leader. 
In case there are multiple agents with the same highest number the election proceed to the next round. 
The specification requires that two leaders cannot be elected at the same time.
The MDP is parametrized by the number of agents and the size of the pool. 

\item \emph{$\petp$} is based on a Peterson's algorithm~\cite{P81} for mutual exclusion of $n$ threads, 
where the execution order is controlled by a randomized scheduler.
The specification requires that two threads cannot access the critical section at the same time.
We extend Peterson's algorithm by giving the threads a non-deterministic choice to restart before 
entering the critical section.
The restart operation succeeds with probability $\frac{1}{2}$ and  with probability $\frac{1}{2}$ 
the thread enters the critical section.
\end{compactitem}

\noindent{\em Details of experimental results.}
Table~\ref{tab:mdp_results} shows the results for MDP examples we obtained using our 
assume-guarantee algorithm and the monolithic approach (where the composition is computed explicitly). 
We also compared our results with the tool presented in \cite{Komuravelli12} that implements 
both assume-guarantee and monolithic approaches for \emph{strong simulation}~\cite{SL95}.
All the results were obtained on a Ubuntu-13.04 64-bit machine running on an Intel Core i5-2540M CPU 
of 2.60GHz. 
We imposed a 4.3GB upper bound on Java heap memory and one hour time limit.
For $\mer(6)$ and $\petp(5)$ PRISM cannot parse the input file (probably it runs out of memory).

\smallskip\noindent{\em Summary of results.}
For all examples, other than the Client-Server protocol, the assume-guarantee method scales better
than the monolithic reasoning; and in all examples our qualitative analysis scales better than 
the strong simulation approach.
Qualitative analysis through combined simulation relies on graph-theoretic 
algorithms (attractor computation), while checking strong simulation requires 
calls to an SMT solver.

%{\bf MODIFY KRISH.}
%For all examples our monolithic method was faster than the monolithic method of \cite{Komuravelli12};
% the speed-up can be up to several orders of magnitude.
%Also for all examples except for $\mer(5)$, our  assume-guarantee method was faster than the respective method of \cite{Komuravelli12}.
%Our assume-guarantee method is faster than our monolithic method on $\mer, \sn$ and $\lep$ examples.
%On these examples, the assume-guarantee algorithm also had a lower memory consumption.

% In the examples used previously in \cite{Komuravelli12} we consider the specifications presented on the tool webpage of \cite{Komuravelli12}. For the other examples 
% we developed an understanding of the model and created our own specification MDP.
\begin{table}[ht]
\centering
 \resizebox{\linewidth}{!}{
\begin{tabular}{cccc|cccc|cccc|cc|cc}
\hline
\multicolumn{4}{c}{} & \multicolumn{4}{|c}{$\agc$} & \multicolumn{4}{c|}{$\ags$} & \multicolumn{2}{c}{$\monc$} & \multicolumn{2}{c}{$\mons$} \\
Ex. & $\: \vert \game_1 \vert$ &  $\vert \game_2 \vert $ & $\vert \game' \vert$ & $Time$ & $Mem$ & $I$ & $\vert \Part \vert$ & $Time$ & $Mem$ & $I$ & $\vert \Part \vert$ & $Time$ & $Mem$ & $Time$ & $Mem$\\
\hline
$\cs_1(5)$ & 36 & 405 & 16 &  		1.13s & 112\mb & 49 & 85 & 		6.11s & 213\mb & 32 & 33 & 	\textbf{0.04s} & \textbf{34\mb} & 		0.18s & 95\mb \\
$\cs_1(6)$ & 49 & 1215 & 19 &  		2.52s & 220\mb & 65 & 123 & 		11.41s & 243\mb & 40 & 41 & 	\textbf{0.04s} & \textbf{51\mb} & 		0.31s & 99\mb \\
$\cs_1(7)$ & 64 & 3645 & 22 &  		5.41s  & 408\mb & 84 & 156 & 		31.16s & 867\mb & 56 & 57 & 	\textbf{0.05s} & \textbf{82\mb} & 		0.77s & 113\mb\\
\hline
$\cs_n(3)$ & 125 & 16 & 54 &  		0.65s & 102\mb &9 & 24 & 		33.43s & 258\mb & 11 & 12 & 	\textbf{0.09s} & \textbf{35\mb} & 		11.29s & 115\mb\\
$\cs_n(4)$ & 625 & 25 & 189  &  	6.22s & 495\mb & 15 & 42 & 		\timeout & - & - & - & 		\textbf{0.4s} & \textbf{106\mb} & 		1349.6s & 577\mb\\
$\cs_n(5)$ & 3k & 36 & 648 &  		117.06s & 2818\mb & 24 & 60 & 		\timeout & - & - & - & 		\textbf{2.56s} & \textbf{345\mb} & 		\timeout & -\\
\hline
$\mer(3)$ & 278 & 1728 & 11 &  		\textbf{1.42s} & \textbf{143\mb} & 8 & 14 & 	2.74s & 189\mb & 6 & 7 & 		1.96s & 228\mb & 		128.1s & 548\mb\\
$\mer(4)$ & 465 & 21k & 14 &  		\textbf{4.63s} & \textbf{464\mb} & 13 & 22 & 	10.81s & 870\mb & 10 & 11 & 		11.02s & 1204\mb & 		\timeout & -\\
$\mer(5)$ & 700 & 250k & 17  &  	\textbf{29.23s} & \textbf{1603\mb} & 20 & 32 & 	67s & 2879\mb & 15 & 16 & 		- & \memout& 					\memout & -\\
\hline
$\sn(1)$ & 43 & 32 & 18 &  		0.13s & 38\mb & 3 & 6 & 	0.28s & 88\mb & 2 & 3 & 		\textbf{0.04s} & \textbf{29\mb} & 		3.51s & 135\mb\\
$\sn(2)$ & 796 & 32 & 54 &  		0.9s & 117\mb & 3 & 6 & 	66.09s & 258\mb & 2 & 3 & 		\textbf{0.38s} & \textbf{103\mb} & 		3580.83s & 1022\mb\\
$\sn(3)$ & 7k & 32 & 162 &  		4.99s & \textbf{408\mb} & 3 & 6 & 	\timeout & - & - & - & 		4.99s & 612\mb & 			\timeout & -\\
$\sn(4)$ & 52k & 32 & 486 &  		\textbf{34.09s} & \textbf{2448\mb} & 3 & 6 & 	\timeout & - & - & - & 	 44.47s & 3409\mb & 					\timeout & -\\
\hline
$\lep(3,4)$ & 2 & 652 & 256 &  		\textbf{0.24s} & \textbf{70\mb} & 6 & 14 & 	1.63s & 223\mb & 6 & 7 & 		0.38s & 103\mb & 		\timeout & -\\
$\lep(3,5)$ & 2 & 1280 & 500 &  	\textbf{0.31s} & \textbf{87\mb} & 6 & 14 & 	\error & - & - & - & 			1.77s & 253\mb & 		\error & -\\
$\lep(4,4)$ & 3 & 3160 & 1280 &  	\textbf{0.61s} & \textbf{106\mb} & 6 & 16 & 	\timeout & - & - & - &	 		9.34s & 1067\mb & 		\timeout & -\\
$\lep(5,5)$ & 4 & 18k & 12k &  		\textbf{3.37s} & \textbf{364\mb} & 6 & 18 & 	\timeout & - & - & - & 			- & \memout & 			\timeout & -\\
$\lep(6,4)$ & 5 & 27k & 20k &  		\textbf{6.37s} & \textbf{743\mb} & 6 & 20 & 	\timeout & - & - & - & 			- & \memout & 			\timeout & -\\
$\lep(6,5)$ & 5 & 107k & 78k &  	\textbf{23.72s} & \textbf{2192\mb} & 6 & 20 & 	\timeout & - & - & - & 			- & \memout & 			\timeout & -\\
\hline
$\petp(2)$    & 68 & 3 & 3 &		0.04s & 31\mb & 0 & 2	& 			0.04s & 87\mb & 0 & 1&			0.04s & \textbf{30\mb} 			& 0.04s & 90\mb \\
$\petp(3)$    & 4 & 1730 & 4 &		\textbf{0.19s} & \textbf{65\mb} & 6 & 8	& 	0.29s & 153\mb & 3 & 4&			0.24s & 72\mb 	& 1.07s & 170\mb \\
$\petp(4)$    & 5 & 54k & 5 &		\textbf{1.58s}  & \textbf{325\mb} & 8 & 10& 	3.12s & 727\mb & 4 & 5&			7.04s & 960\mb 	& 31.52s & 1741\mb \\
\hline
\end{tabular}
 }
\caption{Results for MDPs examples: 
$\agc$ stands for our assume-guarantee combined simulation;
$\ags$ stands for assume-guarantee with strong simulation;
$\monc$ stands for our monolithic combined simulation;
and $\mons$ stands for monolithic strong simulation.
The number $I$ denotes the number of CEGAR iterations and $\vert \Part \vert$ the 
size of the abstraction in the last CEGAR iteration.
\timeout\ and \memout\ stand for a time-out and memory-out, respectively, and 
\error\ means that an error occurred during execution.
The memory consumption is obtained using the Unix \texttt{time} command.
}
\label{tab:mdp_results}
\vspace{-3.5em}
\end{table}

\begin{table}[ht]
\centering
 \resizebox{\linewidth}{!}{
\begin{tabular}{cccc|cccc|cc|cccc|cc}
\hline
\multicolumn{4}{c}{} & \multicolumn{4}{|c}{$\agc$} & \multicolumn{2}{c}{$\monc$}  & \multicolumn{4}{|c}{$\agas$} & \multicolumn{2}{c}{$\monas$}    \\
Ex. & $\: \vert \game_1 \vert$ &  $\vert \game_2 \vert $ & $\vert \game' \vert$ & $Time$ & $Mem$ & $I$ & $\vert \Part \vert$ & $Time$ & $Mem$ & $Time$ & $Mem$ & $I$ & $\vert \Part \vert$ &   $Time$ & $Mem$\\
\hline
$\ec(32,6,16)$ & 71k & 193 & 129 &	3.55s & 446\mb & 1 & 7&		\textbf{1.15s} & 281\mb  &	2.34s & 391\mb & 0 & 2 &	1.03s & \textbf{251\mb}	 \\
$\ec(64,7,16)$ & 549k & 385 & 257 &	70.5s & 3704\mb & 1 & 131&	9.07s & 1725\mb  &	16.79s & 1812\mb & 0 & 2 &	\textbf{4.83s} & \textbf{1467\mb}	 \\
$\ec(64,8,16)$ & 1.1m & 769 & 513 &	- & \memout & - & -&		- & \memout  &		\textbf{52.63s} & \textbf{3619\mb} & 0 & 2 &	- & \memout	 \\
$\ec(64,8,32)$ & 1.1m & 1025 & 513 &	- & \memout & - & -&		- & \memout  &		\textbf{54.08s} & \textbf{3665\mb} & 0 & 2 &	- & \memout	 \\
\hline
$\petg(2)$ & 3 & 52 & 3 &		0.08s & 35\mb & 4 & 6&		0.03s & 30\mb  &	0.07s & 35\mb & 4 & 6 &		0.03s & \textbf{29\mb} \\
$\petg(3)$ & 4 & 1514 & 4 &		\textbf{0.2s} & 63\mb & 6 & 8&		0.25s & 74\mb  &	0.22s & \textbf{62\mb} & 6 & 8 &	 	0.21s & 64\mb	 \\
$\petg(4)$ & 5 & 49k & 5 &		1.75s & 316\mb & 8 & 10&	8.16s & 1080\mb  &	\textbf{1.6s} & \textbf{311\mb} & 8 & 10&	6.94s & 939\mb	 \\
\hline
$\viruss(12)$ & 14 & 4097 & 1 &		0.91s & 159\mb & 15 & 30 &	1.69s & 255\mb  & 	\textbf{0.35s} & \textbf{114\mb} & 2 & 4 &	 	1.53s & 215\mb\\
$\viruss(13)$ & 15 & 8193 & 1 &		1.47s & 197\mb & 16 & 32 &	4.36s & 601\mb  & 	\textbf{0.6s} & \textbf{178\mb} & 2 & 4 &	 	2.8s & 402\mb\\
$\viruss(14)$ & 16 & 16k & 1 &		3.09s & 326\mb & 17 & 34 &	8.22s & 992\mb  &	\textbf{0.75s}& \textbf{241\mb} & 2 & 4 &	 	6.49s & 816\mb\\
$\viruss(15)$ & 17 & 32k & 1 &		4.47s & 643\mb & 18 & 36 &	15.13s & 2047\mb  & 	\textbf{1.05s} & \textbf{490\mb} & 2 & 4 &	 	9.67s & 1361\mb\\
$\viruss(16)$ & 18 & 65k & 1 &		8.65s & 1015\mb & 19 & 38 &	41.28s & 3785\mb  & 	\textbf{1.37s} & \textbf{839\mb} & 2 & 4 &	 	23.71s & 2591\mb\\
$\viruss(17)$ & 19 & 131k & 1 &		18.68s & 1803\mb & 20 & 40 &	- & \memout  &		\textbf{2.12s} & \textbf{1653\mb} & 2 & 4 &		62.24s & 4309\mb \\
$\viruss(18)$ & 20 & 262k & 1 &		38.68s & 3079\mb & 21 & 42 &	- & \memout  & 		\textbf{3.35s} & \textbf{2878\mb} & 2 & 4 &	 	- & \memout\\
\hline
$\viruse(12)$ & 13 & 4096 & 1 &		1.02s & 151\mb & 19 & 34 &	0.81 & 154\mb  & 	0.68s & \textbf{122\mb} & 9 & 14 &	 	\textbf{0.57s} & 133\mb\\
$\viruse(13)$ & 14 & 8192 & 1 &		1.48s & 190\mb & 20 & 36 &	1.13s & 216\mb  & 	1.01s & \textbf{183\mb} & 9 & 14 &	1.01s & 208\mb\\
$\viruse(14)$ & 15 & 16k & 1 &		2.9s & 315\mb & 21 & 38 &	2.33s & 389\mb  &	\textbf{1.94s} & \textbf{311\mb} & 9 & 14 &	2.09s & 388\mb	 \\
$\viruse(15)$ & 16 & 32k & 1 &		5s & 631\mb & 22 & 40 &		6.29s & 964\mb  & 	\textbf{2.12s} & \textbf{489\mb} & 9 & 14 &	4.69s & 757\mb\\
$\viruse(16)$ & 17 & 65k & 1 &		9.82s & 949\mb & 23 & 42 &	7.55s & 1468\mb  & 	\textbf{3.96s} & \textbf{897\mb} & 9 & 14 &	6.09s & 1315\mb\\
$\viruse(17)$ & 18 & 131k & 1 &		23.33s & 1815\mb & 24 & 44 &	23.54s & 3012\mb  &	\textbf{8.16s} & \textbf{1676\mb} & 9 & 14 &	15.36s & 2542\mb\\
$\viruse(18)$ & 19 & 262k & 1 &		45.89s & 3049\mb & 25 & 46 &	55.28s & 4288\mb  & 	\textbf{20.3s} & \textbf{2875\mb} & 9 & 14 & 	28.79s & 3755\mb\\
\hline
\end{tabular}
 }
\caption{Results for two-player games examples.}
\label{tab:games_results}
\vspace{-2.5em}
\end{table}

\smallskip\noindent\textbf{Two-player games examples.}
We also experimented with our tool on several examples of games,
where one of the players controls the choices of the system and the other player represents 
the environment.

\begin{compactitem}
\item \emph{$\ec$} is based on~\cite{CCHR12} and models an error-correcting device that sends and receives data blocks over a communication channel.
Notation $\ec(n,k,d)$ means that a data block consists of $n$ bits and it encodes $k$ bits of data; 
value $d$ is the minimum Hamming distance between two distinct blocks. 
In the first component Player~2 chooses a message to be sent over the channel and is allowed to flip some bits in the block during the transmission. 
The second component restricts the number of bits that Player~2 can flip.
The specification requires that every message is correctly decoded.

\item \emph{$\petg$} is the Peterson's algorithm \cite{P81} example for MDPs,
with the following differences:
(a)~the system may choose to restart instead of entering the critical section;
(b)~instead of a randomized scheduler we consider an adversarial scheduler.  
As before, the specification requires mutual exclusion.

\item \emph{$\viruss$} models a virus that attacks a computer system with $n$ nodes
(based on case study from PRISM~\cite{prism}). 
Player~1 represents the virus and is trying to infect as many nodes of the network as possible. 
Player~2 represents the system and may recover an infected node to an uninfected state.
The specification requires that the virus has a strategy to avoid being completely erased, 
i.e., maintain at least one infected node in the network.
\emph{$\viruse$} is a modified version of $\viruss$ with two special critical nodes in the network. 
Whenever both of the nodes are infected, the virus can overtake the system.
The specification is as for $\viruss$, i.e., the virus can play such that at least one node in the 
network remains infected, but it additionally requires that even if the system cooperates with the virus, 
the system is designed in a way that the special nodes will never be infected at the same time.
\end{compactitem}

The results for two-player game examples are shown in Table~\ref{tab:games_results}.
Along with $\agc$ and $\monc$ for assume-guarantee and monolithic combined simulation,
we also consider $\agas$ and $\monas$ for assume-guarantee and monolithic alternating 
simulation, as for properties in $\oneatl$ it suffices to consider only alternating 
simulation. 
%We evaluated the examples with our implementation of four methods: (i)~assume-guarantee algorithm for combined simulation $\agc$; 
%(ii)~monolithic algorithm for combined simulation $\monc$; (iii)~assume-guarantee algorithm for alternating simulation $\agas$;  and (iv)~monolithic method for alternating simulation $\monas$.
%The alternating simulation is of special interest for two-player games.
%, when we are interested in the $\oneatl^*$ fragment.
For all the examples, the assume-guarantee algorithms scale better than the monolithic ones. 
%%and on larger examples they achieve better running times and lower memory consumption.
Combined simulation is finer than alternating simulation and therefore combined simulation 
may require more CEGAR iterations.

%%%as expected the algorithm for alternating simulation outperforms the algorithm for combined simulation.

%%% Local Variables: 
%%% mode: latex
%%% TeX-master: "main"
%%% End: 

%!TEX root = main.tex

\smallskip\noindent{\bf Concluding remarks.}
In this work we considered compositional analysis of MDPs for qualitative 
properties and presented a CEGAR approach.
Our algorithms are discrete graph theoretic algorithms.
An interesting direction of future work would be to consider symbolic 
approaches to the problem.

\smallskip\noindent{\bf Acknowledgements.}
We thank Anvesh Komuravelli for sharing his implementation with us.

%{\scriptsize 
%\bibliographystyle{abbrv}
%\bibliography{diss}
%}
%!TEX root = main.tex

\clearpage
\appendix
\section{Technical appendix}

We start with an example that shows that also for alternating games combined simulation is finer that the 
intersection of simulation and alternating-simulation relation.
\begin{figure}[htb]
  \centering
  \scalebox{0.8}{\begin{tikzpicture}[auto, node distance=2cm,->, semithick,initial text=, align=left]
\tikzstyle{nstate}=[circle, draw]
\tikzstyle{pstate}=[rectangle, draw]
\tikzstyle{transition}=[->,>=stealth']
\node [nstate, initial, initial left] (s0) at (0,0) {$s_0$};
\node [pstate] (s1) at (1.5,0.8) {$s_1$};
\node [pstate] (s3) at (1.5,-0.8) {$s_3$};
\node [nstate, fill=gray!40] (s2) at (3,0) {$s_2$};

\node [nstate, initial, initial left] (t0) at (5,0) {$t_0$};
\node [pstate] (t1) at (6.5,0.8) {$t_1$};
\node [pstate] (t3) at (6.5,-0.8) {$t_3$};
\node [nstate, fill=gray!40] (t2) at (8.5,0.8) {$t_2$};
\node [nstate] (t4) at (8.5,-0.8) {$t_4$};

\node  (labs) at (-1,0) {$\game$};
\node  (sr) at (4,0) {$\game'$};

\node [below=-1mm of t1] {$\bot$};
\node [below=-1mm of s1] {$\bot$};

\path
	(s0) edge[bend left] node[above] {$a_1$} (s1)
	(s0) edge[bend right] node[above] {$a_2$} (s3)
	(s1) edge[] node[above] {} (s2)
	(s1) edge[] node[above] {} (s0)
	(s2) edge[bend right] node[above] {$a_3$} (s1)
	(s3) edge[bend right] node[above] {$\bot$} (s0);

\path
	(t0) edge[bend left] node[above] {$a_1$} (t1)
	(t0) edge[] node[below] {$a_2$} (t3)
	(t1) edge[] node[above] {} (t0)
	(t1) edge[] node[below] {} (t2)
	(t2) edge[bend right] node[above] {$a_3$} (t1)
	(t3) edge[] node[above] {$\bot$} (t4)
	(t4) edge[bend left] node[below] {$a_3$} (t3);

\draw [-] (5.9,0.48) arc (220:350:0.7cm);

\draw [-] (0.9,0.48) arc (220:325:0.7cm);

%\path[thick,-,shorten <=-1.5pt,shorten >=0mm] (s00) edge [bend right] node[right] {$a_1$} (s01) ; %

% \path[thick,-,shorten <=-1.5pt,shorten >=0mm] (t00) edge [bend right] node[right] {$a_1$} (t01) ;

\end{tikzpicture} }
  \caption{Games $\game, \game'$ such that $\game \simgame \game'$ and $\game \altgame \game'$, but $\game \not\qualgame \game'$.}
  \label{fig:combined_alt}
\end{figure}
\begin{example}
Figure \ref{fig:combined_alt} shows two alternating games $\game, \game'$, 
where the circular states belong to Player~1  and the rectangular states belong to Player~2,
white nodes are labeled by proposition $p$ and gray nodes by proposition $q$.
The largest simulation and alternating-simulation relations between $\game$ and $\game'$ are:
$\simul_{\max}=\set{(s_0, t_0),(s_1, t_1),(s_2, t_2), (s_3, t_1)}, \alt_{\max}=\set{(s_0, t_0),(s_0, t_4),(s_2, t_2), (s_3, t_3), (s_1, t_3), (s_1, t_1)}$. 
Formula $\llangle 1 \rrangle (\Box(p \land \llangle 1,2 \rrangle (\true \, \until \,  q)))$ is satisfied in  state $s_0$, but not in state $t_0$, hence $(s_0, t_0)\not\in \qual_{\max}$.\qed
\end{example}

We now present detailed proofs of Lemma~\ref{lem:transl_one} and Theorem~\ref{thm:equiv} in the context of alternating games.

\begin{lemma}
\label{lem:transl_one_alt}
Given two alternating games $\game$ and $\game'$, let $\maxqual$ be the combined simulation.
For all $(s,s') \in \maxqual$ the following assertions hold: 
\begin{enumerate}
\item For all Player~1 strategies $\straa$ in $\game$, 
there exists a Player~1  strategy $\straa'$ in $\game'$ such that for every play 
$\pat' \in \plays(s',\straa')$ there exists a play $\pat \in \plays(s,\straa)$ 
such that $\pat \sim_\qual \pat'$.
\item For all pairs of strategies $\straa$ and $\strab$ in $\game$, 
there exists a pair of strategies $\straa'$ and $\strab'$ in $\game'$ 
such that $\plays(s,\straa,\strab) \sim_\qual \plays(s',\straa',\strab')$,
\end{enumerate}
\end{lemma}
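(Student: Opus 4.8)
## Proof Plan for Lemma~\ref{lem:transl_one_alt}

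The plan is to adapt the proof of Lemma~\ref{lem:transl_one} to the alternating setting, exploiting the combined-simulation game $\game^\qual$ together with its memoryless winning strategy for the proponent. The key observation is that nothing in the construction of $\game^\qual$ or in the argument of Lemma~\ref{lem:transl_one} actually requires the games to be non-alternating; the gadgets for step-wise simulation and step-wise alternating-simulation are defined uniformly, and Theorem~1 (the game characterization of $\qual_{\max}$) holds for all games. So the heart of the matter is simply to rerun the strategy-transfer construction, being slightly careful about the fact that in an alternating game the two players' moves are interleaved in a fixed pattern governed by the proposition $\turn$.

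First I would fix $(s,s') \in \maxqual$ and a memoryless winning strategy $\straa^\qual$ for the proponent in $\game^\qual$, which exists by the results cited after Theorem~1 (memoryless determinacy of safety games). For item~1, given a Player~1 strategy $\straa$ in $\game$, I would define $\straa'$ in $\game'$ inductively along pairs of histories $(w\cdot s, w'\cdot s')$ with $(s,s')\in\maxqual$: if $\straa(w\cdot s)=a$, then $\straa'(w'\cdot s')$ is the action $a' = \straa^\qual\big(h\cdot(s,s',\AL,2)\cdot(s,s',\AL,a,2)\big)$ dictated by the proponent's response in the alternating-simulation gadget, where $h$ is any history in $\game^\qual$ visiting only $\maxqual$-states and ending in $(s,s')$; since $\straa^\qual$ is memoryless the choice of $h$ is immaterial. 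The alternating-simulation gadget then guarantees that for every successor $t'$ the adversary picks in $\game'$ there is a matching successor $t$ in $\game$ with $(t,t')\in\maxqual$ and $\lab(t)=\lab'(t')$, so one can pick the corresponding play $\pat\in\plays(s,\straa)$ step by step, maintaining $\pat\sim_\qual\pat'$ invariantly. For item~2, the construction is the same but uses the step-wise simulation gadget instead: given $\straa,\strab$ in $\game$ one reads off both the action and the successor from the proponent's moves in response to the adversary challenge $(t,s',\SI,1)$, obtaining a single pair $\straa',\strab'$ in $\game'$ whose outcome is $\sim_\qual$-related to $\plays(s,\straa,\strab)$.

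The one place requiring care — and the main obstacle, though a mild one — is bookkeeping the alternation structure: in an alternating game, whether a state is a Player-1 or Player-2 state is determined, and the strategies $\straa,\straa'$ (resp. $\strab,\strab'$) are only consulted at the appropriate states. I would note that this causes no difficulty because the combined-simulation game construction already handles both kinds of states symmetrically through the gadgets, and the proposition-match condition of combined simulation forces $\turn\in\lab(t)\iff\turn\in\lab'(t')$ whenever $(t,t')\in\maxqual$; hence related states are of the same player-type, and the interleaved consultation of the strategies on both sides stays synchronized. With this observation in place the inductive argument goes through verbatim as in Lemma~\ref{lem:transl_one}, and I would conclude by remarking (as the excerpt's Remark after Theorem~\ref{thm:equiv} already anticipates) that Theorem~\ref{thm:equiv} itself — $\qual_{\max}=\preccurlyeq_C^*=\preccurlyeq_C$ — then follows for alternating games by the same proof, using this lemma in the places where Lemma~\ref{lem:transl_one} was invoked.
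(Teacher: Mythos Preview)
Your proposal is correct and follows essentially the same approach as the paper's proof: both extract $\straa'$ from the proponent's winning strategy $\straa^\qual$ in $\game^\qual$ via the $\AL$-gadget for item~1 and the $\SI$-gadget for item~2, and both rely on the proposition $\turn$ to guarantee that $(s,s')\in\maxqual$ implies $s$ and $s'$ are of the same player-type. The paper makes this last point into an explicit case split (Player-1 states versus Player-2 states) and observes that in the alternating setting the successor in each gadget step is actually unique when both states belong to Player~1, while at Player-2 states $\straa'$ has no real choice; your version folds this into the general step-matching argument, which is fine.
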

\begin{proof}

\smallskip\noindent\emph{Assertion 1.}
As the states of Player~1 and Player~2 are distinguished by the $\turn$ atomic proposition, it follows 
from the fact that $(s,s') \in \maxqual$, that either (i)~$s \in \states_1$ and $s' \in \states'_1$ or (ii)~$s \in \states_2$ and $s' \in \states'_2$.

For the first case (i)~we
consider a winning strategy $\straa^{\qual}$ in $\game^\qual$ such that for all $(s,s') 
\in \maxqual$ and against all strategies $\strab^\qual$ we have 
$\plays((s,s'),\straa^\qual,\strab^\qual) \in \llbracket \Box(\neg p) \rrbracket_{\game^\qual}$.
Given the Player~1 strategy $\straa$ in $\game$ we construct $\straa'$ in 
$\game'$ using the strategy $\straa^\qual$.
Let $h$ be an arbitrary history in $\game^\qual$ that visits only states of type $(\states \times \states')$ that are in $\maxqual$ and ends in $(s,s')$.
Consider a history $w \cdot s$ in $\game$ and $w'\cdot s'$ in $\game'$.
Let $\straa(w \cdot s) = a$, we define $\straa'(w' \cdot s')$ as action $a' = \straa^{\qual}(h \cdot ((s,s'),\AL,2)
 \cdot ((s,s'),\AL,a,2))$, i.e., action $a'$ corresponds
to the choice of the proponents winning strategy $\straa^\qual$  in response to the adversarial choice 
of checking step-wise alternating-simulation followed by action $a$ in $\game$. As both $s$ and $s'$ 
are Player-1 states we have that $\vert \trans(s,a) \vert=1$ and $\vert \trans'(s',a') \vert=1$. Let $(t,t')$ 
be the unique state reached in $2$ steps from $((s,s'),\AL,a,a',2)$ in $\game^\qual$. Assume towards contradiction
 that $\lab^\qual((t,t')) = \{ p \}$, then there exists a strategy for adversary that reaches a loosing state
  while the proponent plays a winning strategy $\straa^\qual$  and the contradiction follows. For the second case
   (ii)~we have that states $s$ and $s'$ belong to Player~2, and there is a single action available for $\straa'$. 
% The assumption that the newly reached state $(t,t')$

\smallskip\noindent\emph{Assertion 2}
The proof is similar to the first assertion, and instead of using the step-wise 
alternating-simulation gadget for strategy construction (of the first item) 
we use the step-wise simulation gadget from $\game^\qual$ to construct the 
strategy pairs.
\end{proof}

\begin{theorem}
% \label{thm:equiv}
For all alternating games $\game$ and $\game'$ we have
$\qual_{\max} = \preccurlyeq_{C}^*= \preccurlyeq_{C}$.
\end{theorem}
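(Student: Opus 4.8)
The plan is to follow the argument of Theorem~\ref{thm:equiv} line for line, replacing every appeal to Lemma~\ref{lem:transl_one} by the corresponding appeal to Lemma~\ref{lem:transl_one_alt}. Since $\catl\subseteq\catl^*$ we get $\preccurlyeq_{C}^*\subseteq\preccurlyeq_{C}$ for free, so it remains to prove the inclusions $\maxqual\subseteq\preccurlyeq_{C}^*$ and $\preccurlyeq_{C}\subseteq\maxqual$, which together close the cycle $\maxqual\subseteq\preccurlyeq_{C}^*\subseteq\preccurlyeq_{C}\subseteq\maxqual$.

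For $\maxqual\subseteq\preccurlyeq_{C}^*$ I would prove, by simultaneous induction on the structure of $\catl^*$ formulas, that (i)~for all $(s,s')\in\maxqual$ every $\catl^*$ state formula true at $s$ is true at $s'$, and (ii)~for all plays with $\pat\sim_\qual\pat'$ every $\catl^*$ path formula true along $\pat$ is true along $\pat'$. The atomic, Boolean, and temporal-operator cases are routine, using that $\pat\sim_\qual\pat'$ implies $\pat[i]\sim_\qual\pat'[i]$ for all $i$. The only interesting cases are the path quantifiers $\llangle 1\rrangle(\varphi)$ and $\llangle 1,2\rrangle(\varphi)$. If $s\models\llangle 1\rrangle(\varphi)$ is witnessed by a Player-1 strategy $\straa$, then assertion~1 of Lemma~\ref{lem:transl_one_alt} supplies a Player-1 strategy $\straa'$ in $\game'$ such that every $\pat'\in\plays(s',\straa')$ is $\sim_\qual$-matched by some $\pat\in\plays(s,\straa)$; the inductive hypothesis applied to the path formula $\varphi$ then yields $\pat'\models\varphi$, so $s'\models\llangle 1\rrangle(\varphi)$. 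The case $\llangle 1,2\rrangle(\varphi)$ is symmetric, using assertion~2 of Lemma~\ref{lem:transl_one_alt}.

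For $\preccurlyeq_{C}\subseteq\maxqual$ I would argue by contraposition. If $(s,s')\notin\maxqual$, then by the game-theoretic characterization of $\maxqual$ the adversary wins the combined-simulation game $\game^\qual$ from $(s,s')$, and since safety/reachability games are memoryless-determined~\cite{gradel2002automata} there is a uniform bound $i\in\nat$ on the number of rounds the adversary needs to force a propositional mismatch. By induction on $i$ one then constructs a $\catl$ state formula with $i$ nested operators of the form $\llangle1\rrangle\Next$ or $\llangle1,2\rrangle\Next$ that holds at $s$ but not at $s'$: for $i=0$ the states are separated by an atomic proposition; in the inductive step a round in which the adversary played the step-wise simulation gadget is encoded by a leading $\llangle1,2\rrangle(\Next\cdots)$, a round in the step-wise alternating-simulation gadget by a leading $\llangle1\rrangle(\Next\cdots)$, and the inner formula is obtained from the inductive hypothesis. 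Combining the three inclusions gives $\maxqual=\preccurlyeq_{C}^*=\preccurlyeq_{C}$.

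The step that needs genuine care is the formula construction in the second inclusion once we are in the alternating setting: there the games carry the distinguished proposition $\turn$, and in the gadgets several moves are forced (a Player-1 state has a unique successor once an action is fixed, a Player-2 state has a unique available action), so the nesting pattern of $\llangle1\rrangle\Next$ and $\llangle1,2\rrangle\Next$ has to follow the Player-1/Player-2 case split rather than the generic four-step/two-step pattern. This, however, is exactly the case split already performed inside the proof of Lemma~\ref{lem:transl_one_alt}, so the induction only has to mirror it and no new argument is required; this is why I expect the whole proof to be a faithful transcription of Theorem~\ref{thm:equiv} rather than a genuinely new development.
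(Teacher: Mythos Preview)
Your proposal is correct and follows essentially the same approach as the paper's appendix proof. The first inclusion is a verbatim transcription using Lemma~\ref{lem:transl_one_alt} in place of Lemma~\ref{lem:transl_one}; for the second inclusion the paper carries out explicitly the Player-1/Player-2 case split you anticipate in your final paragraph, constructing the distinguishing formula as $\llangle 1\rrangle(\Next\cdots)$ when the adversary's gadget move is from a Player-1 pair (where the $\SI$ and $\AL$ gadgets coincide) and as either $\llangle 1\rrangle(\Next\bigvee_i\varphi^i_n)$ or $\llangle 1,2\rrangle(\Next\bigvee_i\varphi^i_n)$ when from a Player-2 pair, with the disjunction ranging over the proponent's possible responses.
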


\begin{proof}

\noindent{\em First implication.}
We first prove the implication $\maxqual \subseteq \preccurlyeq_{C}^*$. 
We will show the following assertions:
\begin{itemize}
\item For all states $s$ and $s'$ such that $(s,s') \in \maxqual$, we have that every $\catl^*$ state formula satisfied in $s$ is also satisfied in $s'$.
\item For all plays $\pat$ and $\pat'$ such that $\pat \sim_{\qual} \pat'$, 
we have that every $\catl^*$ path formula satisfied in $\pat$ is also satisfied in $\pat'$.
\end{itemize}
We will prove the theorem by induction on the structure of the formulas.
% \smallskip\noindent\textbf{Base case:}
% The case for the $\true$ predicate and atomic propositions $q\in \ap$ and their negations $\neg q$ follows directly from the definition of the alternating simulation.
% \smallskip\noindent\textbf{Induction step:}
The interesting cases for the induction step are formulas $\llangle 1 \rrangle (\varphi)$ and $\llangle 1,2 \rrangle (\varphi)$, 
where $\varphi$ are path formulas.

\begin{itemize}
\item Assume $s \models \llangle 1 \rrangle (\varphi)$ and $(s,s') \in \maxqual$. 
It follows that there exists a strategy $\straa \in \straas$ that ensures the path formula $\varphi$
from state $s$ against any strategy $\strab \in \strabs$. 
We want to show that $s' \models \llangle 1 \rrangle (\varphi)$. 
By Lemma~\ref{lem:transl_one_alt}(item~1) we have that there exists a strategy $\straa'$ 
for Player~1 from $s'$ such that for every play $\pat' \in \plays(s',\straa')$ 
there exists a play $\pat \in \plays(s,\straa)$ such that $\pat \sim_{\qual} \pat'$. 
By inductive hypothesis we have that $s' \models \llangle 1 \rrangle (\varphi)$.

\item Assume $s \models \llangle 1,2 \rrangle (\varphi)$ and $(s,s') \in \maxqual$. 
It follows that there exist strategies $\straa \in \straas, \strab \in \strabs$ that ensure the path formula $\varphi$ from state $s$. 
By Lemma~\ref{lem:transl_one_alt}(item~2) we have that there exist strategies $\straa'$ and $\strab'$ such that the two plays $\pat' = \plays(s',\straa',\strab')$ 
and $\pat=\plays(s,\straa,\strab)$ satisfy $\omega \sim_{\qual} \omega'$. By inductive hypothesis we have that $s' \models \llangle 1,2 \rrangle (\varphi)$.

\item Consider a path formula $\varphi$. 
If $\pat  \sim_{\qual} \pat'$, then by inductive hypothesis for every sub-formula $\varphi'$ 
of $\varphi$ we have that if $\pat \models \varphi'$ then $\pat'\models \varphi'$.
It follows that if $\pat \models \varphi$ then $\pat'\models \varphi$.

\end{itemize}

\noindent{\em Second implication.}
It remains to prove the second implication $\preccurlyeq_{C}^* \subseteq \preccurlyeq_{C}\subseteq \maxqual$. We prove that from
the assumption that $(s,s') \not \in \maxqual$ we can construct a $\catl$ formula $\varphi$ such that $s \models \varphi$ and $s' \not \models \varphi$. We 
refer to the formula $\varphi$ as a distinguishing formula.
Assume that given states $s$ and $s'$ we have that $(s,s') \not \in \maxqual$, 
then there exists a winning strategy in the corresponding combined-simulation game for the adversary from state $(s,s')$,
i.e., there exists a strategy $\strab^\qual$ such that against all strategies $\straa^\qual$ we have 
$\plays((s,s'),\straa^\qual,\strab^\qual)$ reaches a state labeled by $p$.
As memoryless strategies are sufficient for both players in $\game^\qual$~\cite{gradel2002automata}, there also exists a bound $i \in \nat$, 
such that the proponent fails to match the choice of the adversary in at most $i$ turns. We construct the $\catl$ formula $\varphi$ inductively:
\begin{itemize}
\item[Base case:] Assume $(s,s') \not \in \maxqual$ and let $0$ be the number of turns the adversary needs to play in order to win.
It follows that $(s,s')$ is a winning state for the adversary, i.e., $\lab^{\qual}((s,s')) = \set{p}$. It follows that $\lab(s) \neq \lab'(s')$.
There are two options: (i)~there exists an atomic proposition $q \in \ap$ that is true in $s$ and not true in $s'$ and distinguishes the two states, or 
(ii)~there exists an atomic proposition $q \in \ap$ that is not true in $s$ and true in $s'$, in that case the formula $\neg q$ distinguishes the two states.
\item[Induction step:] Assume $(s,s') \not \in \maxqual$ and let $n+1$ be the number of turns the adversary needs to play in order to win. As the states of Player~1 and Player~2 are distinguished by the $\turn$ atomic proposition, it follows that either (i)~$s \in \states_1$ and $s' \in \states'_1$ or (ii)~$s \in \states_2$ 
and $s' \in \states'_2$. Otherwise the adversary could win in $0$ turns from $(s,s')$.

We first consider case (i), i.e., $(s,s') \in \states_1 \times \states'_1$. The adversary can choose whether to verify (1)~step-wise alternating-simulation ($\AL$) or
(2)~step-wise simulation ($\SI$).
 After that he chooses an action $a$ to be played according the adversarial strategy $\strab^\qual$ in state $(s,s')$, such that no matter what the proponent plays,
  the adversary will win in $n$ turns. We consider two cases: (1)~the adversary checks for step-wise alternating-simulation relation ($\AL$), or (2)~the adversary
  checks for step-wise simulation relation ($\SI$). For case (1)~we have that there exists an action $a$ for the adversary such that for all actions 
  $a'$ of the proponent the adversary can win in $n$ turns from the unique successor $(t,t')$ of $(s,s')$ given $\AL$ and $a$ was played by the adversary
  and $a'$ by the proponent. From the induction hypothesis there exists a $\catl$ formula $\varphi_n$ such that $t \models \varphi_n$ and $t' \not \models \varphi_n$. 
  We define the formula $\varphi_{n+1}$ that distinguishes states $s$ and $s'$ as $\llangle 1 \rrangle (\Next \varphi_n)$. For case (2), where the adversary plays $\SI$ the proof is exactly the same, as step-wise simulation turn from Player~1 states coincides with step-wise alternating-simulation turn.

Next we first consider case (ii), i.e., $(s,s') \in \states_2 \times \states'_2$. The adversary can choose
 whether to verify (1)~step-wise alternating-simulation ($\AL$) or(2)~step-wise simulation ($\SI$). 
We start with first case (1): there is a unique action $a$ available to the adversary from state $((s,s'),\AL,2)$ and similarly a unique action $a'$ for
the proponent from $((s,s'),a,\AL,1)$. The adversary chooses an action $t'$ from the $((s,s'),a,a',\AL,2)$ according to the winning strategy and the proponent chooses
some action $t_i$ from a set of available successor $(t_1,t_2, \ldots, t_m)$. As the adversary follows a winning strategy $\strab^\qual$ we have that it wins from
all states $(t_i,t')$ for $1 \leq i \leq m$ in at most $n$ turns. From the induction hypothesis there exist $\catl$ formulas $\varphi^i_n$ such that $t_i \models \varphi^i_n$ and $t' \not \models \varphi^i_n$.
We define the formula $\varphi_{n+1}$ that distinguishes states $s$ and $s'$ as $\llangle 1 \rrangle (\Next (\bigvee\limits_{1 \leq i \leq m} \varphi^i_n)$.
For case (2) where the adversary verifies the step-wise simulation step, the proof is analogous. The formula that distinguishes states $s$ and $s'$ is $\llangle 1,2 \rrangle ((\Next \bigvee\limits_{1 \leq i \leq m} \varphi^i_n))$.
\end{itemize}

The desired result follows. \qed

\end{proof}

\end{document}